\theoremstyle{definition}
\newtheorem{theorem}{Theorem}
\newtheorem{lemma}{Lemma}
\newtheorem{definition}{Definition}
\newcommand{\avg}[1]{\left \langle #1 \right\rangle}
\newcommand{\oket}[1]{ | #1 )}
\newcommand{\obra}[1]{ ( #1 |}
\newcommand{\obraket}[2]{\big ( #1 \big | #2 \big)}
\renewcommand{\epsilon}{\varepsilon}
\renewcommand{\O}[1]{\mathcal{O}\left(#1\right)}
\renewcommand{\emph}[1]{\textit{#1}}
\newcounter{para}
\newcommand*\bigcdot{\mathpalette\bigcdot@{.5}}
\newcommand*\bigcdot@[2]{\mathbin{\vcenter{\hbox{\scalebox{#2}{$\m@th#1\bullet$}}}}}
\newcommand{\llangle}[1][]{\savebox{\@brx}{\(\m@th{#1\langle}\)}%
  \mathopen{\copy\@brx\kern-0.5\wd\@brx\usebox{\@brx}}}
\newcommand{\rrangle}[1][]{\savebox{\@brx}{\(\m@th{#1\rangle}\)}%
  \mathclose{\copy\@brx\kern-0.5\wd\@brx\usebox{\@brx}}}
\newcolumntype{L}{>{$}l<{$}} 
\newcolumntype{C}{>{$}c<{$}} 
\newcolumntype{R}{>{$}r<{$}} 
\newcommand*{\addFileDependency}[1]{
  \typeout{(#1)}
  \@addtofilelist{#1}
  \IfFileExists{#1}{}{\typeout{No file #1.}}
}
\renewcommand{\L}{\mathcal L}
\newmdenv[topline=false,rightline=false,bottomline=false,linewidth=2pt,linecolor=white!60!black,]{leftborder}
\newcommand{\Var}{\text{Var}}
\renewcommand{\L}{\left}
\newcommand{\R}{\right}
\newcommand{\sys}{{\text{sys}}}
\newcommand{\ext}{{\text{ext}}}
\newcommand{\anc}{{\text{anc}}}
\newcommand{\ancs}{{}}
\newcommand{\scrambler}{scrambling map}
\newcommand{\can}{\text{can}}
\newcommand{\ryd}{\text{Ryd}}
\newcommand{\FH}{\text{FH}}
\newcommand{\Haar}{\text{Haar}}
\newcolumntype{P}[1]{>{\centering\arraybackslash}p{#1}}
\newcommand{\est}{\text{bound}}
\DeclareFontFamily{OMX}{MnSymbolE}{}
\DeclareSymbolFont{MnLargeSymbols}{OMX}{MnSymbolE}{m}{n}
\DeclareFontShape{OMX}{MnSymbolE}{m}{n}{
    <-6>  MnSymbolE5
   <6-7>  MnSymbolE6
   <7-8>  MnSymbolE7
   <8-9>  MnSymbolE8
   <9-10> MnSymbolE9
  <10-12> MnSymbolE10
  <12->   MnSymbolE12
}{}
\DeclareFontShape{OMX}{MnSymbolE}{b}{n}{
    <-6>  MnSymbolE-Bold5
   <6-7>  MnSymbolE-Bold6
   <7-8>  MnSymbolE-Bold7
   <8-9>  MnSymbolE-Bold8
   <9-10> MnSymbolE-Bold9
  <10-12> MnSymbolE-Bold10
  <12->   MnSymbolE-Bold12
}{}
\let\llangle\@undefined
\let\rrangle\@undefined
\DeclareMathDelimiter{\llangle}{\mathopen}%
                     {MnLargeSymbols}{'164}{MnLargeSymbols}{'164}
\DeclareMathDelimiter{\rrangle}{\mathclose}%
                     {MnLargeSymbols}{'171}{MnLargeSymbols}{'171}
\renewcommand{\ALG@name}{Algorithm}
\makeatother\usepackage{comment}
\begin{document}

\title{Measuring Arbitrary Physical Properties in Analog Quantum Simulation}

\author{Minh C. Tran}
\altaffiliation{These authors contributed equally to this work.}
\affiliation{Center for Theoretical Physics, Massachusetts Institute of Technology, Cambridge, MA 02139, USA}
\affiliation{Department of Physics, Harvard University, Cambridge, MA 02138, USA}

\author{Daniel K. Mark}
\altaffiliation{These authors contributed equally to this work.}
\affiliation{Center for Theoretical Physics, Massachusetts Institute of Technology, Cambridge, MA 02139, USA}

\author{Wen Wei Ho}
\altaffiliation{Corresponding author 1: \href{mailto:wenweiho@nus.edu.sg}{wenweiho@nus.edu.sg}}
\affiliation{Department of Physics, Stanford University, Stanford, CA 94305, USA}
\affiliation{Department of Physics, National University of Singapore, Singapore 117542}

\author{Soonwon Choi}
\altaffiliation{Corresponding author  2: \href{mailto:soonwon@mit.edu}{soonwon@mit.edu}}
\affiliation{Center for Theoretical Physics, Massachusetts Institute of Technology, Cambridge, MA 02139, USA}

\date{\today}

\preprint{MIT-CTP/5503}

\begin{abstract}
A central challenge in analog quantum simulation is to characterize desirable physical properties of quantum states produced in experiments. However, in conventional approaches, the extraction of arbitrary information requires performing measurements in many different bases, which necessitates a high level of control that present-day quantum devices may not have. 
Here, we propose and analyze a scalable protocol that leverages the ergodic nature of generic quantum dynamics, enabling the efficient extraction of many physical properties. 
The protocol does not require sophisticated controls and can be generically implemented in analog quantum simulation platforms today. 
Our protocol involves introducing ancillary degrees of freedom in a predetermined state to a system of interest, quenching the joint system under Hamiltonian dynamics native to the particular experimental platform, and then measuring globally in a single, fixed basis.
We show that arbitrary information of the original quantum state is contained within such measurement data, and can be extracted using a classical data-processing procedure. 
We numerically demonstrate our approach with a number of examples, including the measurements of entanglement entropy, many-body Chern number, and various superconducting orders in systems of neutral atom arrays, bosonic and fermionic particles on optical lattices, respectively, only assuming existing technological capabilities.
Our protocol excitingly promises to overcome limited controllability and, thus, enhance the versatility and utility of near-term quantum technologies.
\end{abstract}
\maketitle

\section{Introduction}\label{sec:intro}
One of the most promising applications of near-term quantum technologies is analog quantum simulation: by coherently manipulating a system of many particles,
a myriad of complex quantum phenomena can be controllably simulated.
Ranging from platforms of atoms, molecules, optical elements to solid-state systems, quantum simulators enable the  study of physics across many domains and scales, bringing fresh insights to unsolved, fundamental problems.
Examples include understanding high temperature superconductivity~\cite{m:hartObservationAntiferromagneticCorrelations2015,m:chiuStringPatternsDoped2019,m:hartkeDoublonHoleCorrelationsFluctuation2020,m:jiCouplingMobileHole2021}, probing new physics in quantum matter coupled to gauge fields~\cite{m:aidelsburgerRealizationHofstadterHamiltonian2013,m:aidelsburgerMeasuringChernNumber2015,m:yangObservationGaugeInvariance2020,m:leonardRealizationFractionalQuantum2022}, realizing exotic topological quantum matter~\cite{m:schollQuantumSimulation2D2021,m:ebadiQuantumPhasesMatter2021}, and studies of non-equilibrium phenomena~\cite{m:neyenhuisObservationPrethermalizationLongrange2017,m:choiProbingQuantumThermalization2019,m:pengFloquetPrethermalizationDipolar2021}.  
Indeed, early experiments have reported several discoveries such as novel dynamical phases of matter~\cite{m:choiObservationDiscreteTimecrystalline2017,m:zhangObservationDiscreteTime2017,m:auttiObservationTimeQuasicrystal2018,m:rovnyObservationDiscreteTimeCrystalSignatures2018,m:smitsObservationSpaceTimeCrystal2018,m:osullivanDissipativeDiscreteTime2019,m:kyprianidisObservationPrethermalDiscrete2021,m:randallObservationManybodylocalizedDiscrete2021} and the breaking of ergodicity in the form of quantum many-body scars~\cite{m:bernienProbingManybodyDynamics2017}. 

\begin{figure*}[t]
\includegraphics[width=0.85\textwidth]{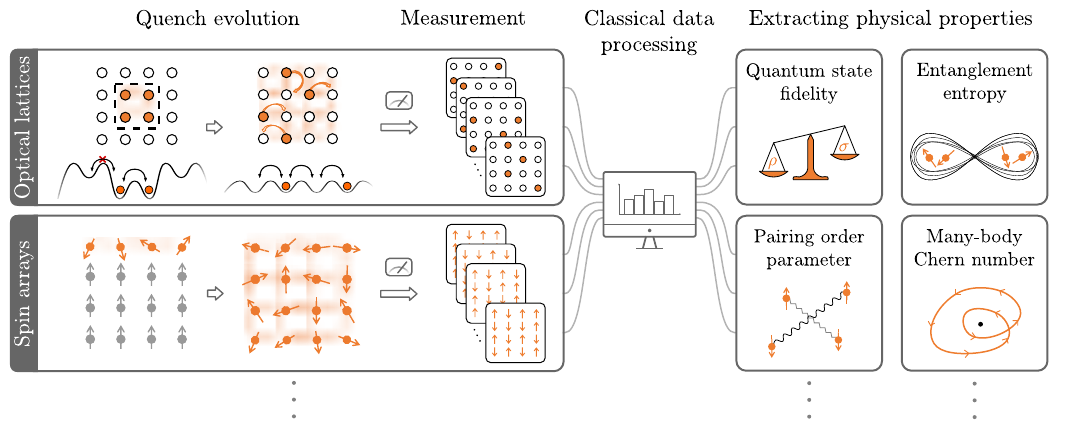}
\caption{
Schematic illustration of our protocol applied to two representative quantum simulators: itinerant particles in optical lattices (top) and systems of spins (bottom) realized by, for example, arrays of Rydberg atoms.
In both cases, we aim to extract properties of an unknown  state $\rho$ characterizing the system (orange circles or spins prior to the quench), initially prepared via independent quantum simulation methods. 
We assume the presence of nearby ancillary sites (empty circles) or spins (grey arrows) which are initially in a known, definite state $\ket{\phi_\ancs}$, 
decoupled with the system of interest, e.g. via large potential barriers (black dashed box).
The first step of our protocol is to allow the systems to interact with the ancillae: the time evolution of the global system generated by the natural Hamiltonian of the analog simulators. 
This step can be done in many ways; examples include lowering the potential barrier (itinerant particles in optical lattices) or by shuttling the ancilla spins closer to the system using optical tweezers (Rydberg atoms).
This quench evolution scrambles information from the system to the ancillary degrees of freedom and typically entangles them in the process.
The second step is to take snapshots of the extended system in some fixed basis, e.g.,  the particle occupation number or spin-polarization bases.
Arbitrary desired observables of the initial state $\rho$ can then be extracted through a final classical data processing step.
}
\label{fig:scheme}
\end{figure*}

Despite their exciting capabilities, analog quantum simulators face limitations. 
A particularly pressing challenge is the extraction of physical information in such platforms: even if a desired complex quantum system can be simulated and an important target quantum state is realized, it is often not obvious how to measure physical properties such as long-range correlators, entanglement entropies, or topological signatures of the system. 
This challenge stems from the fact that measurements are typically performed only in one or a few particular bases, such as the occupation number basis for quantum gas microscopes 
or the atomic level basis for neutral atoms in a Rydberg atom simulator.
Adopting quantum information science parlance, we call these natural measurement bases the ``standard basis." 
In contrast, measurements in more complicated, possibly non-local bases are difficult to achieve, owing to the need to employ basis rotations which lie beyond the reach of dynamical controls in present-day analog quantum simulators. 
Consequently, observables diagonal in the standard basis (e.g.~densities and their correlation functions) are readily accessible but off-diagonal observables (e.g.~current densities, off-diagonal correlation functions, or Wilson loops) are not. 
While there do exist schemes to measure some off-diagonal observables~\cite{m:ohligerEfficientFeasibleState2013,m:islamMeasuringEntanglementEntropy2015,m:pichlerMeasurementProtocolEntanglement2016,m:brydgesProbingRenyiEntanglement2019}, 
these often involve fine-tuned, ad-hoc schemes for specific target observables and are not easily generalizable.

In this work, we propose to overcome this challenge by introducing a universal---hardware and observable-independent---method to extract {\it arbitrary} physical properties in analog quantum simulators. We only assume the experimental capability to (i) ``expand'' the system of interest into a larger state space, (ii) coherently time-evolve the entire system under 
many-body dynamics native to the quantum simulator, and (iii) perform measurements globally in a single, fixed basis (see \cref{fig:scheme}). 
We will show that as long as such dynamics is ergodic and scrambling in nature---as expected for evolution by generic interacting systems, it is 
possible to recover {\it any} information about the prepared state upon appropriate classical processing of the resulting measurement data.
Notably, the classical computation required for this data processing step can be performed independently of the experimental data acquisition.
Furthermore, the experimental steps of the protocol are independent of the target observables.
Therefore, our protocol enables the adoption of a \textit{``measure first, ask questions later"} philosophy espoused in the related approach of \textit{randomized measurements}~\cite{m:ohligerEfficientFeasibleState2013,m:huangPredictingManyProperties2020,m:elbenRandomizedMeasurementToolbox2022}: 
one can imagine first collecting measurement data of a given experimental system with our protocol, only later deciding which quantities to extract via classical post-processing. This feature desirably alleviates the need to redesign an experiment to target any given specific observable. 

In our approach, the ergodicity of quantum dynamics, aided by classical computation, is harnessed as a resource for useful quantum information science applications.
Recent works employing such a principle include Ref.~\cite{m:boixoCharacterizingQuantumSupremacy2018,m:choiEmergentQuantumRandomness2022,m:markBenchmarkingQuantumSimulators2022}, 
wherein certain universal statistical properties arising from ergodic quantum dynamics are used for estimating the fidelity between a target pure quantum state and an experimentally prepared mixed state.
Here, we consider additionally introducing ancillary degrees of freedom in a controlled fashion, enabling the extraction of arbitrary physical properties while balancing required experimental and computational resources. Hence, our protocol is versatile and scalable, and thus promises to greatly expand the utility of current and near-term quantum simulators in characterizing quantum states which realize complex and interesting physical phenomena.

The paper is organized as follows. 
In \cref{sec:overview}, we start by first explaining the underlying working principles involved in our protocol. 
We state our protocol in \cref{sec:framework} and provide in \cref{sec:analysis} its performance analysis from various aspects, including the sample complexity, the classical computational overhead, and the robustness in the presence of noise.
Readers who are more interested in practical implementations of our protocol may skip \cref{sec:analysis} and refer to proof-of-principle numerical examples in \cref{sec:rydberg,sec:bcs,sec:bose-hubbard}, where we apply the protocol to extract interesting properties from a Rydberg array and itinerant particles on optical lattices.
In the first example (\cref{sec:rydberg}), we consider a Rydberg atom array experiment and show how various observables, including the quantum state fidelity, the entanglement entropy, and arbitrary local observables, can be extracted, with a modest number of measurement snapshots.
We also use this example to illustrate how different quench arrangements can be used to minimize the sample complexity for different target observables as well as classical computational requirements.
In the second example (\cref{sec:bcs}), we consider a quantum gas experiment with itinerant fermions in an optical lattice with single-site readout resolution. 
We extract the pairing order correlations from superconducting states of fermions, and show that our protocol can reliably distinguish between $s$-wave and $d$-wave superconducting orders, which are phenomena long-sought after in such systems.
In the last example (\cref{sec:bose-hubbard}), we consider an experiment of itinerant bosons in an optical lattice, and use the protocol to extract the many-body Chern number and measure local currents in a 
topological state, realized by engineering an artificial gauge field.
This model contains non-trivial phases of matter, illustrating the bosonic fractional quantum Hall effect, and has been investigated both theoretically and experimentally~\cite{m:cooperFractionalQuantumHall2020,m:aidelsburgerMeasuringChernNumber2015,m:taiMicroscopyInteractingHarper2017,m:leonardRealizationFractionalQuantum2022}.
This last example demonstrates the power of our protocol for extracting observables that are otherwise extremely difficult to measure, overcoming the limited controllability of current experiments.
Finally, we conclude and discuss several open questions in \cref{sec:conclusion}.

\section{Overview of main ideas and key results} \label{sec:overview}

Before presenting and analyzing the technical details of our protocol in \cref{sec:framework,sec:analysis}, we first explain at a high level the key physical ideas and describe several important metrics for accessing its performance. 

\subsection{Ancillary system as a resource to perform randomized measurements}\label{sec:protocol-high-level}

Our aim is to characterize an unknown state $\rho$ of a system of interest, assuming  the ability to only perform measurements in a single, fixed direction. Na\"ively, this  precludes the extraction of observables which are off-diagonal in the measurement basis. 

However, suppose that instead of having access only to $\rho$, we also have access to an ancillary system prepared in some fiducial state $|\phi\rangle$, and the ability to couple them through a single, fixed, generic, but known unitary $U$. The extended system is therefore described by the density matrix
\begin{align}
\rho_{\ext} = U(\rho \otimes |\phi\rangle \langle \phi| )U^\dagger.
\end{align}
We claim that upon measuring the extended system in the same fixed basis as before, it is now generically possible to recover {\it any} information about $\rho$, including observables off-diagonal in the original measurement basis, solely from the probability distribution $P_z = \langle z| \rho_{\ext} |z\rangle$ of the measurement outcomes $z$.
In other words, by letting a system of interest ``expand'' into a larger space, one can infer initially ``inaccessible'' information about it. 
This mechanism is reminiscent of the celebrated time-of-flight (TOF) measurements performed in Bose-Einstein condensate (BEC) experiments~\cite{m:hallDynamicsComponentSeparation1998}, where upon releasing a  BEC from its confining trap such that it undergoes free expansion, its initial unknown momentum  distribution can be inferred by measuring density distributions of the cloud at later times.
Our approach can be considered a generalization of TOF measurements for strongly interacting quantum dynamics.

\begin{figure}[t]
\includegraphics[width=0.48\textwidth]{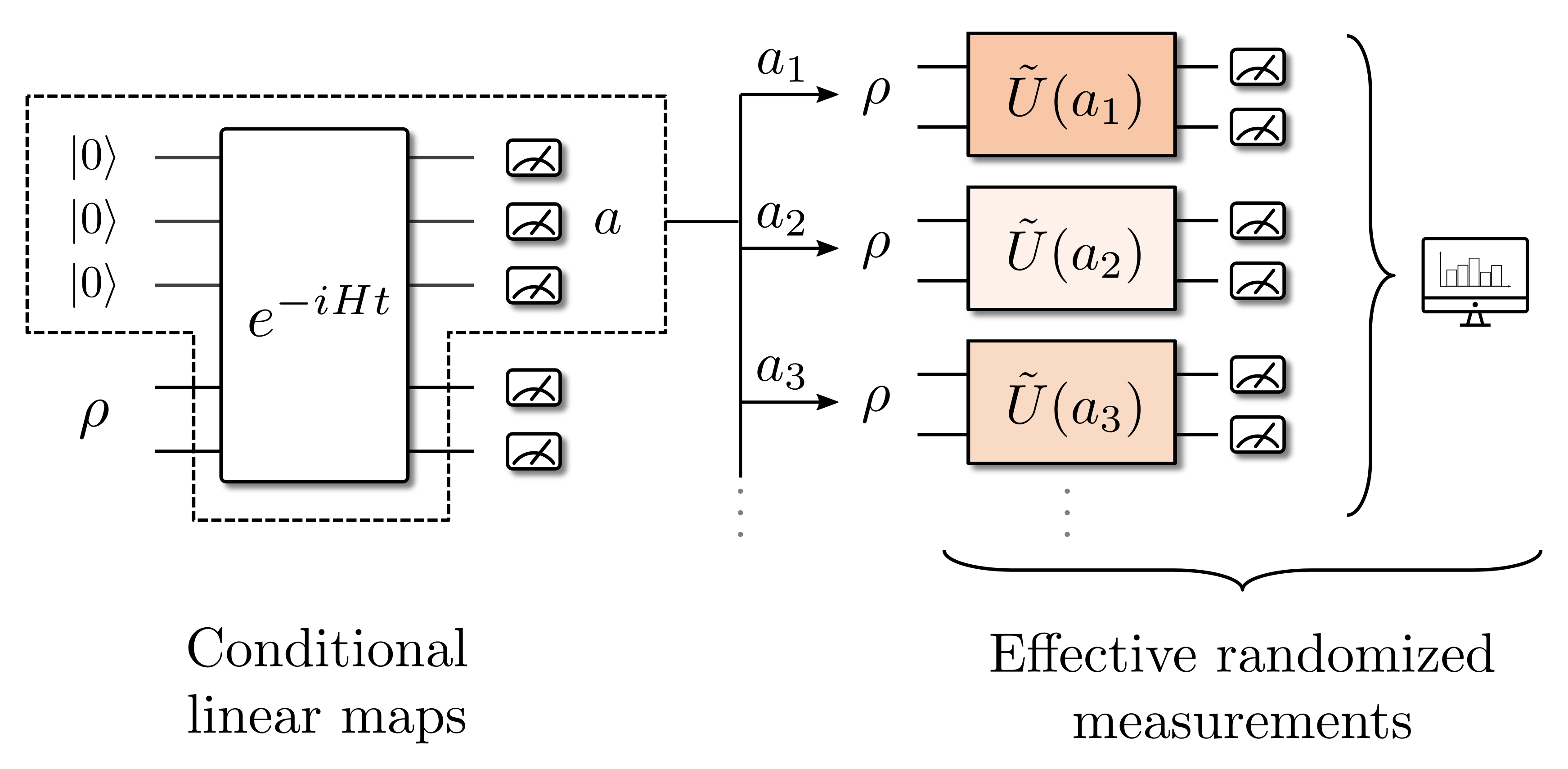}
\caption{
Illustration of the basic working principle of our protocol. 
Quench evolving the extended system by natural time-evolution $e^{-iHt}$, followed by measuring the ancillae is effectively equivalent to applying a random linear map $\tilde U(a)$ on the system described by the state $\rho$.
The linear map $\tilde U(a)$ is determined by the random outcome ${a}$ that the ancillae collapse to, which occurs with probability $P_a$.
This protocol therefore implements randomized rotations on $\rho$ and realizes effective randomized measurements.
}
\label{fig:why-it-work}
\end{figure}

To better understand why the distribution $P_z$ can contain all information about $\rho$, imagine for the sake of simplicity that the extended system consists of $n_\ext$ spin-$\frac{1}{2}$ particles (qubits), and that a measurement outcome yields a bit-string $z \in \{0,1\}^{n_\ext}$, which pertains to a particular classical configuration of spin-ups ($0$) and spin-downs ($1$).
We can imagine dividing the bit-string $z$ into two substrings $z = sa$, where $s$ $(a)$ is a bit-string describing the classical configuration on the system of interest (ancillae), which allows us to rewrite  the probability $P_z$ in a more suggestive way:
\begin{align}
P_{s,a} & = P_{s\vert a} P_a,
\end{align}
where $P_a$ is the probability to measure $a$ from the ancillae and $P_{s\vert a} \equiv P_{s,a}/P_a$ is the conditional probability to measure $s$ from the system given an outcome $a$ from the ancillae. 
The latter can be expressed as $P_{s\vert a} = \langle s |\rho_a |s\rangle $, where the  state $\rho_a = \tilde U(a) \rho \tilde U^\dag(a)/P_a$ is defined through the conditional linear map $\tilde{U}(a) = (\mathbb{I} \otimes \langle a|){U} (\mathbb{I} \otimes |\phi\rangle)$ acting on the system (\cref{fig:why-it-work}) and the normalization factor $P_a$\,$=$\,$\Tr(\tilde{U}(a) \rho \tilde{U}(a)^\dagger)$.
The new expression gives a useful interpretation of $P_z$ as follows: we first measure the ancillae to obtain a random outcome $a$ with probability $P_a$, which transforms the remaining system $\rho \mapsto \rho_a$   according to the Born rule, 
and then measuring $\rho_a$ to yield  outcome $s$ with conditional probability $P_{s|a}$.
Equivalently, we can think of it as arising from effectively measuring the original density matrix $\rho$ in a ``rotated'' basis  $\sim \tilde{U}^\dagger(a)|s\rangle$, where the choice of  ``rotation'' $\tilde{U}^\dagger(a)$ is sampled with probability $P_a$.
Formally, $\{P_a, \tilde{U}^\dagger(a)\}$ forms an {\it ensemble} of random (non-trace preserving) quantum operations. The size of this ensemble is $d_\text{anc}$, the dimension of the ancillary space.
Thus, we see how the ancillary system can serve as a randomizer of measurement bases, and hence allow for matrix elements of $\rho$ which are off-diagonal in the original measurement basis to be probed. 
Note that this is a generalization of the concept of the projected ensemble recently considered in Refs.~\cite{mm:ippolitiSolvableModelDeep2022,m:cotlerEmergentQuantumState2021,m:hoExactEmergentQuantum2022,m:hoExactEmergentQuantum2022,m:wilmingHightemperatureThermalizationImplies2022,m:claeysEmergentQuantumState2022,mm:ippolitiDynamicalPurificationEmergence2022},
which is a distribution of quantum states generated from partial measurements of a single parent quantum state; here, we have a distribution of processes generated from partial measurements of a single unitary operator describing quantum dynamics.

When the density matrix $\rho$ can be fully reconstructed from the probability distribution $P_z$, we say that the protocol is {\it tomographically complete}. 
Obviously, we cannot expect tomographic completeness for every choice of coupling unitary $U$ or without any restrictions on the ancillae.
Indeed, the above discussion already highlights two important features that the coupling $U$ and ancillary system should have.
First, in order to achieve nontrivial basis changes, $U$ needs to be ergodic and, in a certain sense, be a sufficient ``scrambler'' of quantum information. For example, the trivial identity map~$U = \mathbb{I}$ will clearly not work because measurements outcomes on the ancillae do not depend on the state of the system, the two systems being always decoupled. 
We argue in this work that, with the coupling unitary $U$ generated by natural Hamiltonian dynamics $U = e^{-iHt}$ with reasonable times $t$, our protocol generically implies tomographic completeness (\cref{sec:recoverability,sec:quench_time}).
We also explain how the required evolution time $t$ is affected by the locality of the Hamiltonian and how to modify the protocol to account for exceptional cases such as the presence of symmetries that restrict the ergodicity of quantum dynamics (\cref{sec:analysis}).
Second, the dimension of the ancillae must be sufficiently large.
To fully characterize a density matrix of a system with dimension $d_\text{sys}$, it is a well-known fact in quantum state tomography that one has to perform at least $d_\text{sys}^2$ generalized measurements~\cite{m:buschInformationallyCompleteSets1991,m:debrotaInformationallyCompleteMeasurements2020}.
This requirement sets a lower bound on the number of effective rotations $\tilde{U}(a)$ and, consequently, a lower bound on the dimension of the ancillary space $d_\text{anc}$: $d_\text{anc} \geq d_\text{sys}$~\footnote{A set of generalized measurements is specified by a set of a positive, semi-definite operators $\{ E_i \}_{i=1}^N$ which sum to the identity: $\sum_{i=1}^N E_i = 
\mathbb{I}$, such that outcome $i$ occurs with probability $p_i = \Tr(E_i \rho)$. This set is also known as a positive operator-valued measure (POVM).
It is a fact in quantum state tomography that a POVM requires at least $N = d_\text{sys}^2$ elements for $\rho$ to be reconstructible from the statistics $p_i$. 
When $\rho$ is reconstructible, the POVM is called {\it informationally complete} ({\it minimally informationally-complete} if the number of elements $N$ is exactly $d_\text{sys}^2$). Our protocol can be equivalently cast in this language upon identifying $E_{s,a} = \tilde U(a)^\dag \ketbra{s} \tilde{U}(a)$, immediately yielding the claimed requirement $d_\text{anc} \geq d_\text{sys}$.}.

The basic working principle behind our protocol (\cref{fig:why-it-work}) 
 also immediately highlights a
connection to a recently-introduced quantum state-learning protocol called \emph{classical shadow tomography}~\cite{m:huangPredictingManyProperties2020}.
Indeed, the main idea behind both protocols is that of performing measurements in randomized bases, but the key difference between them is the source of this randomness. 
Classical shadow tomography  assumes the application of random unitary rotations $U$ (drawn from  ensembles with known statistics) to the initial state $\rho \mapsto U \rho U^\dagger$, using explicit dynamical control.
In our protocol, these effective random ``rotations'' $\rho \mapsto \rho_a  \propto \tilde{U}(a) \rho \tilde{U}(a)^\dagger$ 
are instead  induced by measurements on an ancillary system. 
For this reason,  our protocol may be termed {\it ancilla-assisted shadow tomography}.
This difference is also the reason for the comparative advantage of our protocol over classical shadow tomography in terms of the ease of experimental implementation: the level of dynamical control required in the former is arguably much less than in the latter. 
 We refer to \cref{app:classical_shadow_tomography} for an elaboration of the connection of our protocol to classical shadow tomography.

\subsection{Scrambling and recovery maps}
\label{sec:data_processing}

We now explain schematically the classical data processing steps involved in recovering information about the system of interest $\rho$.
For a given coupling $U$ and measurement basis, one can construct 
a map $S$ that takes the initial state $\rho$ to the probability distribution of the measurement outcomes $P_z$ of the extended system (given by the Born rule):
\begin{align}
  \oket{\rho} = 
  \begin{pmatrix}
    \rho_{1,1}\\
    \rho_{1,2}\\
    \vdots\\
    \rho_{d_\sys,d_\sys}
  \end{pmatrix}
  \xrightarrow{\ \ S\ \ } 
  \ket{P} = 
  \begin{pmatrix}
    P_1\\
    P_2\\
    \vdots\\
    P_{d_\ext}
  \end{pmatrix},\label{eq:S-high-level}
\end{align}
where we have rewritten both the density matrix $\rho$ and the probability $P_z$ as column vectors denoted by $\oket{\rho}$ and $\ket{P}$. 
Here, $d_\ext = d_\anc d_\sys$ is the dimension of the extended system.
We illustrate the construction of $S$ diagrammatically in \cref{fig:Pz-diagram} and define it precisely in \cref{sec:framework}, but the salient point is that it can be obtained solely from knowledge of the coupling unitary $U$ and initial state of the ancillae. 
As the map $S$ is precisely the agent responsible for scrambling information from the system into the larger space, we  refer to it as the  {\it scrambling map}. 
Note that since $S$ is linear, it can be represented by a matrix, which has dimension $d_\ext \times d_\sys^2$.

\begin{figure}[t]
\includegraphics[width=0.3\textwidth]{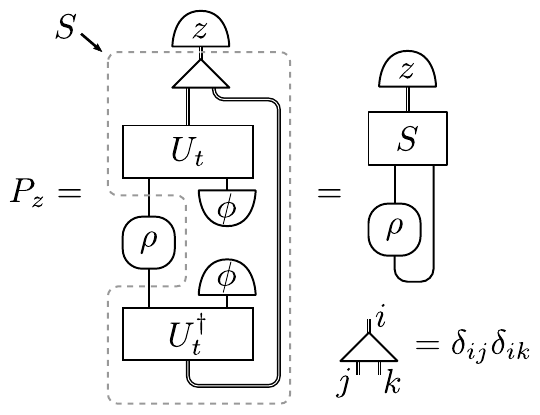}
\caption{Tensor diagram illustrating the construction of the \scrambler~$S$ in \cref{eq:S-high-level}. The thick double lines indicate bonds of dimension $d_\ext$. The three-legged triangle tensor takes unit values if the three indices are equal and vanishes otherwise.}
\label{fig:Pz-diagram}
\end{figure}

Tomographic completeness of our protocol is equivalent to the fact that the map $S$ is left-invertible.
Indeed, if there exists a linear map $R$ such that $RS = \mathbb I$, the map $R$ will take the outcome probability vector $\ket{P}$ back to the initial state $\rho$:
\begin{align}
R\ket{P} = \oket{\rho}. 
\label{eq:recovery_map}
\end{align}
Naturally, we call $R$ the {\it recovery map}.
Since $S$ is assumed known,   $R$ can also be computed as we show below.
Note that generally, if $d_\ext>d_\sys^2$, $S$ is not a square matrix and the recovery map $R$ is not unique. 
We will explain in \cref{sec:frame} the relative advantages and disadvantages of different constructions of $R$ and how they affect the experimental and computational resources required for our protocol.
We also remark that in certain scenarios (such as in the limit of large ancillae prepared at infinite effective temperature), $S$ and $R$ may assume universal forms with known analytic expressions---arising from approximate designs---as uncovered by the recent related works on projected ensembles in Refs.~\cite{mm:ippolitiSolvableModelDeep2022,m:cotlerEmergentQuantumState2021,m:hoExactEmergentQuantum2022,m:hoExactEmergentQuantum2022,m:wilmingHightemperatureThermalizationImplies2022,m:claeysEmergentQuantumState2022,mm:ippolitiDynamicalPurificationEmergence2022}. However, our protocol does not require the emergence of such universal behavior.

In practice, measurements in experiments yield bit-strings $\{z_1,\dots,z_m\}$ sampled from the distribution $P_z$.
Each $z_j$ is associated with an indicator vector $\ket{z_j}$ whose entries are all zero except for one element corresponding to the configuration $z_j$. 
By (numerically) applying $R$ onto each observed bit-string and averaging,  one can obtain an estimate of the initial state:
\begin{align}
  \frac{1}{m}\sum_{j=1}^m R\ket{z_j} \xrightarrow{\ m\rightarrow \infty\ } R \ket{P} = \oket{\rho}.
 \label{eqn:recovery_rho}
\end{align}
In other words, it is in principle possible to tomographically reconstruct the entire density matrix in the limit of large number of samples $m$.

However, while tomographic completeness is an important theory concept in this work, we emphasize that our primary motivation is often not to fully reconstruct $\rho$. 
Instead, our focus is to efficiently extract certain (we stress: not all) desired physical properties of $\rho$, such as the expectation values of a small subset of observables, many-body fidelities, or entanglement entropies etc. This task can be distinguished from that of quantum state tomography by the term {\it quantum state learning}, and has important practical differences. Indeed, it is well-known that the determination of an entire quantum state to within fixed precision requires a number of measurements that is exponential in system size, rendering recovery of the density matrix practically infeasible for a system with a large number of particles. In contrast, the latter task can place  significantly fewer demands on the experimental resources required (see Sec.~\ref{subsec:sample_complexity} and \ref{sec:quench-setups}).

Without fully reconstructing $\rho$, an estimate of the expectation value $\langle O \rangle$ of an  observable $O$ can instead be directly obtained from the measurement data $\{z_j\}$.
We present a way to construct a \emph{single-shot estimator} $o_{z}$ as a function of $z$ such that averaging $o_{z}$ over experimentally measured $\{z_1, \dots,  z_m \}$ amounts to estimating the desired quantity:
\begin{align}
\langle  O \rangle &\approx \frac{1}{m} \sum_{j=1}^m o_{z_j}, \qquad o_z = \obra{O^\dagger} R|z\rangle,\label{eq:estimate_O}
\end{align}
where $\oket{O}$ is the vectorized version of the operator $O$ (similarly to how we rewrote the density matrix $\rho$ as a vector $\oket{\rho}$ earlier) and $\obra{O} = \oket{O}^\dag$.
With a good estimator, the sample averaging of $o_z$ may converge much faster than that of $\rho$ in \cref{eqn:recovery_rho}, implying $\langle O \rangle$ can be learned much more efficiently with fewer samples without explicit quantum state tomography. 
Finally, we note that \cref{eq:estimate_O} can be generalized to extract nonlinear observables on $\rho$, such as the R\'enyi-2 entropy, 
which requires two copies of $\rho$.

\subsection{Experimental implementation}
\label{sec:expansion}

The crux of our proposed protocol lies in the ability to ``expand'' the state space.  Here, we elaborate on how this can be concretely realized in the context of present-day experimental quantum simulator platforms. 

Importantly, the expansion of the state space can be achieved in many different ways and is dependent on the experimental system at hand.
For example, for cold atoms on an optical lattice, one possibility is for the system of interest to be a block of sites residing in the bulk of the lattice,  and the surrounding sites  to be the ancillary space (\cref{fig:scheme}, top row) initially prepared in a known state. For example, they can be empty (the vacuum) or they can have one atom each (the Mott insulating state with unity filling).  By imposing a sufficiently high potential barrier, one can keep the system and ancillae well-separated throughout the course of a (separate) experiment, at the end of which the system is described by the state $\rho$. For instance, $\rho$ could be the result of preparing the ground state of a simulated model in some parameter regime, or it could be the state achieved after quench dynamics in experiments probing transport.
Our protocol enters when we want to characterize $\rho$. In this set-up, a natural way of ``expanding'' the state space  would be to lower the barrier to allow mixing between the two subsystems, i.e. quench the global system for some short time, before measuring.

As another example, in arrays of trapped Rydberg atoms, we can imagine expanding the state space by using optical tweezers to physically shuttle ancillary atoms from an initially isolated, non-interacting reservoir to be near the atoms of interest and allow them to interact, an ability that has been demonstrated in Ref.~\cite{m:bluvsteinQuantumProcessorBased2022}.
We stress again, though, that this is but only one possibility of ``expansion'' in this platform. 
In fact, introducing ancillary degrees of freedom does not even necessitate introducing physically distinct particles as in the previous two examples: the Hilbert space could also be expanded via allowing mixing to different internal or motional levels---beyond those normally utilized for a qubit encoding---of an atom or a molecule. 
We note that such capabilities are an exciting direction of current experimental development~\cite{m:allcockOmgBlueprintTrapped2021,m:chenAnalyzingRydbergbasedOpticalmetastableground2022,m:wuErasureConversionFaulttolerant2022,m:strickerExperimentalSinglesettingQuantum2022}.

\begin{figure}[t]
\includegraphics[width=0.48\textwidth]{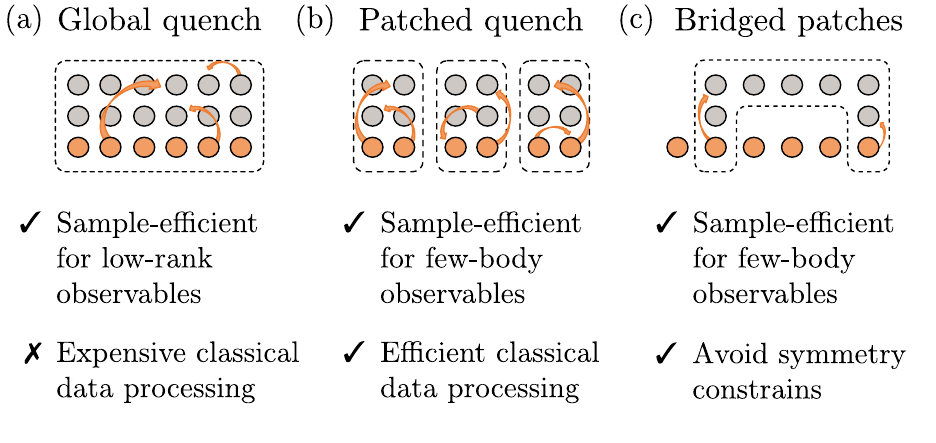}
\caption{
Different quench setups featuring different interactivity between the system and ancillae, which change the sample complexity of observables. 
In the figure, orange circles denote sites of system of interest while gray circles denote ancillary sites, and only particles within the same dashed boxes are assumed to interact during the quench.
The global quench yields a low sample complexity for global, low-rank observables, whereas the patched quench is sample-efficient for few-body (i.e., local) observables.
The bridged setup is a modification to the patched quench to circumvent symmetry constraints that may be present in the natural dynamics of the extended system, e.g., each patch conserving their individual particle number.
}
\label{fig:schemes}
\end{figure}

Besides the different choices of what the physical constituents of the extended space are, there is also a great deal of flexibility for the connectivity between the ancillae and the system. 
For example, we may allow the system to interact with a common set of ancillae (``global quench"), or divide the system into smaller disjoint patches, such that each patch couples to their own set of ancillae (``patched quench"), see \cref{fig:schemes}(a,b).
While tomographic completeness is largely independent of what the connectivity between the ancillae and system is
[an exception will be dynamics with symmetries, in which case we need a careful arrangement of the ancillae (``bridged quench"), \cref{fig:schemes}(c)]
, we will show that
different connectivity arrangements in relation to target observables have important, practical differences in terms of their performance captured by the protocol's sample complexity and computational complexity.
We briefly explain these factors below and analyze them more carefully in \cref{sec:quench-setups}.

\subsection{Sample complexity and Computational complexity}
\label{subsec:sample_complexity}
The performance and experimental feasibility of our protocol is assessed by two key metrics: sample complexity and computational complexity.
The former metric---the number of measurement snapshots in experiments required to produce a good approximation of the target quantity, called the {\it sample complexity}---depends on the choice of observable, and on the interactivity of the ancillae with the system.  
In particular, we expect sample complexity to be independent (or at worst, mildly-dependent) on system size for low-rank observables, e.g. the fidelity of $\rho$ to a pure reference state, upon using a ``global quench'' [\cref{fig:schemes}(a)]; or for few-body observables, upon using a ``patched quench'' [\cref{fig:schemes}(b)]. 
We elaborate on this point in \cref{sec:quench-setups} by drawing from insights provided by classical shadow tomography~\cite{m:huangPredictingManyProperties2020,m:huClassicalShadowTomography2021}.
In ideal cases, the sample complexities of our approach with global and patched quench is expected to be comparable to those of classical shadow tomography enabled by global and local random unitary circuits.

The second metric---the resources required of a classical computer for post-processing, called the {\it computational complexity}---is dominated by the cost of computing the scrambling map $S$ and the recovery map $R$.
Except for special cases where they have closed-form, analytic expressions, the cost of computing $S$ and $R$ generally scales exponentially with the size of the extended system.
This impediment can be overcome by imposing a local structure onto the scrambling map---for example, using a patched quench [\cref{fig:schemes}(b)]. 
Then, the quench unitary naturally factorizes into tensor products of local quench unitaries, each involving only degrees of freedom in an individual patch.
By limiting the size of the largest patch, we can efficiently control the computational overhead of our protocol.
Consequently, our protocol is both sample efficient and computationally tractable when extracting local observables.

For global observables such as state fidelities or R\'enyi-2 entropies, there is generally a trade-off between sample complexity and computational complexity: using the patch quench with a smaller patch size 
lowers the computational complexity of the protocol but increases its sample complexity.
Ultimately, the optimal patch size for extracting these observables is determined by carefully balancing the experimental and computational resources at hand~\cite{m:huClassicalShadowTomography2021}.

\subsection{Robustness against noise}
Another important and practical aspect we have to contend with is noise, which is ubiquitous in current-day quantum simulators. 
In \cref{sec:noise}, we discuss two strategies for dealing with noisy quench dynamics. To summarize our analysis: on the one hand, if the noise rate is sufficiently small, we argue that we can process the measurement snapshots as if there were no noise in the experiment. This approach results in systematic errors that cannot be reduced by taking more samples. We estimate the magnitude of such systematic errors and show that they depend linearly on the noise rate.
On the other hand, if the noise rate is sufficiently high, and if one can describe the noisy dynamics with high accuracy, we can invert the noisy linear map describing the quench evolution. While there is no systematic error in this case, we will argue that the presence of noise typically increases the sample complexity compared to the noiseless scenario.

\begin{table}[t]
\label{tab:protocol}
    \caption{Protocol for extracting the expectation value of an arbitrary observable $O$. The ``experiment" and ``pre-computation" steps can be done independently; the ``data processing" step uses both inputs to estimate $\Tr(\rho O)$.} 
    \label{tab:protocol}
    \begin{tabular}{p{0.48\textwidth}}
        \toprule
        \textbf{Input: } A quantum state $\rho$ and an observable $O$.\\
        \textbf{Output: } The observable expectation value $\Tr(\rho O)$.\\        
        \midrule
        \textbf{Experiment:}
        \begin{enumerate}
            \item Prepare an ancillary system with dimension at least as large as the system of interest, in a known state $\ketbra{\phi}$.
            \item Time-evolve the extended system $\rho \otimes \ketbra{\phi}$ under a joint quench evolution $U_t = \exp(-iHt)$.
            \item Measure the extended system in the standard basis, obtaining an outcome $z$.
            \item Repeat steps 1.~to 3.~$m$ times, to obtain $m$ samples $\{z_1,\dots,z_m\}$.
        \end{enumerate}
        \textbf{Pre-computation:}
        \begin{enumerate}
        \setcounter{enumi}{4}
            \item Compute the scrambling map $S$ using $\ketbra{\phi}$ and $U_t$ [\cref{eq:linear_map_Q}, \cref{fig:Pz-diagram}].
            \item Compute the (non-unique) recovery map $R$: \cref{eq:naive-QL} for a simpler version and \cref{eq:optimal-QL} for a sample-optimal version.
            \item Compute the estimator $o_{z}$, which depends on the bit-string $z$ [\cref{eq:ozdef}], recovery map $R$ and the choice of observable $O$.
        \end{enumerate}
        \textbf{Data processing:}
        \begin{enumerate}
        \setcounter{enumi}{7}
            \item Compute the sample average of the estimator $\frac{1}{m} \sum_j o_{z_j}$,
            which returns an unbiased estimate of the expected value of the observable:
            $\Tr(\rho O) \approx \frac{1}{m} \sum_j o_{z_j}$.
        \end{enumerate}\\
        \bottomrule
    \end{tabular}
    
\end{table}

\section{Protocol and Mathematical Framework}\label{sec:framework}

Having explained the key physical principles at play, we now present our protocol explicitly (\cref{tab:protocol}) and in the remainder of the section we set up the  mathematical framework to describe it, in anticipation of a detailed analysis to be performed in \cref{sec:analysis}, which is a technical, fleshed-out version of \cref{sec:overview}.
We note that readers who are more interested in first seeing our protocol in practice can skip \cref{sec:analysis} and proceed to the examples in \cref{sec:rydberg,sec:bcs,sec:bose-hubbard} before returning.

 \cref{tab:protocol} presents our concrete protocol. We see that there are three key steps: (i) ``expansion'' of the state space via quench evolution of a global system, (ii) measurement, and (iii) recovery of observables via such information and the classical computation of the scrambling map $S$ and its inverse $R$.

Mathematically, the expansion step is modeled as such:
 the system, described by a density matrix $\rho$ of dimension $d_\text{sys}$, and an ancillary system, of dimension $d_\text{anc} \geq d_\text{sys}$ and prepared in a known fiducial state $|\phi\rangle$, interact via a coupling unitary. This unitary is realized by quench evolution for some time $t$ under the native Hamiltonian $H$ of the experimental platform: $U_t = \exp(-i Ht)$, which is also assumed well-known. Therefore, prior to measurement, the state of the extended system---which has dimension $d_\text{ext} = d_\text{sys} d_\text{anc}$---is:
\begin{align} 
	\rho_\ext(t) = U_t \left(\rho\otimes \ketbra{\phi_\ancs} \right) U^\dag_t. 
\end{align}
Measurements of the extended system in the standard basis $|z\rangle$ sample from $\rho_{\text{ext}}(t)$. 
The measurement outcomes $z$ are typically configurations such as a bit-string (for spin-1/2s) or a real-space particle configuration (for itinerant particles). Generalizations to qudits or other configurations, e.g. spin-resolved Fock-space basis states, are straightforward.
Each experimental run $j$ gives an outcome $z_j$ sampled from the probability distribution
\begin{align} 
	P_z = \bra{z}U_t \left(\rho\otimes \ketbra{\phi_\ancs} \right) U^\dag_t\ket{z}.\label{eq:Pz_original}
\end{align}
By repeating the experiment $m$ times, one obtains $m$ snapshots $\mathcal{S}_m = \{z_j\}_{j=1}^m$.

Recovery of information is performed via a classically-computed recovery map $R$, derived from the scrambling map $S$, and subsequent processing of the measurement data. 
To formally define $S$, consider first collating the probability distribution into a vector $|P\rangle \in \mathbb{R}^{d_\text{ext}}$ such that $P_z = \langle z|P\rangle$. Then we can rewrite \cref{eq:Pz_original} as
\begin{align} 
    \ket{P} = S \oket{\rho} \label{eq:Pz-Q}~,
\end{align}
where $\oket{\rho} \in \mathbb{C}^{d^2_\sys}$ is the vectorized version of the density matrix~$\rho$~[\cref{eq:S-high-level}].
One sees that the scrambling map $S$ (\cref{fig:Pz-diagram}) has a representation as a rectangular  matrix of size $d_\ext \times d_\sys^2$, with entries
\begin{equation}
S_{z,(k,l)} 
= \bra{z}U_t\left(\ketbra{k}{l} \otimes \ketbra{\phi}{\phi} \right) U_t^\dagger \ket{z},
\label{eq:linear_map_Q}
\end{equation}
where $|k\rangle, |l\rangle$ constitute vectors from the orthonormal basis of the system that $|\rho)$ is written in. 

Because $d_\ext \geq d_\sys^2$, $S$ can possibly have a left-inverse (note this is not guaranteed, though we will argue it is generically so in \cref{sec:recoverability}), denoted by $R$, the recovery map. It satisfies $RS = \mathbb{I}_{d_\text{sys}^2}$, so that in particular,
\begin{align}
R|P\rangle = RS \oket{\rho} = \oket{\rho}.\label{eq:RPztorho}
\end{align}
Because of the non-squareness of the matrix $S$, the left-inverse $R$ is not unique: one choice is the so-called Moore-Penrose pseudo-inverse, given by
\begin{align}
R_\text{MP} =   \big(S^\dagger S \big)^{-1} S^\dagger.
\label{eq:naive-QL}
\end{align}
While this is a natural and often practical choice, surprisingly, this inverse is not optimal in terms of sample complexity. In \cref{sec:frame} we discuss the optimal recovery map which attains the lowest sample complexity. We will see that the optimal recovery map will be particularly useful when there is prior knowledge of the probability distribution $P_z$.

Averaging over all realizations of $m$ samples $\mathcal{S}_m = \{ z_j \}_{j=1}^m$ drawn from $P_z$, the $m$-sample reconstruction 
\begin{align}
    \rho^{(m)} = \frac{1}{m} \sum_{j=1}^m R|z_j\rangle
\end{align} is an unbiased estimator of $\rho$, i.e. averaging over all possible $m$-sample sets $\mathcal{S}_m$, $\mathbb{E}_{\mathcal{S}_m}[\rho^{(m)}] = \rho$. 
However, since $\rho$ has $d_\sys^2$ entries, the random fluctuations $\abs{\rho-\rho^{(m)}}$ will be large: any tomography scheme requires $\O{d_\sys^2/\epsilon^2}$ measurements to reconstruct a state $\rho$ up to precision~$\epsilon$~\cite{m:haahSampleOptimalTomographyQuantum2015}. 

Instead, as mentioned, the expectation value $\langle O \rangle = \Tr(O\rho)$ of an observable $O$ may be directly estimated without the full reconstruction of $\rho$. Here, we assume $O$ may or may not be Hermitian.
We can write
\begin{align}
    \Tr(O \rho) = (O^\dagger|\rho)
    = \sum_z \underbrace{(O^\dagger|R |z\rangle }_{\equiv o_z}\underbrace{\langle z|S\oket{\rho}}_{= P_z} ,\label{eq:ozdef}
\end{align}
where we have inserted the identity superoperator $\mathbb I = R S =  \sum_z R\ketbra{z} S$, and $(A|B)$ denotes the Hilbert-Schmidt inner product $(A|B) \equiv \Tr(A^\dagger B)$.
\cref{eq:ozdef} showcases that $\{o_z\}$ is a single-shot, unbiased estimator for the expectation value $\langle O\rangle$. That is, given $m$ snapshots $z_1,z_2,\dots,z_m$, we can use the mean of $\{o_{z_j}\}_{j=1}^m$ to estimate $\langle O\rangle$:\begin{align} 
	\bar o_{(m)} \equiv \frac{1}{m} \sum_{j = 1}^m o_{z_j} \overset{m \rightarrow \infty}{\longrightarrow} \sum_z P_z o_z = \Tr(O \rho).
\end{align}
Here we introduce the bar notation $\bar{f}_{(m)}$ to indicate the sample averaging $\frac{1}{m}\sum_{j=1}^m f(z_j)$ for a particular $m$ snapshots $\mathcal{S}_m$.
In the absence of noise, $\bar o_{(m)}$, on average over $m$-sample sets, equals $\Tr(O\rho)$.
The relevant figure of merit for our protocol is then the number of samples required to estimate $\Tr(O \rho)$ up to a certain precision. (Additional systematic errors in $\bar o_{(m)}$ may arise in the presence of noise; we study them in \cref{sec:noise}.)
Given different $m$-sample sets, the estimator $\bar{o}_{(m)}$ fluctuates around the average value $\Tr(O\rho)$. The magnitude of such fluctuations is given by the variance of $o_z$:
\begin{align} 
    &\left(\Delta \bar o_{(m)}\right)^2 = \Var [\bar o_{(m)}] = \frac{\Var[o_z]}{m}, \label{eq:sample_complexity}\\ 
	&\text{where }\Var[o_z] = \sum_z P_z \big\vert o_z {\big\vert}^2  - \Big\vert \sum_{z} P_z o_z \Big\vert^2. \label{eq:var_oz}
\end{align}
The quantity $\Var[o_z]$ quantitatively captures our previously-introduced notion of \textit{sample complexity} associated   with our protocol in estimating $O$.
Note that $\Var[o_z]$ implicitly depends on the choice of recovery map $R$, hence one aims to minimize $\Var[o_z]$ by carefully designing $R$.
Chebyshev's inequality allows us to bound how much the estimator $\bar o_{(m)}$ deviates from its average value $\Tr(O \rho)$. 
For example, for any $\epsilon>0$, the probability $\text{Pr}\left[\abs{\bar o_{(m)}-\Tr(O \rho)}> \epsilon \right]$ is less than 10\% as long as $m\geq 10 \Var[o_z]/\epsilon^2$.

Finally, we can generalize \cref{eq:ozdef} to extract nonlinear observables that are supported on $k\in \mathbb N$ copies of $\rho$:
\begin{align} 
    \Tr(O \rho^{\otimes k}) 
    = \sum_{z_1,\dots,z_k} \underbrace{(O^\dagger|R^{\otimes k} |z_1,\dots,z_k\rangle }_{\equiv o_{z_1,\dots,z_k}}P_{z_1}\dots P_{z_k},\label{eq:ozdef_k}
\end{align}
where $z_1,\dots,z_k$ are independent samples from the same distribution $P_z$ defined in \cref{eq:ozdef}.
As an example, the SWAP operator is a non-linear operator on two copies of $\rho$, and is related to the R\'enyi-2 entropy of $\rho$. 
Given a finite set of $m$ samples $\{z_1,\dots,z_m\}$, the so-called $U$-statistics offers a sample-efficient estimate of $\Tr(O\rho^{\otimes k})$~\cite{m:hoeffdingClassStatisticsAsymptotically1948,m:huangPredictingManyProperties2020}:
\begin{align} 
  \Tr(O\rho^{\otimes k}) \approx \binom{m}{k}^{-1} \sum_{1\leq i_1<\dots<i_k\leq m} o_{z_{i_1},\dots,z_{i_k}}. \label{eq:U-statistics-estimator}
\end{align}

\section{Analysis of protocol}\label{sec:analysis}
We now analyze the performance of our protocol in depth. In \cref{sec:recoverability} we discuss the conditions under which arbitrary observables of the target state can and cannot be estimated.
We discuss in \cref{sec:quench_time} the related matter of the required quench evolution time for the protocol to be tomographically complete.
In \cref{sec:quench-setups}, we explain how different quench setups affect the sample complexity and the computational complexity of our protocol.
In \cref{sec:frame}, we derive the optimal classical post-processing protocol that minimizes statistical fluctuations.
Finally, we discuss the performance of our protocol in the presence of noise in \cref{sec:noise}.

\subsection{Recoverability: Symmetry constraints}\label{sec:recoverability}

Tomographic completeness---the ability to recover arbitrary physical information---of our protocol requires the \scrambler~$S$ to be invertible. 
A pertinent question is therefore whether this is the case when the scrambling map is generated by quench evolution $\exp(-iHt)$ under many-body Hamiltonians native to the experimental platform.  
Indeed, we argue that tomographic completeness generically holds if the Hamiltonian is sufficiently ergodic, that is, as long as information initially localized on system degrees of freedom scrambles into ancillary degrees of freedom.

Before presenting a detailed analysis, let us first present an intuitive understanding of tomographic completeness in terms of \textit{operator scrambling}.
To begin,   consider 
two distinct quantum states $\rho$ and $\sigma = \rho + \delta \rho$, where $\delta \rho \neq 0$ is some traceless operator. We ask when they can be distinguished by standard-basis measurements following quench dynamics. A positive answer to this question is signaled by a non-zero difference in the measurement outcome probabilities $\delta P_z$, for some $z$.
Tomographic completeness is then equivalent to {\it every} pair of states being distinguishable, i.e., for any arbitrary difference $\delta \rho$. 
Equivalently, we can consider the dynamics of an operator $\delta \rho \otimes |\phi\rangle \langle \phi| $ on the global system, under the quench evolution
\begin{align}
    \delta \rho(t) \equiv  e^{-iHt} \left( \delta\rho \otimes \ketbra{\phi}{\phi} \right) e^{iHt}.
\end{align}
For a qubit system, if the dynamics is scrambling, over time this becomes generically a complicated linear combination of many Pauli string operators, i.e.
\begin{align}
\delta\rho(t) = \sum_\mu c_\mu (t) \sigma_\mu,    
\end{align}
 where $\mu$ enumerates over $4^{n_\text{sys} + n_\text{anc}} $ Pauli string operators $\sigma_\mu$, e.g., $\sigma^x \otimes \sigma^y \otimes \mathbb{I} \otimes \cdots $ and their corresponding coefficients $c_\mu (t)$. 
 In this formulation, distinguishability of $\rho$ and $\sigma$ (non-zero $\delta P_z$ for some $z$) is possible if the coefficients $c_\mu(t)$ are nonvanishing for some diagonal Pauli string operators, e.g., $\mathbb{I} \otimes \sigma^z \otimes \sigma^z \otimes \cdots$.
 Then, the condition for tomographic completeness is that the time-evolved operator $\delta \rho(t)$ has overlap with \emph{some} diagonal Pauli string  for any $\delta \rho$. 
 Now, consider the structure of these coefficients $c_\mu(t)$. 
 Barring any special circumstances (e.g., symmetries, or dynamical localization etc., discussed below), we expect from numerous previous studies on operator spreading~\cite{m:nahumOperatorSpreadingRandom2018,m:vonkeyserlingkOperatorHydrodynamicsOTOCs2018,m:khemaniOperatorSpreadingEmergence2018} that under  ergodic dynamics, a given operator $\delta \rho \otimes |\phi\rangle \langle \phi|$ generically spreads within and in fact fills in its light cone, thus making it very unlikely that   $\delta \rho(t)$ completely avoids spreading to any diagonal Pauli string in dynamics. Indeed,   
 tomographic {\it incompleteness} amounts to the presence of an operator $\delta \rho$ with vanishing coefficients $c_\mu(t)$ for {\it all}  diagonal Pauli strings. This is arguably a very unlikely scenario as it requires fine-tuning a linear combination of operators; a moment's thought shows that this problem can be cast as a set of $2^{n_\text{sys} + n_\text{anc}}$   simultaneous linear equations with $4^{n_\text{sys}}$ unknowns, which is highly over-constrained if $n_\text{anc} > n_\text{sys}$.
 That is to say, when the ancillary system is large enough, we can generically expect full recoverability of information if quantum dynamics is ergodic.

 The above discussion regarding tomographic completeness can be succinctly captured by a simple statement: it is that
\begin{align}
\label{eq:operatorOverlap}
    \sum_{\mu \in \textrm{diag} } 
    \abs{ \Tr(\delta \rho(t) \sigma_\mu)}^2 \neq 0
\end{align}
for all traceless linear operators $\delta \rho$ supported in the system degrees of freedom.
Interestingly, the left hand side can be  re-expressed as a sum of out-of-time-ordered correlators (OTOC) $ \sum_z 4 \Tr( \pi_z \delta \rho(t) \pi_z \delta \rho(t)) = \sum_z 4 \abs{\bra{z} \delta \rho(t) \ket{z}}^2$ with the projection operator $\pi_z = \ketbra{z}{z}$, so that tomographic completeness is equivalent to these particular OTOCs never vanishing for any $\delta \rho$.

While the above picture of operator scrambling in ergodic quantum dynamics is appealing and explains why our protocol should be expected to work in general, it is desirable to place it on firmer, rigorous footing.
However, proving \cref{eq:operatorOverlap} for a single instance of an arbitrary ergodic Hamiltonian dynamics is difficult, if not impossible.
Nevertheless, we are able to make progress and establish 
a rigorous result on the tomographic completeness of a slightly modified version of protocol, in which the evolution time $t$ is not fixed, but randomly chosen.
Our proof relies on two widely accepted assumptions:  
(i) that the Hamiltonian satisfies  the \textit{second no-resonance condition} (see below for its definition) and 
(ii) that the measurement basis  is distributed across all eigenstates of the global Hamiltonian.
Both assumptions concern the ergodicity of the Hamiltonian dynamics defined through its eigenvalues and eigenvectors, respectively.
Furthermore, by inspecting when the second condition is violated, we identify a failure-mode: when the Hamiltonian displays symmetries that restrict information scrambling. We provide ways to overcome this limitation by using different geometric arrangements of ancillae.

We begin by reiterating in a slightly different form  the conditions for which our protocol is tomographically incomplete for the scrambling map $S_t$, associated with the quench dynamics of duration $t$:
there exists two states $\rho^{(t)} \neq \sigma^{(t)}$ that give the same probability distribution $P_z$:
\begin{align}
    \forall z,~ \delta P_z(t) =  \bra{z} U_t \left(\delta \rho^{(t)} \otimes \ketbra{\phi_\ancs}{\phi_\ancs}   \right) U_t^\dagger \ket{z} = 0\quad \label{eq:noninvertibility_def}
\end{align}
where $\delta\rho^{(t)} \equiv \rho^{(t)}-\sigma^{(t)}$. Physically, \cref{eq:noninvertibility_def} states that the measurement data does not contain any information about the perturbation $\delta \rho^{(t)}$.
Mathematically, it states that the~\scrambler~$S_t$ has a non-vanishing null-space and hence is non-invertible.

In order to make headway, we consider a slightly more restrictive scenario: we assume that there exists a pair of density matrices $\rho, \sigma$ that are indistinguishable for \textit{all times} $t$: 
\begin{align}
    \forall t, z,~ \delta P_z(t) =  \bra{z} U_t \left(\delta \rho \otimes \ketbra{\phi_\ancs}{\phi_\ancs}   \right) U_t^\dagger \ket{z} = 0. \quad \label{eq:assumption1}
\end{align}
That is to say, for all times $t$, the scrambling map $S_t$ is non-invertible with common kernel that contains $\delta \rho$. In particular, this implies that the time-averaged map $\mathbb{E}_t[S_t]$ is also non-invertible. We now invoke the \textit{second no-resonance condition} (i), which states that the energy eigenvalues of $H$ obey
\begin{gather}
    E_i+E_j-E_k-E_l = 0 \text{ if and only if} \\
    i=k, j=l \text{ or }i=l, j=k~.\nonumber
\end{gather}
This can be viewed as a generalization of the no degeneracy condition, which states that $E_i = E_j$ if and only if $i=j$. Here, we have a  no-degenerate {\it gap} condition, which requires the gap $E_i-E_j$ between any pair of eigenvalues to be unique. This condition is a common assumption in literature on many-body thermalization and is considered a mild one~\cite{m:goldsteinDistributionWaveFunction2006,m:reimannFoundationStatisticalMechanics2008,m:lindenQuantumMechanicalEvolution2009,m:kanekoCharacterizingComplexityManybody2020,m:huangExtensiveEntropyUnitary2021}. Note this condition is notably violated in non-interacting systems. In this sense, the second no-resonance condition captures the (spectral) notion of ergodicity.

In \cref{app:no_resonance}, we demonstrate that applying the second no-resonance condition to \cref{eq:assumption1} gives two equations:
\begin{align}
 &\forall z~,~\sum_E \abs{\braket{z}{E}}^2  \bra{E}\left(\delta \rho \otimes \ketbra{\phi_\ancs}{\phi_\ancs} \right)\ket{E} = 0\quad \text{and } \label{eq:invertibility_secondary_eq1} \\
 &  \sum_{E\neq E'} \abs{\braket{z}{E}}^2 \abs{\braket{z}{E'}}^2\left|\bra{E}\left(\delta \rho \otimes \ketbra{\phi_\ancs}{\phi_\ancs} \right)\ket{E'}\right|^2 = 0  \label{eq:invertibility_main_eq1}
\end{align}
where $\ket{E}$ are the eigenstates of $H$. Since each term in the sum of \cref{eq:invertibility_main_eq1} is non-negative, they must all be zero in order to satisfy \cref{eq:invertibility_main_eq1}.

Now we invoke the second condition (ii) regarding the ergodicity of eigenvectors of $H$: $\braket{z}{E} \neq 0$ for every $\ket{z},\ket{E}$.
Under this assumption, \cref{eq:invertibility_main_eq1} implies that $\delta \rho = 0$. In short, if the measurement basis is distributed across all eigenstates and the Hamiltonian satisfies the second no-resonance condition, there is no pair of states that our protocol cannot resolve for all quench times. 

In other words, the ergodicity assumptions (i) and (ii) imply the tomographic completeness for a slightly modified version of our protocol. Instead of evolving the extended system for a fixed time, we consider evolutions with many different, long times $t$. In such cases, the measurement data contains temporal labels $\ket{z,t}$. We may consider a larger \scrambler~$\mathcal{S}$ with elements $\mathcal{S}_{zt,ij} \equiv S^{(t)}_{z,ij}$. Our results show that this \textit{temporally-enhanced}~\scrambler~is tomographically complete under conditions (i) and (ii). We expect that this requirement for evolution over all times is a technical limitation of our proof and is not necessary in practice.
Indeed, we will find that tomographic completeness holds true for generic quench times $t$ in all numerical examples studied in \cref{sec:rydberg,sec:bcs,sec:bose-hubbard}, where we quench the extended system under Hamiltonians native to the quantum simulator platforms.

Now we turn to a failure case, when  \cref{eq:invertibility_main_eq1} is nontrivially satisfied (that is, for $\delta \rho \neq 0$).
To identify such scenarios, we form, for every $\ket{z}$, a subspace of eigenstates that overlap with $\ket{z}$: $\{\ket{E}:\braket{z}{E} \neq 0\}$. We denote the projector onto this space as $\Pi_z \coloneqq \sum_{E:\braket{z}{E} \neq 0} \ketbra{E}{E}$. \Cref{eq:invertibility_secondary_eq1,eq:invertibility_main_eq1} imply that  $\left|\bra{E}\left(\ketbra{\phi} \otimes \delta \rho \right)\ket{E'}\right|^2 = 0$ for all $\ket{E},\ket{E'}$ in this subspace. 
Summing over all such $\ket{E}$ and $\ket{E'}$, we obtain 
\begin{equation}
\Tr[\Pi_z\left(\delta \rho \otimes \ketbra{\phi_\ancs} \right) \Pi_z\left(\delta \rho \otimes\ketbra{\phi_\ancs}\right)] =0.    \label{eq:trPizPiz1}
\end{equation}
Since $\left(\Pi_z\left(\delta \rho \otimes\ketbra{\phi_\ancs} \right) \Pi_z\right)^2$ is positive semi-definite, \cref{eq:trPizPiz1} implies
\begin{equation}
\Pi_z\left(\delta \rho \otimes\ketbra{\phi_\ancs} \right) \Pi_z = 0.\label{eq:PzdeltarhoPz1}
\end{equation}
Therefore, the difference in density matrices $\delta \rho \otimes \ketbra{\phi_\ancs}$, which is a linear operator in the extended space, takes states in $\Pi_z$ out of this subspace.

A prominent example where this can occur is when the quench Hamiltonian exhibits a symmetry (e.g., if there is particle or charge conservation), and the readout states $\ket{z}$, as well as $|\phi\rangle$ have well-defined quantum numbers of this symmetry. Then, the product $\abs{\braket{z}{E}}^2 \abs{\braket{z}{E'}}^2$ is non-zero only when $z, E$, and $E'$ have the same quantum numbers: $\Pi_z$ is correspondingly a projector acting only within the symmetry sector defined by $z$. Therefore, any observable on  system $\delta \rho$ that is block off-diagonal between symmetry sectors will satisfy \cref{eq:PzdeltarhoPz1}, and cannot be detected by our quench protocol. 

This condition naturally arises in itinerant particles hopping in optical lattices, which has a $U(1)$ charge associated with particle number conservation. Consider for example the patched quench configuration discussed above in \cref{sec:quench-setups} and illustrated in Fig.~\ref{fig:schemes}(b), where particles in well-separated patches undergo separate evolution. In this case, there is in fact a higher symmetry in such quench evolution: the number of particles in each patch is individually conserved, a $U(1)\times U(1)$ symmetry. 

Assuming that 
measurements collapse the extended system to states that possess a well-defined $U(1)\times U(1)$ charge (i.e. number of particles on each patch), such a patched quench setup cannot distinguish between states that are coherent superpositions of configurations in different symmetry sectors. In particular, they cannot measure observables that break this $U(1)\times U(1)$ symmetry, 
such as $c^\dagger_{i} c_{j} + \text{h.c.}$, where $c^\dagger$ ($c$) are the raising and annihilating operators for particles, located on sites $i,j$ belonging to different patches. 
To remedy this, we can imagine modifying our quench evolution in such a way that allows particles in these two patches to tunnel between each other
[``bridging the patches," illustrated in \cref{fig:schemes}(c)]. This breaks the $U(1)\times U(1)$ symmetry and allows the extraction of the pairing order parameter.

\subsection{Required quench evolution time: Constraints from locality} \label{sec:quench_time}

In the previous section, we have argued that in the absence of symmetry constraints and assuming certain natural notions of ergodicity of the quench Hamiltonian, our protocol is tomographically complete at all times.
However, intuitively, we also expect that our protocol does not work very well if the quench time is too small: information in the system does not have enough time to scramble into the ancillae. 
A natural question is then how long the quench evolution should be, in practice.

Here, we argue that constraints from the geometric propagation of quantum information---called Lieb-Robinson bounds~\cite{m:liebFiniteGroupVelocity2004,m:tranLiebRobinsonLightCone2021,m:yinFiniteSpeedQuantum2022,m:kuwaharaOptimalLightCone2022}---determine this time.
Intuitively, any information initially localized on the system must be able to ``flow'' into the ancillary system in order for it to be accessible and thus recoverable. 
Concretely, if the quench Hamiltonian is geometrically local, the Lieb-Robinson bound constrains the propagation of information to be  within a light cone. 
If this light cone is linear, we  show below that there is a threshold time $t_*$ depending on the furthest distance of a system site to an ancillary site, such that our protocol is tomographically incomplete for all $t < t_*$.
In reality, for local Hamiltonian dynamics, for any $t > 0$, there will generally be an exponentially small leakage of information  outside the light cone. 
Therefore, strictly speaking, our protocol would be tomographically complete for all $t > 0$; however, for times $0 < t < t_*$ we expect an exponentially large sample complexity. 

To best illustrate this argument,  we consider a suboptimal setup that has a large distance between a system particle and its nearest ancilla. 
This example demonstrates why it is desirable to minimize this distance, in order to minimize the quench time required for the protocol. 
Specifically, we assume that the system and the ancillae are contiguous regions that live on a one-dimensional lattice of spins, separated by a single boundary (\cref{fig:LR-proof}),  and we evolve the extended system under a nearest-neighbor Hamiltonian:
\begin{align} 
 	H = \sum_{\avg{i,j}} h_{ij}, \label{eq:local-H-def}
 \end{align} 
where the sum is over nearest-neighboring sites, for time $t$.
For convenience, we label the system and ancilla sites by $1,\dots,n_\sys,$ and $n_\sys+1,\dots,n_\sys+n_\anc$, respectively, and divide the system into three complementary sets: $A = \{1\}, B = \{2,\dots,n_\sys\}$, and $C = \{n_\sys+1,\dots,n_\sys+n_\anc\}$~(\cref{fig:LR-proof}).
With $r = n_\sys-1$, Ref.~\cite{m:haahQuantumAlgorithmSimulating2021} used the Lieb-Robinson bound~\cite{m:liebFiniteGroupVelocity2004} to show that
\begin{align} 
	\norm{e^{-iHt} - e^{-iH_{{BC}}t}e^{iH_Bt}e^{-iH_{AB} t}}\lesssim e^{v_\text{LR} t-r}, \label{eq:HHKL}
\end{align}
where $AB\equiv A\cup B$, $H_{AB}$ denotes the Hamiltonian constructed from the terms $h_{i,j}$ in \cref{eq:local-H-def} that are supported entirely in $AB$, $v_{\text{LR}}$ is the Lieb-Robinson velocity, and
 $\| \cdot \|$ is the  operator spectral norm.
Given that $H = H_{BC} - H_B + H_{AB}$, the decomposition in \cref{eq:HHKL} can be viewed as a first-order trotterization of $e^{-iHt}$, albeit with the approximation error decaying exponentially with the size of $B$. 

Let $S$ and $\tilde S$ be the scrambling maps generated by the quenches $e^{-iHt}$ and $e^{-iH_{{BC}}t}e^{iH_Bt}e^{-iH_{AB} t}$, respectively. 
It follows from the definition in \cref{eq:linear_map_Q} and \cref{eq:HHKL} that
\begin{align} 
	 \norm{S - \tilde S} \lesssim e^{v_{\text{LR}}t-r}.\label{eq:S-diff}
\end{align}
We claim that for short times $t \leq r/v_{\text{LR}}$, the scrambling map $\tilde S$ is singular, making the corresponding protocol tomographically incomplete and, by \cref{eq:S-diff}, $S$ must be nearly singular too.

\begin{figure}[t]
\includegraphics[width=0.45\textwidth]{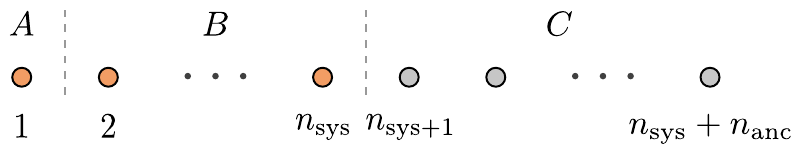}
\caption{An illustration of the decomposition of the extended system into three regions $A,B$, and $C$ as considered in \cref{sec:quench_time}.}
\label{fig:LR-proof}
\end{figure}

To prove this claim, recall that tomographic incompleteness is equivalent to the existence of an operator $\delta \rho$ such that 
\begin{equation}
    \forall z,~\bra{z}U_t(\delta\rho \otimes \ketbra{\phi}{\phi}) U_t^\dagger \ket{z} = 0~. \label{eq:deltarho2}
\end{equation}
This can equivalently be cast as 
\begin{equation}
    \Tr\left[U_t(\delta \rho \otimes \ketbra{\phi}{\phi}) U_t^\dagger Z_\mu \right] = 0~\label{eq:Lieb_Robinson_equation} 
\end{equation}
for all Pauli-strings $Z_\mu \in \{Z,\mathbb I_2\}^{\otimes (n_{\sys}+n_{\anc})}$ consisting only of $Z$ and $\mathbb I_2$, where $\mathbb{I}_2$ is the single-site ($2 \times 2$) identity operator.
Consider $U_t = e^{-iH_{{BC}}t}e^{iH_Bt}e^{-iH_{AB} t}$ and choose an operator $O$ such that $\delta \rho\propto O \equiv e^{iH_{AB} t} X^{(1)} e^{-iH_{AB} t}$, where $X^{(1)}$ is the Pauli-$X$ operator acting on site $1$ (region $A$).
We then have
\begin{align} 
	 &e^{-iH_{{BC}}t}e^{iH_Bt}e^{-iH_{AB} t}(O \otimes \ketbra{\phi}{\phi}) e^{iH_{{AB}}t}e^{-iH_Bt}e^{iH_{BC} t} \nonumber\\
	 &= X^{(1)}\otimes e^{-iH_{{BC}}t} (\mathbb I_{B}\otimes \ketbra{\phi}{\phi} )e^{iH_{{BC}}t},
\end{align}
which is trace-orthogonal to all Pauli-strings $Z_\mu$.
Therefore, \cref{eq:deltarho2} holds, implying that  
$\tilde S$ is singular and is tomographically incomplete as a~\scrambler. 

By \cref{eq:S-diff}, $S$ has at least one exponentially small singular value when $t < t_* = r/v_{\text{LR}}$.
It is natural to expect this to lead to an exponentially large sample complexity associated with the recovery map $R$.
In \cref{app:lieb_robinson}, we provide numerical simulations supporting this argument. 
Our numerical evidence also suggests that the Lieb-Robinson bound not only gives a lower bound for the requisite quench time, but also describes the optimal quench time: beyond this Lieb-Robinson time $t_*$, the sample complexity quickly plateaus and subsequent quench evolution brings little improvement.
We note that it is possible to generalize this lower bound on the quench time to $D$-dimensional lattices with power-law decaying Hamiltonians~\cite{m:tranLocalityDigitalQuantum2019} and to bosonic systems at finite particle density~\cite{m:kuwaharaOptimalLightCone2022}.

Finally, we note that, in practice, it is often easy to circumvent the constraints from the Lieb-Robinson light cone. 
For example, instead of the setup in \cref{fig:LR-proof} where the system and ancillae are connected only through a small bottleneck through which information has to flow, if we were to arrange the ancillae using the global quench setup with high connectivity like in \cref{fig:schemes}(a), the distance between a system site and its nearest ancilla would be independent of the linear length of the system.
Therefore, even with no leakage outside the light cone, a  quench time that is independent of system size is sufficient to ensure tomographic completeness of our protocol.

\subsection{Quench setups and sample-complexity} \label{sec:quench-setups}

As argued above, tomographic completeness is   relatively easy to satisfy, and as discussed in \cref{subsec:sample_complexity}, as our aim is not to fully tomographically reconstruct a many-body state but rather learn interesting physical properties of it, a more important figure of merit is the {\it sample complexity} of our protocol, i.e., the required number of measurements to well estimate    a target observable.
It turns out that the interactivity between the system and ancillae in relation to the choice of  observable, play a key role in determining its sample complexity, leading to different ways to implement our protocol that affect its performance.

To explain this, let us quickly recap the recently-introduced and related quantum state-learning protocol known as classical shadow tomography~\cite{m:ohligerEfficientFeasibleState2013,m:huangPredictingManyProperties2020,m:elbenRandomizedMeasurementToolbox2022}, which provides important insights into the design of our protocol.
The main idea of classical shadow tomography is to apply randomized measurements, realized by random unitary evolution from ensembles with known statistical properties. Information can be recovered of the system by post-processing the measurement outcomes in a manner similar to ours. 
In Ref.~\cite{m:huangPredictingManyProperties2020}, it was established that different random ensembles of unitaries are well suited to estimate different classes of observables. Specifically, low rank observables (which are necessarily non-local, i.e., they do not act on a small region in space) can be efficiently estimated through applying random unitaries supported on the full system. A concrete example is that of deep, random Clifford circuits, which mimic the behavior of global Haar-random unitaries. In contrast, few-body observables can be efficiently estimated with random spatially-local unitaries, concretely realized by products of random, on-site Clifford rotations. In the intermediate regime, it was argued that observables that are neither few-body nor low-rank require a large number of samples in either quench setup. 
Therefore, depending on the observables of interest one has to utilize different ensembles of unitary circuits to minimize sample complexity~\cite{m:huClassicalShadowTomography2021}.
We provide a more comprehensive review in \cref{app:classical_shadow_tomography}.

The above results find natural analogs in our setting of quench evolution with ancillae. We discuss several quench setups and the observables that they are well suited to estimate:

\emph{Global quench.}~[\cref{fig:schemes}(a)]---Here we couple the entire system of interest with a common set of ancillae, and quench evolve the joint system.
Intuitively, since there is high interactivity between the system and ancillae, this set-up mimics the behavior of scrambling of information from global Haar-random unitaries in classical shadow tomography.
Thus, we expect this configuration is well suited for estimating low-rank observables such as the many-body fidelity, though note it may also be used to estimate arbitrary non-local observables. However, a drawback is that it is  generally computationally costly to numerically compute the global scrambling and recovery maps, and hence cannot be applied to large systems, hindering scalability of this approach.
    
\emph{Patched quench.}~[\cref{fig:schemes}(b)]---Here, we divide the system into multiple disjoint subsystems and couple each subsystem with its own set of ancillae, before quench evolving them individually. 
Intuitively, since the interactivity between system and ancillae is limited to within local patches in space, this is akin to scrambling of information via random local unitaries in classical shadow tomography. Thus 
this configuration is expected to be well suited for estimating few-body observables: the subsystem size can be tuned to match the support size of the observable and thus minimize the sample complexity. Note this approach also has a low classical computational cost as this only depends  on the largest patch size considered, and is thus scalable. In particular, it can even be practically favorable to employ a patched quench to estimate observables which are global in nature, in order to overcome the computational overhead as described above in using a global quench. 
    
\emph{Bridged quench.}~[\cref{fig:schemes}(c)]---In certain analog quantum simulators, dynamics might be constrained by symmetries, preventing recoverability of information in certain quench setups. An example, as mentioned before in \cref{sec:recoverability}, is furnished by a system of itinerant particles hopping in optical lattices which conserves the total particle-number.
In particular, if 
we quench evolve two {\it disjoint} patches of itinerant particles on an optical lattice, the particle number in each patch is conserved, leading to an enhanced $U(1)\times U(1)$ symmetry. If we measure in the particle number basis, our protocol will not be able to detect observables that do not commute with this symmetry. In \cref{sec:bcs}, we discuss an example of such an observable: a superconducting pairing correlator that annihilates a Cooper pair in one patch, while creating one in the other. 

The interactivity of the system and ancillae must thus be engineered in a way to break this enhanced symmetry. For example, we can imagine  introducing a ``bridged" quench setup [see \cref{fig:schemes}(b)], which in the case of the example of itinerant particles hopping in a optical lattice allows particles to be exchanged between separate patches. 
\cref{sec:bcs} also demonstrates how the use of such a configuration now allows for the successful estimation of the superconducting pairing order parameter.

\subsection{Optimal classical data processing}\label{sec:frame}
We now discuss the optimal classical data processing strategy for estimating a given observable.
As discussed in \cref{sec:data_processing}, in general, a given~\scrambler~$S$ does not have a unique recovery map $R$. When we have knowledge of the state of interest $\rho$, we find a recovery map which provably minimizes the sample complexity of estimating any observable $O$. The key idea is to use results from \textit{frame theory}, a mathematical theory relevant to signal processing~\cite{m:scottTightInformationallyComplete2006,Daubechiestenlectures}.

As mentioned in \cref{sec:overview}, the projective measurements on the extended system induce randomized measurements on the system. 
Formally, these randomized measurements constitute a positive operator-valued measure (POVM) $\{\ketbra{S_z}\}$, where $\sum_z \ketbra{S_z} = \mathbb I_{d_\sys}$ and  
\begin{equation}
        \ket{S_z} = \left(\mathbb{I}_{d_\sys} \otimes \bra{\phi_\ancs} \right) U_t^\dagger |z\rangle. 
\end{equation}
The POVM $\{\ketbra{S_z}{S_z}\}\equiv \{\oket{S_z}\}$
can be identified with an overcomplete basis $\{\oket{S_z}\}$ over linear operators of the system. Intuitively, this overcompleteness gives redundant information in its measurement outcomes and therefore a redundancy in ways of extracting desired quantities. It turns out that a na{\"i}ve way of processing measurement outcomes (based on the Moore-Penrose pseudoinverse) overweights outcomes $z$ that are more frequently observed; one has to correct for this overweighting in order to minimize the statistical error. 

The above intuition can be formalized by recognizing that the POVM $\{\oket{S_z}\}$ is an object known as an \textit{operator frame}. \cref{app:frame-theory} formally defines a frame and discusses its properties. In quantum information theory, frames have been studied in the context of informationally complete POVMs~\cite{m:renesSymmetricInformationallyComplete2004,m:klappeneckerMutuallyUnbiasedBases2005,m:scottTightInformationallyComplete2006}.

Every frame has so-called \textit{dual frames} that allow for their inversion as in \cref{eq:RPztorho}. Crucially, such dual frames are not unique; this corresponds to the rectangular matrix $S$ not having a unique left-inverse $R$; the Moore-Penrose pseudo-inverse
\begin{equation}
R_\text{MP} = (S^\dagger S)^{-1} S^\dagger~,    
\end{equation}
is one such left-inverse, but it is easy to check that so are matrices of the form:
\begin{equation}
    R= (S^\dagger \Gamma S)^{-1} S^\dagger \Gamma~,\label{eq:inverse_maps_Gamma}
\end{equation}
for $\Gamma$ positive-definite (and Hermitian). Each choice of $R$ corresponds to a {\it different} observable estimator $o_z = (O^\dagger|R\ket{z}$ for the {\it same} observable $O$ and therefore has   different sample complexities $\text{Var}[o_z]$.

In \cref{app:frame-theory} we show that a result from frame theory gives the left-inverse $R$ that provably minimizes the sample complexity for a given state $\rho$, independent of the observable $O$. This left-inverse is given by \cref{eq:inverse_maps_Gamma} with the choice of $\Gamma$ being a diagonal matrix with entries 
\begin{equation}
\Gamma_{z,z} = 1/P_z = \bra{S_z}\rho\ket{S_z}^{-1} \label{eq:optimal-QL}
\end{equation}
(note it is explicitly $\rho$-dependent).
Explicitly, we invert the $d_\sys^2 \times d_\sys^2$ matrix $A\equiv S^\dagger \Gamma S$, which has matrix elements:
\begin{align}
    A_{(i,j),(k,l)} &= \sum_z \frac{1}{P_z} \braket{i}{S_z}\!\braket{S_z}{j} \!\braket{k}{S_z}\!\braket{S_z}{l}~, 
\end{align}
giving the corresponding $\{o_z\}$ as
\begin{align}
    o_z &= \frac{1}{P_z} \sum_{ijkl} O_{j,i} \big(A^{-1}\big)_{(i,j),(k,l)} S_{z,(k,l)},
    \label{eq:optimal-oz}
\end{align}
which has the smallest possible $\text{Var}(o_z)$ while also satisfying $\sum_z P_z o_z = \tr(O\rho)$.

In \cref{fig:rydberg-all}(a) below we demonstrate that the optimal recovery map halves the number of required samples compared to the Moore-Penrose version. 
In \cref{app:frame-nonlinear}, we show that the same dual frame in \cref{eq:optimal-QL} is also optimal for extracting information involving nonlinear observables.

The construction of the optimal recovery map [\cref{eq:optimal-QL}] is seemingly self-referential: our aim is to recover $\rho$ (thus $P_z$ is unknown), yet it explicitly requires knowledge of $P_z$.
Indeed, we expect this approach to be useful when one uses their quantum device to prepare a known target state.
In practice, with an unknown state, we may not be able to construct the optimal frame exactly. However, our result in fact accords a way to construct a recovery map if one does have a prior model for a \textit{distribution} $p(\rho)$ of initial states: we simply replace $P_z = \bra{S_z}\rho\ket{S_z}$ in  \cref{eq:optimal-QL}   by $\overline{P}_z \equiv \int d\rho~p(\rho) \bra{S_z}\rho\ket{S_z}$ in \cref{eq:optimal-QL}. This minimizes the \textit{expected} sample complexity over the distribution $p(\rho)$. For example, if one has no knowledge of the initial state, a reasonable guess might be the uniform distribution over $\mathcal{H}_\sys$, and we have $\overline{P}_z = \bra{z}U_t(\mathbb{I}_{d_\sys} \otimes \ketbra{\phi})U_t^\dagger \ket{z}$. As long as this distribution $\overline{P}_z$ is different from the uniform distribution $P_z = 1/d_{\ext}$, the optimal recovery map will outperform the na{\"i}ve Moore-Penrose pseudoinverse [\cref{eq:naive-QL}].

As a matter of practice, we note that when we only have a few observables $O$ to estimate, it is preferable to directly solve the linear equation $\oket{O} = S^\dagger \ket{o}$ to obtain $\ket{o} \equiv \{o_z\}$. 
That is, one can perform a QR decomposition on $S^\dagger$ and solve for $|o\rangle$ through Gaussian elimination. Compared to finding the inverse of $S$, this method is numerically more stable and has faster computational runtimes. One may verify that the standard computational method of solving linear equations based on such a QR decomposition yields the same solution as the Moore-Penrose pseudoinverse of $S$, $\ket{o} = R_\text{MP}^\dagger \oket{O}$ (\cref{app:QR_decomp_linear_equations}). To obtain the estimator $\{o_z\}$ arising from the optimal left-inverse $R^\dagger \oket{O}$, one may instead apply the QR decomposition algorithm to solve the linear equation $S^\dagger \Gamma^{1/2} \left[\Gamma^{-1/2} \ket{o}\right] = \oket{O}$.

Finally, our result of the optimal recovery map construction can in fact also be applied to conventional randomized measurement schemes such as classical shadow tomography, and may be of independent interest. 

\subsection{Effect of noise}\label{sec:noise}

Thus far, we have analyzed the performance of our protocol assuming the quench evolution of the extended system is an ideal unitary and measurements are perfectly implemented. As mentioned in \cref{sec:overview}, one deleterious effect is the presence of noise, which perturbs around such limits. In this section we discuss the effects of noise during the quench evolution on our protocol.
We consider two scenarios corresponding to our knowledge of the noise process.

\emph{Using the noiseless recovery map.---}First, we consider the scenario where we cannot fully characterize the noise process, and the noise rate is sufficiently small. Specifically, let $S^{(\gamma)}$ be the noisy \scrambler~under a global noise rate $\gamma$.
Note that $\gamma$ may depend on the system size $N$, e.g. $\gamma = N\gamma_{\text{loc}}$ with a local error rate $\gamma_{\text{loc}}$.  
Since we cannot compute the left inverse of $S^{(\gamma)}$, we use the recovery map $R^{(0)}$ of the noiseless channel to recover the initial state.
Evidently, since $R^{(0)} S^{(\gamma)}\neq \mathbb I$, this approach introduces errors in the recovered state. This error is systematic and cannot be suppressed by acquiring more experimental samples.

Intuitively, if $\gamma t \sim O(1)$, we expect one or more errors to occur in every experimental run and severely distort the measurement outcome probabilities $P_z$. In this case, without knowing the error channel, it is informationally impossible to recover the initial state. Therefore, we restrict our attention to the case where $\gamma t\ll 1$ and estimate the systematic error to leading order in $\gamma t$ using reasonable assumptions about the distribution of the measurement outcomes.

Recall that without noise, the measurement outcomes $z$ are sampled from the probability distribution $P_z$. Conditioned on the presence of at least one error with probability $\gamma t$, we instead sample $z$ from a different probability distribution $P'_z$.
This gives an incorrect estimate of $\avg{O}$:
\begin{align} 
	\avg{O}' = (1-\gamma t)\sum_z P_z o_z + \gamma t\sum_z P'_z o_z + \O{\gamma^2t^2},
\end{align}
where $o_z$ is the estimator of $O$ defined in \cref{eq:ozdef}.
The systematic deviation from the correct value $\avg{O}$ is
\begin{align} 
	\Delta_O = o - o' = \gamma t \sum_z \underbrace{(P_z - P'_z)}_{\equiv \delta P_z} o_z + \O{\gamma^2 t^2}. \label{eq:sys-error-def}
\end{align}
In general, $P_z$ and $P'_z$ can be arbitrary distributions and $\Delta_O$ can be a positive or negative offset. In a crude estimate, we expect $\abs{\delta P_z}\sim 1/d_\ext$ and, therefore, we estimate the magnitude $\abs{\Delta_O}$ as:
\begin{align} 
	\abs{\Delta_O} \lesssim \frac{\gamma t}{d_\ext} \sum_z  \abs{o_z} \leq \gamma t \sqrt{\frac{1}{d_\ext}{\sum_z o_z^2}} \nonumber\\
	\approx \gamma t \sqrt{\Var(o_z)} \approx N \gamma_\text{loc}t \sqrt{\Var(o_z)}, \label{eq:back-of-envelope-bound}
\end{align}
where the second inequality is due to the Cauchy-Schwarz inequality.  
The last term $\sqrt{\text{Var}(o_z)}$ is the only term in the bound that depends on the operator $O$.
\Cref{eq:back-of-envelope-bound} has an intuitive interpretation:
When the noise rate and the sample complexity are both small, most of the collected samples would have no error, resulting in a small total systematic error.
Consequently, observables that have low sample complexity in our protocol are also robust against noise. Intuitively, the sample complexity is proportional to the distribution of values of $o_z$. If this spread is large, measuring the incorrect bitstrings $z$ will lead to a large error in the estimated $\langle O \rangle$.

The back-of-the-envelope bound in \cref{eq:back-of-envelope-bound} is a conservative
estimate that assumes the summands in \cref{eq:sys-error-def} add up coherently. 
Nevertheless, we demonstrate with an example in \cref{app:noise} that \cref{eq:back-of-envelope-bound} closely captures the behavior of the systematic error in the presence of noise. 

\emph{Using the noisy recovery map.---}Next, we consider the case when the noisy evolution is well characterized. In particular, we assume that we know exactly the noisy evolution channel $\mathcal S$. In this case, we can simply generalize the definition of the \scrambler~$S$ in \cref{sec:framework} to account for the non-unitary quench channel, i.e.
\begin{align} 
	S_{z,kl} \equiv \bra{z}\mathcal S\Big(\ketbra{k}{l}\otimes \ketbra{\phi_\ancs}{\phi_\ancs}\Big)\ket{z}, 
\end{align}
which remains a linear map and is generically invertible, and the rest of the protocol would remain the same. 
However, because the scrambling quench is now different, the sample complexity may also be different from the noiseless case.
In fact, because noise 
can only reduce the distinguishability between quantum states, we 
expect to need more samples to determine observables up to the same precision, in the presence of noise.

As an example, we assume that the scrambling quench is affected by global depolarizing noise at a constant rate $\gamma = N \gamma_\text{loc}$. We argue that the sample complexity for estimating an observable increases exponentially with $\gamma t$, where $t$ is the scrambling quench time. Indeed, under this noise model, the state of extended system gradually flows towards the maximally mixed state during the scrambling quench. At the end of the quench, the extended state can be written as
\begin{align} 
	\rho_\ext^{{(\gamma)}}(t) = e^{-\gamma t} \rho_\ext^{(0)}(t) + (1-e^{-\gamma t}) \frac{\mathbb I}{d_\ext},  \label{eq:noisy_state}
\end{align}
where $\rho_\ext^{(0)}(t)$ is the extended state after a noiseless quench evolution.
Accordingly, the probability of obtaining an outcome $z$ in the presence of noise $P^{(\gamma)}_z$ is
\begin{align} 
	 P^{(\gamma)}_z = e^{-\gamma t} P^{(0)}_z + (1-e^{-\gamma t})/d_\ext, \label{eq:Pznoisy}
\end{align}
where $P^{(0)}_z$ is the probability without noise.
As we increase the noise rate $\gamma$, the distribution of outcomes $z$ approaches the uniform distribution, which contains no information about the initial state of the system.
For this reason, we will need more samples to recover the initial state to the same precision. 

Mathematically, \cref{eq:Pznoisy} implies that we can replace the \scrambler~$S^{(0)}$ by 
\begin{align} 
	S^{(\gamma)} = \left[e^{-\gamma t} \mathbb I + (1-e^{-\gamma t}) \frac{\mathbbm{1} }{d_\ext} \right] S^{(0)}, 
\end{align}
where $\mathbbm{1}$ is a $d_\ext\times d_\ext$ matrix with entries $\mathbbm{1}_{ij,kl} = \delta_{ij} \delta_{kl}$. 
It is then easy to verify that the recovery map in the presence of noise is
\begin{align} 
	 R^{(\gamma)} =  e^{\gamma t} R^{(0)} \left[ \mathbb I - (1-e^{-\gamma t}) \frac{\mathbbm{1} }{d_\ext} \right].
\end{align}
In the limit of large $\gamma t$ and large $d_\ext$, $R^{(\gamma)} \approx  e^{\gamma t} R^{(0)}$ gains an exponential factor $e^{\gamma t}$ compared to the noiseless case.
Given an observable $O$, the estimator $o_z^{(\gamma t)} \approx e^{\gamma t} o_z^{(0)}$ also gains the same factor. Therefore, the 
sample complexity of $o_z$ increases by $e^{2\gamma t}$ due to the global depolarizing noise.

Exactly how the sample complexity increases with noise depends on the the details of the noise model. While the above discussion was valid for a simple toy model of depolarizing noise, 
in typical many-body systems and noise models, we expect that the maximally mixed state $\mathbb{I}/d_\text{ext}$ in \cref{eq:noisy_state} can simply be replaced with an equilibrium, thermal density matrix $\rho_\text{eq}$. The inversion map can then be obtained by simply subtracting $\rho_\text{eq}$ and gives the same qualitative behavior discussed above.

For example, in \cref{app:noise}, we demonstrate that the sample complexity in the presence of local dephasing with a constant error rate also increases exponentially with the scrambling quench time.

\section{Rydberg Atom Arrays: Extracting Fidelity, Energy transport, and Entanglement Entropy}\label{sec:rydberg}

\begin{figure*}[t]
\includegraphics[width=0.97\textwidth]{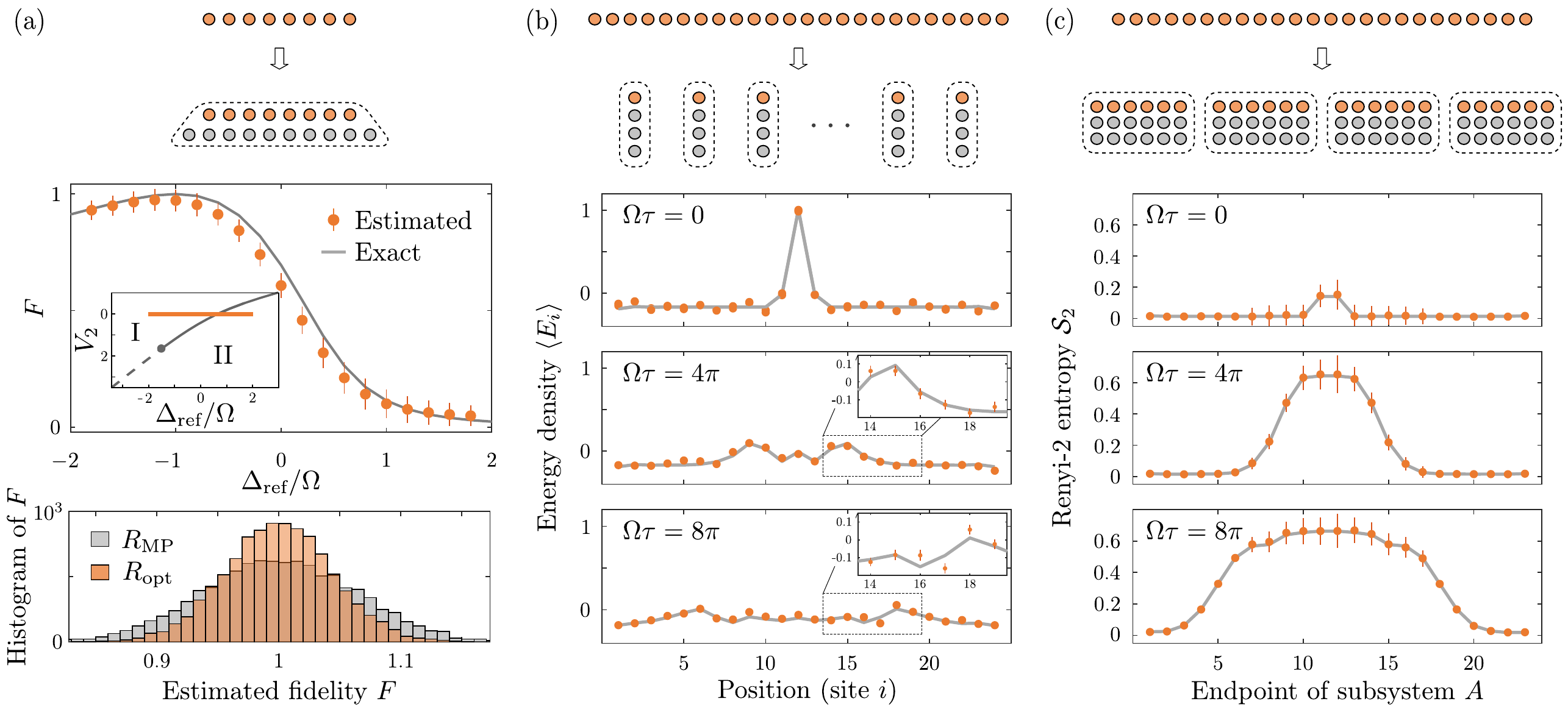}
\caption{
Extraction of various physical information in a system of interacting Rydberg atoms using our protocol.
 Rydberg atoms (orange circles) are coupled to ancillary atoms (gray circles) with different interactivities (denoted by the dashed boxes; distance not to scale): 
only atoms within the same dashed boxes may interact.
(a) We apply the global quench setup onto a state believed to be experimentally prepared in the ground state of $H_\ryd$ [\cref{eq:H_ryd}] at  parameters $(\Delta, V_2) = (-\Omega, 0)$, and extract its fidelity to several reference states: the ground states of $H_\ryd$ at $\Delta_\text{ref} \in [-2\Omega,2\Omega]$ and $V_2 = 0$. 
In the upper panel, we compare the fidelity extracted using a single set of $m = 2000$ samples (orange scatter points) and the exact values (gray solid line).
The error bar is the standard deviation of the estimated fidelity over $100$ independent sets of $m$ samples.
The inset indicates the phase diagram of $H_\ryd$ reproduced from Ref.~\cite{m:slagleQuantumSpinLiquids2022}.
Our considered range of $\Delta_\text{ref}$ (orange line) covers both the disordered (I) and $\mathbb Z_2$ ordered (II) phases. In the lower panels, we compare the distributions for extracted fidelity using the optimal recovery map $R_\text{opt}$ [\cref{eq:optimal-QL}, orange bars] and the state-agnostic Moore-Penrose version $R_{\text{MP}}$ [\cref{eq:naive-QL}, gray bars]. 
We obtain the distributions by generating $10^4$ different sets of $m=2000$ samples and estimate the fidelity from each set.
Note that the extracted fidelity need not be bounded above by one because the recovery map is generally not positive,  and may map each measurement snapshot to an unphysical state. 
The optimal recovery map $R_\text{opt}$  has a provably smaller variance, as corroborated by the numerics, and hence, a smaller sample complexity than $R_{\text{MP}}$.
(b) We extract the local energy density $\avg{E_i}$ [\cref{eq:energy_density}] after a local perturbation in the middle of the chain using a patched quench. This can be done by `fanning out' the atoms of the system into well-separated patches, and coupling each of them to their own ancillary atoms.
The extracted values (orange scatter points) from $m = 2000$ samples agree well with their exact values (gray solid line) and correctly capture the energy diffusion at all times $\tau$ after the perturbation.
(c) We extract the R\'enyi-2 entanglement entropy after the perturbation, using patched quenches with $m = 10^5$ samples. The error bar increases substantially when the subsystem $A$ overlaps with multiple patches.
}
\label{fig:rydberg-all}
\end{figure*}

The remaining part of the paper aims to numerically demonstrate our protocol for realistic, current experimental systems. We begin in this section by applying our protocol to a quantum simulator comprised of arrays of Rydberg atoms and demonstrate its basic capabilities, in particular showcasing how the different ways of coupling ancillae to the system affect its performance. 

Arrays of Rydberg atoms are a leading platform for analog quantum simulation, owing to their strong performance in many aspects including decoherence time, high fidelity quantum gates and readout, and programmability. 
In recent years, advances in analog quantum simulators based on Rydberg atoms have led to the discovery of quantum many-body scars~\cite{m:omranGenerationManipulationSchrodinger2019}, realization of various crystalline phases~\cite{m:bernienProbingManybodyDynamics2017,m:ebadiQuantumPhasesMatter2021}, and observations of signatures of topological order~\cite{m:semeghiniProbingTopologicalSpin2021}.

Here, we consider a linear array of $n_\sys$ interacting Rydberg atoms, each of which can be modeled as a two-level system.
In this example, for simplicity we assume that neighboring sites are blockaded, i.e., they are forbidden from being simultaneously excited. In the language of Rydberg atom quantum simulators, this amounts to working in the so-called ``blockade-radius'' $R_b \sim 1.1 - 1.3$. 
In that case the system can be well modeled by the Hamiltonian
\begin{align} 
	H_\ryd  = \frac{\Omega}{2} \sum_i \mathcal P X_i \mathcal P - \Delta \sum_{i} n_i
	+ V_2 \sum_{\abs{i-j}=2} n_i n_j,\label{eq:H_ryd}
\end{align}
where $\Omega$ represents the Rabi frequency of an external laser field that excites the atoms, $\Delta$ its detuning, $V_2$ the next nearest-neighbor interaction strength, $i,j$ are the positions of the atoms on the lattice, $n_i = (\mathbb I + Z_i)/2$ is the occupation number of the Rydberg state on site $i$, $X,Y,Z$ are the standard Pauli matrices, and $\mathcal P$ is the projector onto the subspace where no two atoms within their blockade radius are simultaneously excited, and we set $\Omega = 1$ for the rest of the section. 

Our objectives will be to characterize the ground states of \cref{eq:H_ryd}, and the dynamics of said states upon perturbation away from equilibrium. We focus on three properties of these states: (i) the quantum state fidelity between an experimentally prepared state against the true reference state, (ii) dynamics of local energy densities, and (iii) dynamics of entanglement entropy after a local perturbation.
These properties are key quantities for many-body physics, and respectively represent low-rank, local, and nonlinear observables. Depending on which observable we are interested in, we use different quench setups (\cref{sec:expansion}), which couple the ancillae to system atoms in different ways, to minimize the sample complexity.

\subsection{State preparation fidelity}
Let $\ket{\Delta,V_2}$ denote the ground state of the Rydberg Hamiltonian in \cref{eq:H_ryd} at the parameters $(\Delta,V_2)$. 
The phase diagram is shown in the inset of \cref{fig:rydberg-all}(a)~\cite{m:slagleQuantumSpinLiquids2022}, and for the parameter regime we are concerned with, hosts two phases: a disordered (I) phase and a $\mathbb{Z}_2$ ordered (II) phase.

In this example, we assume we have experimentally prepared (or alternatively, believe we have prepared) the state $\ket{\Delta,0}$ at $\Delta = -\Omega$ and would like to extract its fidelity to $\ket{\Delta_\text{ref},0}$ for some $\Delta_\text{ref}\in[-2\Omega,2\Omega]$.
In this case, the fidelity is the expectation value of the rank-one projector $\ketbra{\Delta_{\text{ref}},0}$.
Following our discussion in \cref{sec:quench-setups}, as this is a low-rank observable, we globally couple the state to ancillary degrees of freedom in order to minimize the sample complexity. This can be done, for example, by physically moving an ancillary array of atoms in their respective electronic ground states next to the original system, a capability that has been demonstrated using optical tweezers in~\cite{m:bluvsteinQuantumProcessorBased2022}, in order to initiate the `expansion' step of our protocol. 

As depicted in \cref{fig:rydberg-all}(a), we choose $n_\sys = 8$ (orange circles), and $n_\anc =  n_\sys + 2 = 10$ (gray circles), and place the ancillae such that the system and ancilla atoms are mutually blockaded [we choose $n_\anc$ slightly bigger than $n_\sys$ to account for the fact that doubling the system size does not exactly double the Hilbert space, owing to the Rydberg blockade forbidding some states of the extended system]. The choice of $n_\anc$ ensures that $d_\ext$ is well above $d_\sys^2$. 
We then quench evolve the extended system under the Hamiltonian $H_\ryd$ (for example, by driving the atoms with a global laser) at the same parameters as those of the prepared state: $\Delta = -\Omega, V_2 = 0$ for a time $t = 2\pi/\Omega$. Because the system and ancillae interact, the global state $|-\Omega,0\rangle \otimes |g^{\otimes n_\text{anc}} \rangle$ is not an eigenstate of the full Hamiltonian and thus undergoes time evolution.

We subsequently measure the resulting state and extract the expectation value of the projector $\ketbra{\Delta_{\text{ref}},0}$ for various $\Delta_\text{ref}$ by numerically applying the optimal recovery map $R$ [\cref{eq:optimal-QL}] to the measurement data. 
\Cref{fig:rydberg-all}(a) plots the extracted fidelity at $m = 2000$ samples, and showcases how our protocol successfully tracks the exact fidelity even at this relatively small number of measurement samples. 
This allows one to certify the preparation of the ground state and to calibrate experimental parameters to prepare ground states at particular points in the phase diagram. 
We also compare the histograms of the extracted fidelity using the Moore-Penrose recovery map [\cref{eq:naive-QL}]
and the optimal version derived in \cref{eq:optimal-QL}. 
As can be seen, the optimal data extraction using \cref{eq:optimal-QL} results in an almost two-fold reduction of the sample complexity over the use of the Moore-Penrose recovery map.

\subsection{Energy transport}
We next consider probing the hydrodynamics of energy transport in this system. 
Specifically, we imagine preparing an array of $n_\sys = 24$ atoms in the ground state $\ket{\Delta,V_2}$ of $H_\ryd$ at $\Delta = -\Omega$ and $V_2 = 0$. 
We then perturb the middle (twelfth) site by applying a $\pi/2$ rotation about the $y$-axis (in the blockaded Hilbert space):
\begin{align} 
	\ket{\psi(0)} = e^{-i \frac{\pi}{2}\mathcal{P} Y_{12} \mathcal{P} }\ket{\Delta,0},\label{eq:perturbation}
\end{align}
thus bringing the system out of equilibrium by introducing a slight excess of energy localized at the middle of the chain. 
We then let the perturbed system evolve under the same Hamiltonian $H_\ryd(-\Omega,0)$ for time $\tau$ and aim to extract the local energy density
\begin{align} 
	E_i \equiv  \frac{\Omega}{2} \mathcal P X_i \mathcal P - \Delta n_i~. \label{eq:energy_density}
\end{align}
at various $\tau$s. Note that the time $\tau$ here denotes the free evolution time in the hydrodynamics experiment (``physics quench") and is to be distinguished from $t$, the quench time of the extended system in our protocol (``measurement quench"), which is the time that the global system is evolved for after bringing in the ancillae. We have also chosen to set $V_2=0$ in this hypothetical experiment such that the energy density \cref{eq:energy_density} is a strictly single-site observable.
Since the total energy is conserved, the energy density necessarily obeys a continuity equation, and so the initial excess of energy at the middle of the chain is expected to spread to neighboring sites over time $\tau$. 

The discussion in \cref{sec:expansion} (and \cref{sec:quench-setups}) suggests that it is most efficient to extract the local energy density using a local patched quench, i.e., by coupling separate ancillae to local system degrees of freedom. 
In this incarnation of our protocol, we therefore imagine first physically moving the atoms apart to distances such that each atom can be considered to be isolated, using optical tweezer rearrangement capabilities. We then couple each system atom to three introduced ancillary atoms [\cref{fig:rydberg-all}(b), orange and gray circles, respectively] and let the extended system evolve under $H_\ryd$ at $\Delta = -\Omega, V_2 = 0$ for quench time $t = 4 \times 2\pi/\Omega$, before measuring. 
In \cref{fig:rydberg-all}(b), we plot the local energy density $\avg{E_i}$ extracted from post-processing $m = 2000$ measurements samples at different physical evolution times $\tau \in \{0,2,4\}\times 2\pi/\Omega$. As can be seen from the overlay of the exact values (solid line), the estimated values (dots) correctly capture the hydrodynamics of energy transport of the system for all $\tau$.

We briefly comment on the computational complexity of data post-processing. To extract the fidelity using the global setup, we have to numerically compute the \scrambler~$S$, which is a $d_\ext \times d_\sys^2$. Therefore, extracting information using the global setup is only feasible for small systems. In contrast, in the local setup used to measure the local energy density, the \scrambler~$S$ factorizes into a tensor product of $n_\sys$ \scrambler s, the size of which depends only on the dimension of the local extended system.
The computational cost of processing the snapshots from the local setup only increases linearly with total system size $n_\sys$, making such measurements feasible for large systems.

\subsection{Entanglement dynamics}\label{sec:entropy}
Lastly, we  study the dynamics of entanglement entropy in the same nonequilibrium experiment.
Concretely, after free evolution time $\tau$, we aim to extract the R\'enyi-2 bipartite entanglement entropy across various bipartitions that divide the system into subsystem $A$ comprised of the first $l$ sites and  subsystem $B$ comprised of the remaining sites:
\begin{align} 
	\mathcal S_2(A) \equiv - \log \Tr(\rho_A^2) = - \log \Tr(\rho_B^2), \label{eq:renyi2-def}
\end{align}
where $\rho_A \equiv \Tr_{B}(\rho)$ is the reduced density matrix of the subsystem $A$. While the R\'enyi-2 entropy is a 
quantity that depends non-linearly on the state $\rho$, it can be obtained from the expectation value of a linear operator $\mathbb{S}^A$ that is the swap operator that permutes two identical copies of the system $A$, evaluated within the replicated state $\rho \otimes \rho$: $\langle \mathbb{S}^A \rangle = \Tr(\rho_A^2)$.
As such, its estimation from measurement data requires only a simple modification of \cref{eq:ozdef}:
 \begin{align}
\langle  \mathbb{S}^A \rangle &\approx \frac{1}{m} \sum_{i,j = 1}^{m/2} s^A_{z_i,z_j}, \quad s^A_{z_i,z_j} = \obra{\mathbb{S}} R^{\otimes 2}|z_i, z_j\rangle,\label{eq:estimate_S}
\end{align}
where we divide the $m$ measurement snapshots into two independent sets $\{ z_i \}_{i=1}^{m/2}$ and $\{z_j\}_{j=1}^{m/2}$. 
In practice (and in what is demonstrated in our numerics), the following so-called $U$-statistics offers a more sample-efficient estimator of $\avg{\mathbb S^A}$~\cite{m:hoeffdingClassStatisticsAsymptotically1948,m:huangPredictingManyProperties2020}:
\begin{align} 
	 \avg{\mathbb{S}^A} \approx \frac{2}{m(m-1)} \sum_{1\leq i<j\leq m} s^A_{z_i,z_j}. 
\end{align}
Note that we can additionally define an estimator for $\mathbb S^B$, the operator that swaps two identical copies of the subsystem $B$.
Because the initial state is pure, the estimators of $\mathbb S^A$ and $\mathbb S^B$ converge to the same value as $m\rightarrow \infty$.
However, they may have different variances with a finite number of samples $m$ (i.e., their sample complexities may be different).
In our numerics, we compute both estimators from each set of samples and use the one with lower variance to estimate the R\'enyi-2 entropy.

To implement our protocol, we choose to arrange our ancillae atoms in a patched setup [\cref{fig:schemes}(b)].
Because 
the swap operator $\mathbb{S^{A}}$ (or $\mathbb{S^{B}}$) acts globally on the $A$ (or $B$) subsystems in the two-copy Hilbert space, we expect that the single-site patched setup used to estimate energy densities [\cref{fig:rydberg-all}(b)] would result in a high sample complexity. Instead, for an efficient extraction, we expect the optimal patch configuration to contain either the $A$ or $B$ subsystem.

Here, to balance computational and sample complexities, we separate the system of $n_\sys = 24$ atoms into four patches of six atoms each [\cref{fig:rydberg-all}(c), dashed boxes]. 
We couple each patch to $12$ ancillary atoms, which guarantees $d_\ext>d_\sys^2$ and also allows for easy classical simulability.
Following the free evolution $\tau$, we again quench the extended system under $H_\ryd$ at $\Delta = -\Omega, V_2 = 0$ for a measurement quench time of $t = 6 \times 2 \pi/\Omega$, before measuring.
We plot in \cref{fig:rydberg-all}(c) the extracted $\mathcal S_2(L)$ at different $\tau\in \{0,2,4\}\times 2\pi/\Omega$ and compare it with the exact values in \cref{fig:rydberg-all}(c). Our results demonstrate the successful extraction of the dynamics of entanglement entropy. 
Also, we observe that the sample complexity only depends on the number of patches that overlap with the subsystem $A$---it increases dramatically when the subsystem $A$ contains more than one patch, validating our expectations.

\section{Fermions on an Optical Lattice: Distinguishing s-wave from d-wave Superconductivity}\label{sec:bcs}

\begin{figure*}[t]
\includegraphics[width=0.8\textwidth]{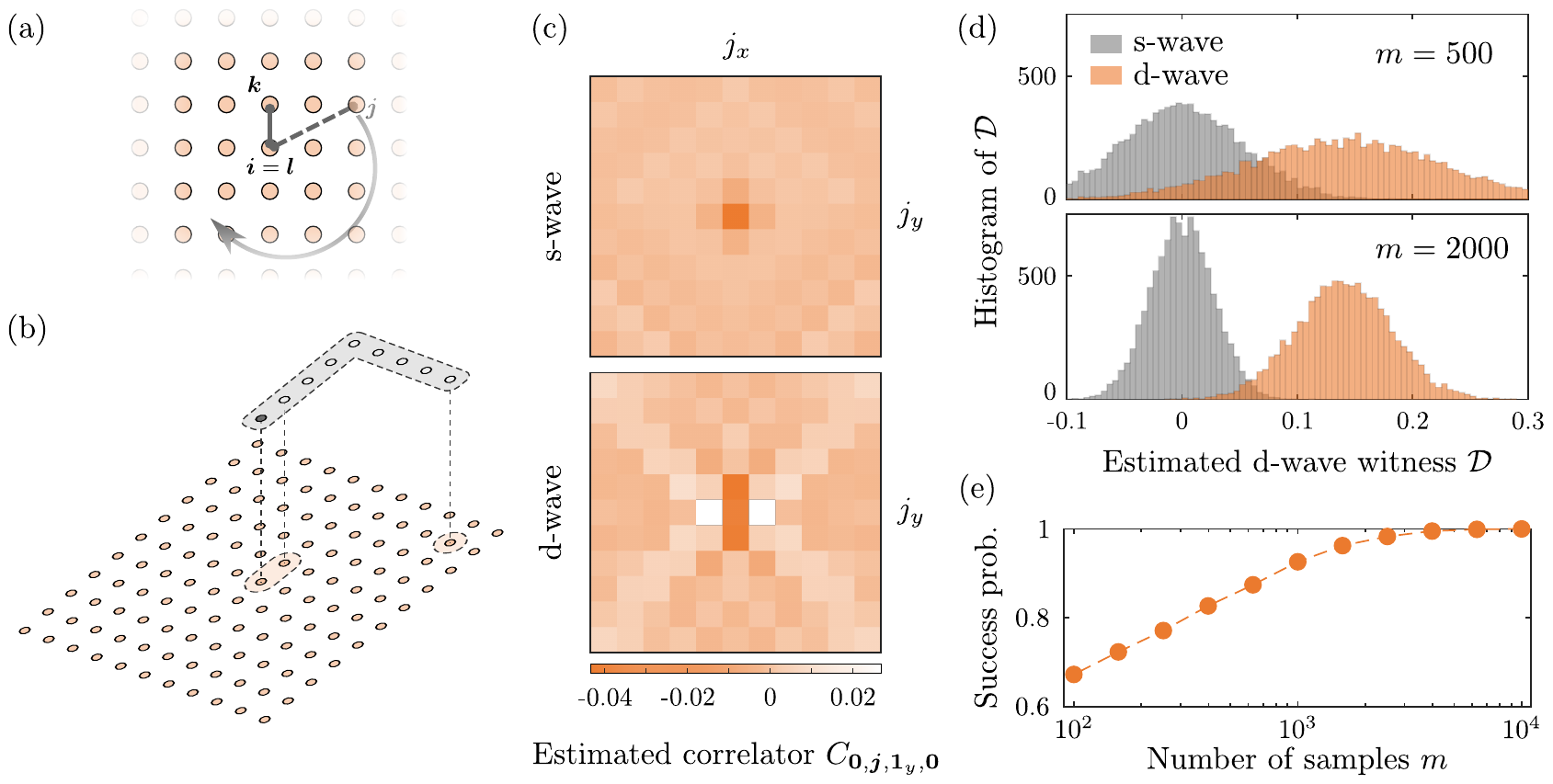}
\caption{Extraction of the superconducting pairing order parameter in a system of itinerant fermionic particles on a square optical lattice, using our protocol.
(a) We fix $\bm i, \bm k,\bm l$ and measure the correlator $C_{\bm i,\bm j,\bm k,\bm l}$ at different $\bm j = (j_x,j_y)$ on the two-dimensional lattice (orange dots).
In a $d$-wave superconductor, we expect $C_{\bm 0,\bm j,\bm 1_y,\bm 0}$ to change signs as $\bm j$ (dashed line) rotates around $\bm 1_y$ (solid line).
(b) To measure the correlators, we allow particles to tunnel vertically into the ``bridge''---an ancillary one-dimensional lattice (gray dots)---that connects the support of $C_{\bm i,\bm j,\bm k,\bm l}$.
The bridge initially has two fermions, one for each spin orientation, both located at one of its ends (denoted by the dark circle in the bridge).
(c) The correlators $C_{\bm 0,\bm j,\bm 1_y,\bm 0}$, each extracted using $m = 10^6$ samples, reveal patterns consistent with $s$-wave and $d$-wave pairing in the initial states. 
(d) The distribution for the estimates of the $d$-wave witness $\mathcal D$ (\cref{eq:d-witness-def}) obtained using $m=500$ and $m=2000$ samples.
We obtain the histograms by generating $10^4$ independent sets of $m$ samples and estimate $\mathcal D$ using each set. 
(e) The success probability in distinguishing $s$-wave and $d$-wave pairing using the witness $\mathcal D$ as a function of the number of samples $m$. 
We can distinguish between the $s$-wave (gray) and $d$-wave (orange) ans\"atze with near-100\% probability with only $m = 2000$ samples. 
}
\label{fig:bcs-all}
\end{figure*}

In this section, we demonstrate an example where the bridged setup discussed in \cref{sec:quench-setups} is required to overcome symmetry constraints.
Concretely, we consider a low-temperature  system of fermions in an optical lattice prepared in an unknown superconducting state, and discuss extracting signatures of their long-range pairing order, which is the expectation value of creating a Cooper pair in one region space and creating another pair in a far-away region in space. 
The dynamics of such fermions is well described by the Fermi-Hubbard model, which is also a paradigmatic model of high-temperature superconductivity in which it is believed that pairing is mediated by spin fluctuations~\cite{m:emeryTheoryHighMathrmT1987,lee2006doping,m:keimerQuantumMatterHightemperature2015}. However, due to its complexity and the presence of strong interactions, such conjectures have not been verified theoretically or numerically. Analog quantum simulators are able to simulate large systems, and show promise in shedding light on the nature of superconductivity in the Fermi-Hubbard model~\cite{m:hartObservationAntiferromagneticCorrelations2015,chiu2019string,m:hartkeDoublonHoleCorrelationsFluctuation2020,m:jiCouplingMobileHole2021}. In particular, high-temperature superconductors are known to exhibit unconventional, $d$-wave superconductivity~\cite{m:tsueiPairingSymmetryCuprate2000}. Verifying such superconducting order in an analog quantum simulator would constitute an important experimental milestone.

Specifically, we consider a system of spin-1/2 fermions on a two-dimensional square lattice of linear size $L = 11$.
We assume that the system is in a Bardeen-Cooper-Schrieffer (BCS) state $\ket{\psi}$ 
with either $s$-wave or $d$-wave pairing order~\cite{m:grosSuperconductivityCorrelatedWave1988} and our task is to distinguish this. Namely, we take
\begin{align} 
	\ket{\psi} \propto \exp\left(\sum_{\bm k} a(\bm k) c_{\bm k,\uparrow}^\dag c_{-\bm k,\downarrow}^\dag\right) \ket{0},
\end{align}
where $\ket{0}$ is the vacuum and $c_{\bm k,\sigma}^\dag$ is the fermionic operator that creates a fermion with momentum $\bm k = (k_x,k_y)$ and spin $\sigma \in \{\uparrow,\downarrow\}$. 
The function $a(\mathbf k)$ is given by
\begin{align} 
	a(\mathbf k) &= \Delta(\bm k)/\left[\xi_{\bm k} + \sqrt{\xi_{\bm k}^2 + \Delta^2(\bm k)}\right],
\end{align}
where the dispersion and gap functions are given respectively by
\begin{align}
	\xi_{\bm k} &= -2 (\cos k_x + \cos k_y) - \mu,\\
	\Delta(\bm k) &= 
	 \begin{cases}
		\Delta & \text{(s-wave)}\\
		\Delta(\cos k_x - \cos k_y) & \text{(d-wave)}
	\end{cases}.\label{eq:ak}
\end{align}
In particular, we choose $\mu = 1/2, \Delta = 5$, resulting in an average of $\bar \eta \approx 1.08$ and $\bar \eta\approx 0.9$ total fermions per site respectively in the s- and d-wave states, near half-filling. 
Note that such a state has indefinite particle number, and that it is a fermionic Gaussian state: this allows for easy numerical simulations as its reduced density matrices can be efficiently  constructed and sampled from~\cite{m:chungDensitymatrixSpectraSolvable2001}.

To reveal the pairing order of the state, we extract the correlators
\begin{align} 
	C_{\bm i,\bm j,\bm k, \bm l} \equiv  \bra{\psi}c_{\bm i,\uparrow}^\dag c_{\bm j,\downarrow}^\dag c_{\bm k,\uparrow}^{\mathstrut} c_{\bm l,\downarrow}^{\mathstrut} \ket{\psi}~,
	\label{eqn:long-range_correlator}
\end{align}
where $\bm i, \bm j, \bm k, \bm l$ are real space positions of sites on the lattice [\cref{fig:bcs-all}(a)].
This correlation function corresponds to annihilating a Cooper pair on sites $\bm k, \bm l$ and creating another on sites $\bm i,\bm j$. In a superconducting state with $d$-wave pairing, the correlator changes sign depending on the relative angles between the vectors $\bm i - \bm j$ and $\bm k - \bm l$. 
In contrast, an $s$-wave superconductor is isotropic and the correlator does not depend on this relative angle. Our correlator can be viewed as a real space analog of conventional momentum-space pairing correlation functions~\cite{m:jiangHighTemperatureSuperconductivity2021}. 
In our numerics, we fix $\bm i = \bm l$ to be the center of the lattice $\bm{0} = (0,0)$, fix $\bm k$ to be the site above $\bm{i}$: $\bm k = \bm{1}_y = (0,1)$, and extract $C_{\bm 0,\bm j, \bm{1}_y, \bm 0}$ at different positions $\bm j$.

The correlation function \cref{eqn:long-range_correlator} cannot be measured with our protocol, using local patched quenches (that is, when the regions $\{\bm{0}, \bm{1}_y\}$ and $\{\bm j\}$ are coupled to their own set of ancillae).
This is because such a quench preserves the particle numbers of each patch, while the correlation function involves annihilating and creating electrons on different sites and would not be accessible from the corresponding measurement snapshots.
Therefore, to implement our protocol, we have to design the interactivity of the system and ancillae carefully. We first envision  isolating the sites $\{\bm{0}, \bm{1}_y\}$ and $\{\bm j\}$ from the rest of the state, then coupling the system vertically to an ancillary second layer [\cref{fig:bcs-all}(b), gray circles], which forms a ``bridge" allowing particle exchange between $\{\bm{0}, \bm{1}_y\}$ and $\{\bm j\}$.
The bridge is initially empty, except for a site at one end of the bridge where it is filled with two electrons (one for each spin orientation). 
Such bilayer systems have been demonstrated in recent experiments, including Refs.~\cite{m:gallCompetingMagneticOrders2021,m:hartkeDoublonHoleCorrelationsFluctuation2020}. 
We then evolve the extended system under the Fermi-Hubbard model
\begin{align} 
	H_{\FH} = - J \sum_{(\bm i,\bm j)\in \mathcal A}\sum_{\sigma}c_{\bm i,\sigma}^\dag c_{\bm j,\sigma}^{\mathstrut} + U\sum_{\bm i} n_{\bm i,\uparrow}n_{\bm i,\downarrow}~,
\end{align}
at values $(J,U) = (1,1.5)$ for time $t = 2/J$, and measure the site-resolved occupation numbers of the extended lattice. 
Above, the set $\mathcal A$ contains vertically aligned sites and all nearest-neighbor pairs within each layer [sites which are boxed in \cref{fig:bcs-all}(b) and also joined vertically with dashed lines].

Next, we compute the recovery map according to \cref{sec:framework} and use it to obtain an estimate of 
$C_{\bm 0,\bm j, \bm{1}_y, \bm 0}$ from the measurement snapshots.
Now, since the scrambling quench only involves the sites in the support of $C_{\bm 0,\bm j, \bm{1}_y, \bm 0}$, 
 the recovery map $R$ is only sensitive to the bit-string data supported on sites in $\mathcal{A}$ only.
That is to say, even though we may be measuring the global system to produce global bit-strings, we post-process the data ignoring the bit-string information of sites outside of $\mathcal{A}$---essentially, tracing those degrees of freedom out.
Hence, our current numerical simulations benefit from 
a convenient tracing out of those sites~\footnote{We can obtain any reduced density matrix of the BCS states by following the procedure in Ref.~\cite{m:chungDensitymatrixSpectraSolvable2001}. Note that Eq.~(16) in Ref.~\cite{m:chungDensitymatrixSpectraSolvable2001} should read $2\alpha = a^{11}-ca^{22}c^T$.} before applying the scrambling quench.
In \cref{fig:bcs-all}(c), we plot the correlators $C_{\bm 0,\bm j, \bm{1}_y, \bm 0}$
at different sites $\bm j$ for $s$- and $d$-wave BCS states; each site $\bm j$ requires $m = 10^6$ samples. The extracted correlators indeed display patterns characteristic of their respective spatial symmetries. 

While visually appealing, constructing such a detailed spatial map requires a relatively large number of samples. If we are instead interested in the simpler task of just {\it distinguishing} between $s$- and $d$- wave superconductivity,  we may achieve this with far fewer samples.
This can be accomplished by defining a $d$-wave witness
\begin{align} 
	\mathcal{D} \equiv \sum_{\bm j\in \mathcal{N}(\bm i)} \chi_{\bm j}  
	C_{\bm 0,\bm j, \bm{1}_y, \bm 0},\label{eq:d-witness-def}
\end{align}
where $\mathcal{N}(\bm i)$ is the set of four neighbor sites that neighbor $\bm i$ and $\chi_{\bm j}$ is a filter function taking value $-1$ when $\langle\bm{ij}\rangle$ is a vertical bond and $+1$ when $\langle\bm{ij}\rangle$ is horizontal.
Clearly, $\mathcal D = 0$ if 
$C_{\bm 0,\bm j, \bm{1}_y, \bm 0}$ is isotropic (corresponding to the $s$-wave pattern) and $\mathcal D \neq 0$ otherwise.
In \cref{fig:bcs-all}(d), we plot the histograms of the values of $\mathcal D$ one would get at different sample sizes.
As in \cref{sec:rydberg}, these histograms are of the estimated $\mathcal D$ obtained by simulating a large number ($10^4$ independent sets) of $m$ samples. Even at moderate $m$, one sees that there is a discernible difference between the peaks of the histogram ($\mathcal{D} \approx 0$ for $s$-wave and $\mathcal{D} \approx 0.136$ for $d$-wave). 
These peaks become more sharply defined and well-isolated for larger $m$, indicating that the success probability of correctly identifying $d$-wave pairing from any single fixed set of $m$ measurement snapshots tends to unity as $m \to \infty$.

Indeed, to estimate the success probability quantitatively, we simply define, for each $m$, the fraction of $m$-sample sets that have $\mathcal{D} \geq \mathcal D^*\approx 0.068$, where $\mathcal{D}^*$ is the mid-point between the two peaks in \cref{fig:bcs-all}(d).
\cref{fig:bcs-all}(e) shows the success probability as a function of $m$. We see that a relatively modest sample size of $m \sim 2000$ is already sufficient to reliably (i.e., with high probability) distinguish $s$- from $d$-wave ordering.

Finally, we note that, in this numerical example, we extract each correlator $C_{\bm i,\bm j,\bm k, \bm l}$ using a different set of $m$ samples. Therefore, the total numbers of samples used in computing the spatial pattern in \cref{fig:bcs-all}(c) and the witness in \cref{fig:bcs-all}(d) are $L^2m$ and $4m$, respectively. 
However, in practice, different correlators $C_{\bm 0,\bm j,\bm 1_y, \bm 0}$ can be extracted in parallel from the same experimental runs if the quench bridges are defined on non-overlapping regions of a large two-dimensional lattice.
If the initial BCS state is translationally invariant, for example, the four correlators that contribute to the witness $\mathcal D$ can be simultaneously measured on different sublattices and $\mathcal D$ can be extracted using $m$ instead of $4m$ samples. Indeed, these $4m$ correlator samples can even be obtained from a \textit{single} experimental shot on a state with $\sim 11m$ lattice sites, partitioned into small patched quenches of various geometries.

\section{Bosons on an Optical Lattice: Extracting Signatures of Topological Order}\label{sec:bose-hubbard}

\begin{figure*}
    \centering
    \includegraphics[width=0.95\textwidth]{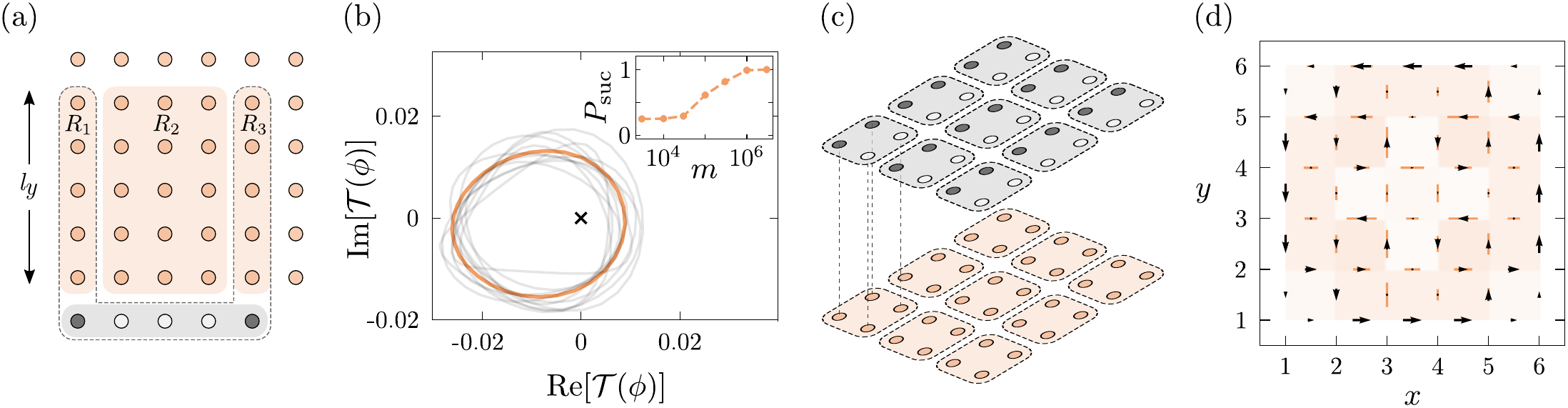}
    \caption{Measurement of many-body Chern number (MBCN) and bond currents in the Hofstadter-Bose-Hubbard (HBH) model. 
    (a) Quench scheme for measurement of MBCN. This quantity is estimated by a quench on a small patch: coupling rectangular regions $R_1$ and $R_3$ (orange vertical strips of length $l_y$) with five additional sites containing two bosons (gray horizontal strip), followed by projective measurement on $R_2$ (large orange block).
    (b) We illustrate the 
    parametric plot of the quantity $\langle \mathcal{T}(\phi)\rangle$ as a function of $\phi \in [0,2\pi)$ for the ground state of a HBH model, with three particles on 36 sites, and a flux of $2\pi \alpha =\pi/2$ per plaquette in open boundary conditions. The true curve of $\mathcal{T}(\phi)$ is plotted in orange, which we see winds around the origin once, corresponding to a MBCN of $\mathcal{C} = 1$. We plot in gray several representative curves of $\hat{\mathcal{T}}(\phi)$ estimated with $m = 3\times 10^6$ random samples. 
    Such curves deviate from the ideal $\mathcal{T}(\phi)$, but nevertheless all have winding number 1, hence their estimated MBCN $\hat{C}$ are all 1. Inset: The probability $P_\text{suc}$ of obtaining the correct MBCN is plotted as a function of number of samples.
    Beyond a critical number of measurements $\sim 3\times 10^5$, the MBCN can be reliably estimated.  
    (c) Quench scheme to measure currents across every bond. We isolate plaquettes of four lattice sites and couple it to an additional plaquette of four sites and two bosons in a second layer. In order to determine the currents on every bond, we perform two sets of patched quenches on non-overlapping plaquettes (not shown). 
    (d) The ground state shows a pattern of persistent edge currents. The statistical uncertainty arising from 5000 sample measurements is indicated by the orange bars: for this state, 5000 measurements are sufficient to achieve a good signal-to-noise ratio. As a visual aid, we color each plaquette with the curl of the current in that plaquette.}
    \label{fig:BH}
\end{figure*}

Having demonstrated several variations of our protocol in the previous two sections, we consider a final example that highlights our protocol in extracting exotic physical properties in current experiments with limited control and readout capabilities. Specifically, we consider a system of itinerant bosonic particles hopping on an optical lattice with a gauge field, modeled by the Hofstadter-Bose-Hubbard (HBH) Hamiltonian.
This model has been investigated both theoretically and experimentally as a likely host of bosonic fractional quantum Hall states of matter~\cite{m:cooperFractionalQuantumHall2020}. However, it is extremely challenging to directly measure signatures of topological order in current experiments, as this either requires the extraction of complicated quantities like global winding numbers or nontrivial Wilson loops.

The HBH Hamiltonian is given by~\cite{m:aidelsburgerMeasuringChernNumber2015}
\begin{align}
    H_\text{HBH} = &-J \sum_{x,y} \L(b^\dagger_{x+1,y} b_{x,y}^{\mathstrut} + e^{2\pi i \alpha x} b^\dagger_{x,y+1} b_{x,y}^{\mathstrut} + \text{h.c.}\R) \nonumber\\
    &+ \frac{U}{2} \sum_{x,y} n_{x,y} \L(n_{x,y}-1\R),
    \label{eq:HBH_ham}
\end{align}
where $b^\dagger_{x,y}$ is the bosonic creation operator on site $(x,y)$ and $n_{x,y} = b^\dagger_{x,y} b_{x,y}$ measures the number of bosons on that site. Above, we have assumed that the model is of $N$ bosons hopping on a two-dimensional $L_x \times L_y$ rectangular lattice, with a flux of $2\pi \alpha$ per plaquette and on-site interaction $U$.
Experimentally, such complex-valued tunneling phases have been realized through techniques such as laser assisted tunneling~\cite{m:aidelsburgerMeasuringChernNumber2015,m:taiMicroscopyInteractingHarper2017,m:leonardRealizationFractionalQuantum2022}.

In the thermodynamic limit, the HBH model hosts a rich ground state phase diagram (\cref{app:MBCN}). 
In particular, the Laughlin state at a filling fraction $\nu = N/N_\text{flux} = 1/2$ (where $N_\text{flux} = 2\pi \alpha L_x L_y$ in periodic boundary conditions) is a fractional quantum Hall state and has been thoroughly investigated, numerically and analytically. It is also observed to be robust to finite size effects as well as boundary conditions~\cite{m:petrescuPrecursorLaughlinState2017,m:dehghaniExtractionManybodyChern2021}. Here, we apply our protocol to characterize this state---namely, the ground state of the HBH model with a flux of $\alpha = 0.25$ on a  $L_x=L_y=6$ lattice with three bosons, and with an on-site interaction strength of $U/J=5$, 
and attempt to extract complex quantities that are typically difficult to measure: the so-called many-body Chern number, as well as the currents along every bond (i.e., bond currents). 

\subsection{Many-body Chern number}
We first briefly explain the many-body Chern number (MBCN),  introduced by Niu, Thouless and Wu in~\cite{m:niuQuantizedHallConductance1985} as a generalization of the single-particle Chern number, a topological invariant that characterizes the topology of a single-particle band. The MBCN is defined by placing the system on a torus and threading fluxes through its non-contractible loops. This amounts to enforcing twisted boundary conditions, with twist angles $\theta$ and $\phi$ serving as the amount of threaded flux. As with the single-particle case, the MBCN is defined through the derivatives of the ground state $\ket{E_0}$ with twist angle:
\begin{equation}
\mathcal{C} = \frac{1}{2\pi i} \int_0^{2\pi} d\phi \int_0^{2\pi} d\theta \left(\braket{\partial_\theta E_0}{\partial_\phi E_0} - \braket{\partial_\phi E_0}{\partial_\theta E_0}\right)~.\label{eq:NTW_MBCN}
\end{equation}
While conceptually appealing, Eq.~\eqref{eq:NTW_MBCN}
as stated is impractical to experimentally realize. Instead, conventional experimental approaches to estimate such a topological quantity rely on indirect methods, such as measuring response functions like the Hall conductivity, which are proportional to the MBCN in the thermodynamic limit. For example, Ref.~\cite{m:aidelsburgerMeasuringChernNumber2015} measured the Hall conductivity of a non-interacting model upon applying a linear potential, and Refs.~\cite{m:motrukDetectingFractionalChern2020,m:repellinFractionalChernInsulators2020} proposed schemes to do so in interacting models. As other examples, Ref.~\cite{m:taiMicroscopyInteractingHarper2017} observed signatures of chiral currents in a ladder geometry via time-of-flight measurements, and Ref.~\cite{m:leonardRealizationFractionalQuantum2022} observed signatures of a fractional quantum Hall state by measuring particle densities and their correlation functions.

It would be ideal to extract the MBCN in a direct fashion, given an experimentally prepared topological state. 
To this end, we consider a recently-introduced, novel method to measure the MBCN from a \textit{single} wavefunction,  proposed by Ref.~\cite{m:dehghaniExtractionManybodyChern2021}, and derived using techniques from topological quantum field theory. 
It involves measuring the winding number of a complex-valued observable $ \mathcal{T}(\phi)$, with $\phi$ winding from $0$ to $2\pi$:
concretely, the MBCN $\mathcal{C}$ is given by 
\begin{align}
    \mathcal{C} &= \frac{1}{2\pi i} \oint \frac{d\langle \mathcal{T}(\phi) \rangle}{\langle \mathcal{T}(\phi) \rangle}.
    \label{eqn:winding}
\end{align}
While formally theoretically justified   in the limit of large lattices and periodic or cylindrical boundary conditions, it was numerically verified that this formula accurately estimates the MBCN even in open boundary conditions and moderate system sizes, in which case the operator  $\mathcal{T}(\phi)$ is defined as follows.
On a rectangular lattice, choose three disjoint, rectangular subsystems $R_{1,2,3}$, of width $l_y$ [\cref{fig:BH}(a)] 
Then   $\mathcal{T}(\phi)$ is composed of a product of operations on $R_{1,2,3}$:
\begin{equation}
    \mathcal{T}(\phi) = W_{1}^\dagger(\phi) \mathbb{S}_{1,3} W_{1}^{\mathstrut}(\phi) V_{1}^s V_{2}^s~,
    \label{eqn:T_operator}
\end{equation}
where $V_j$ is the so-called \textit{polarization operator} acting on region $R_j$,
$W(\phi)$ is the so-called \textit{twist angle} operator with variable $\phi \in [0,2\pi)$, and the swap operator $\mathbb{S}_{1,3}$ exchanges the regions $R_1$ and $R_3$:~\cite{m:cianManyBodyChernNumber2021}
\begin{gather}
 V_j = \prod_{(x,y) \in R_j} \exp(i \frac{2\pi y} {l_y} n_{x,y})~,\\
 W_j(\phi) = \prod_{(x,y)\in R_j} \exp(i n_{x,y}\phi)~,\\ 
\mathbb{S}_{1,3} = \bigotimes_{\substack{(x,y) \in R_1\\ (x',y) \in R_3}} \sum_{a,b=1}^{d_\text{loc.}} \ketbra*{a^{\mathstrut}_{(x,y)}b_{(x',y)}}{b_{(x,y)}a^{\mathstrut}_{(x',y)}}~,
\end{gather}
In \cref{eqn:T_operator}, $s$ is an integer which equals the expected ground state degeneracy, with $s=2$ for the $\nu = 1/2$ Laughlin state~\cite{m:dehghaniExtractionManybodyChern2021}, and $a,b$ are basis elements of the $d_\text{loc.}$-dimensional on-site Hilbert space.

We employ our protocol to measure $\mathcal{T}(\phi)$, and hence $\mathcal{C}$, using \cref{eqn:winding}.
Concretely, on the system of $6\times 6$ square lattice with $3$ bosons we are considering, we choose $R_1$ and $R_3$ to be disjoint, well-separated $5\times 1$ rectangular strips and $R_2$ to be the intervening $5 \times 3$ rectangular block separating the two. 
This choice of subsystems is the largest one for which the map $S$ is numerically invertible.
In order to minimize experimental requirements, we exploit the fact that $V_j$ (and $W_j$) is diagonal in the standard measurement basis and can be measured from standard measurements on $R_j$.
The only observable that requires our protocol is the swap operator $\mathbb{S}_{1,3}$. To estimate $\mathcal{T}(\phi)$, it suffices to quench evolve a single patch $R_1 \cup R_3$ to measure $W_1^\dagger (\phi) \mathbb{S}_{1,3} W_1(\phi) V_1^s$; the scrambling map is only defined on $R_1 \cup R_3$, and $V_2^s$ can be simultaneously measured by projective readout on $R_2$; the remaining sites can either be left unmeasured or their measurement outcomes can be traced out. 
We therefore imagine coupling $R_1$ and $R_3$ to an ancillary system of five additional sites with two bosons in a bridged quench  [\cref{fig:BH}(b)]~\footnote{For numerical tractability, we disregarded all instances of having three bosons on $R_1 \cup R_3$ (since these are extremely rare ($<0.1 \%$) and computationally expensive to invert)}, before quench evolving the joint system under the same HBH Hamiltonian for 10 hopping times: $T = 10/J$, and then measuring.

 The results of our numerical experiments are plotted in Fig.~\ref{fig:BH}(b). With a finite number of measurement samples $m$, there are statistical fluctuations: each gray curve   is a parametric plot of $\langle \mathcal{T}(\phi) \rangle$, extracted from a single instance of an experimental run with $m = 3\times 10^6$ measurement samples---we see that the curve $\langle \mathcal{T}(\phi) \rangle$ is deformed from its true curve (orange loop). Despite this, we notice the winding number around the origin is always faithfully estimated. If the statistical fluctuations are large enough (when $m$ is too small) that the curve $\langle \mathcal{T}(\phi)\rangle$  is deformed, it may not enclose the origin and the winding number will be incorrectly estimated.
Therefore, the number of measurements must be sufficiently large in order to reliably estimate MBCN.
We plot the success probability $P_\text{suc}$ of obtaining the correct MBCN with $m$, defined as the fraction of experimental runs with fixed measurement samples $m$ that reproduce the correct winding number [Fig.~\ref{fig:BH}(b) inset]. We see that above  $m \sim 3\times 10^5$, there is a high  probability ($P_\text{suc} >  0.8$) of measuring the correct MBCN. This number is approximately determined by the signal-to-noise of $\mathcal{T}(\phi^*)$ at the angle $\phi^*$ where $\abs{\langle\mathcal{T}(\phi^*)\rangle}$ is smallest: $m^* \approx \text{Var}_z(\abs{\langle\mathcal{T}(\phi^*)\rangle})/\abs{\langle\mathcal{T}(\phi^*)\rangle}^2$.
While this number is relatively high, we note that this can potentially be reduced by judicious choice of subsystem and total system sizes: for example, the MBCN can be reliably measured on a smaller system system of $4 \times 4$ lattice sites and 3 bosons with $\sim 2000$ measurements (data not shown)

Before moving on, we make two remarks. First, our extraction of MBCN in this example is performed for systems which are smaller than those investigated in Ref.~\cite{m:dehghaniExtractionManybodyChern2021,m:cianManyBodyChernNumber2021}. 
However, while the MBCN is in general fairly sensitive to choices of parameter such as subsystem sizes and positions,  we verify that it remains robust in our regime of interest, faithfully capturing the transition into the Laughlin state, in agreement with conventional observables such as the density of doublons~\cite{m:palmBosonicPfaffianState2021}~(\cref{app:MBCN}). 
Second, Ref.~\cite{m:cianManyBodyChernNumber2021} has proposed an alternative protocol to measure the MBCN with \textit{randomized measurements}: the application of random unitary gates on the state of interest, followed by measurements in the computational basis. However, despite experimental proposals and limited realizations with disordered potentials, applying random unitary gates in a system of bosonic itinerant particles remains challenging: approaches in the literature are based off quench evolutions with random potentials, which were numerically argued to converge to a two-design~\cite{m:ohligerEfficientFeasibleState2013,mm:vermerschUnitaryDesignsRandom2018}. 
The virtue of our protocol is twofold: it does not require any fine-grained control of dynamics and it does not rely on any assumptions on the formation of a two-design, therefore can be trusted to be quantitatively accurate.

\subsection{Bond currents}
Besides the MBCN, the presence of chiral edge currents  can also serve as a hallmark for topological order. 
In particular, different phases of matter in the HBH model 
are distinguishable by the spatial patterns of their bond currents, like  the Meissner~\cite{m:petrescuBosonicMottInsulator2013} and vortex-lattice phases~\cite{m:piraudVortexMeissnerPhases2015,greschner2015spontaneous} in ladder geometries.
However, these current distributions have thus far only been coarsely measured through ad-hoc schemes such as the aforementioned time-of-flight measurement~\cite{m:taiMicroscopyInteractingHarper2017}. 

Here, we apply our method to measure the local currents in the $\nu = 1/2$ Laughlin state. At each bond $((x,y),(x',y'))$ on the rectangular lattice, we measure the current across it~\cite{m:piraudVortexMeissnerPhases2015}: 
\begin{equation}
j_{((x,y),(x',y'))} = i J_{((x,y),(x',y'))} \langle b^\dagger_{x,y} b^{\mathstrut}_{x',y'} \rangle + \text{h.c.}~,    
\end{equation}
where $J_{((x,y),(x',y'))}$ is the hopping amplitude from $(x',y')$ to $(x,y)$ appearing in the Hamiltonian~\cref{eq:HBH_ham}. Since this is a local operator, it is more efficiently measured via local patched quenches.

In \cref{fig:BH}(c) we illustrate our hypothetical setup: we divide the lattice into patches of non-overlapping $2\times 2$ plaquettes, and overlay a second layer of ancillary sites also divided into $2\times 2$ plaquettes, each initialized in a pre-determined state of two filled sites with one boson each and two empty sites.
Due to the symmetry constraints discussed in \cref{sec:recoverability}, we can only measure the currents across bonds contained in a single patch. As such, in order to measure the current across every bond, we require two different experiments  using two  different sets of patched quenches, one in which patches are all shifted  by one site in both the $x$ and $y$ directions from the other. Doing so thus ensures that we extract information from every bond.
 Quench evolution is initiated by  allowing vertical tunneling between plaquettes with the same HBH Hamiltonian as before, for 10 hopping times $t = 10/J$, before measurements are taken. 
 
\cref{fig:BH}(d) shows the results derived from $m = 5\times 10^3$ measurement samples. 
Qualitatively, we can observe the presence of a chiral edge current localized on the boundary of the system, in line with expectations of the effective edge theory of a topological system placed on a system with a boundary~\cite{m:cornfeldChiralCurrentsOnedimensional2015,m:senthilIntegerQuantumHall2013}.
It would be extremely exciting to be able to use such measurements to quantitatively extract parameters of this edge theory, such as the chiral central charge; however, the system size considered is simply too small to be able make a definitive statement in the present numerics. 
Nevertheless, what our quench protocol demonstrates is the ability to reliably measure the currents on every bond, a key step towards such an experiment.

\section{Discussion \& Outlook}\label{sec:conclusion}
 
In this paper, we have  proposed a universal, scalable, and noise-resilient protocol to measure arbitrary physical properties in analog quantum simulators.
It circumvents limited controllability of present-day experimental platforms, and exploits naturally-realizable quantum many-body dynamics to scramble initially inaccessible information into ancillary degrees of freedom, following which such information can be recovered via appropriate classical data processing of the global measurement outcomes.

We discussed in detail the performance of our protocol: the required number of samples, the classical computational overhead, the evolution time required, and its robustness in the presence of noise.
Additionally, we provided detailed numerical examples of its successful employment in systems of arrays of Rydberg atoms and itinerant particles on optical lattices. These demonstrate practical and feasible near-term applications of our protocol for the extraction of novel and important observables that are otherwise difficult to measure in experiments. Our protocol hence promises to greatly increase the utility and versatility of quantum technologies today. 

From a practical standpoint, our protocol is extremely flexible and can be modified or improved along various fronts.
For example, the introduction of ancillae in our protocol serves to effect the exponentially many linear maps required of a randomized measurement protocol.
This step may be replaced (or in fact augmented) by 
evolving under different quench unitaries, e.g., quench under different Hamiltonians, or quench under the same Hamiltonian for different times~\cite{m:huClassicalShadowTomography2021}, or both, thus also achieving an ensemble of linear maps (though note a drawback of this approach is the need to consider exponentially many quench dynamics with different evolution times to achieve an ensemble of ``rotations'' with similar size as $N$ ancillae). 
One can analyze this setting in much the same framework introduced in this work. Indeed, by introducing a fictitious ancillary register that keeps track of the particular unitary applied to the system, we can extract observables by following the same data processing procedure outlined in the paper.
This generalization in fact enables possible trade-offs between, for instance, adding ancillae to the system and evolving the extended system for different times or under different Hamiltonians.

Another aspect where our protocol's performance may be improved is if we impose additional structure/knowledge on the state to be characterized, or the scrambling unitaries utilized. In this work, we have assumed that the  state on the system of interest is an arbitrary state in a $d_\sys$-dimensional Hilbert space. In experiments, it is often the case that we have prior information that limits the initial state to a smaller part of the Hilbert space (for example, that the state is pure~\cite{m:grierSampleoptimalClassicalShadows2022}, or if the state has a low enough temperature near the ground state). 
Intuitively, such knowledge should lower the required number of ancillae and reduce the computational overhead for data processing.
How to account for such information in a modification of our protocol is an interesting and important practical question. 
Further, the data processing procedure in our protocol relies on numerical computation of the recovery map, which is its main computational cost. 
Besides using patched quenches to control this computational overhead, one may consider subjecting the extended system to quench evolutions that admit efficient classical simulations, such as free-fermionic dynamics~\cite{m:wanMatchgateShadowsFermionic2022} or interacting integrable dynamics. Despite their non quantum-chaotic nature, our numerical investigations found that their dynamics still lead to the tomographic completeness of the protocol.
With such classical simulability comes the possibility of also an efficient classical implementation of the recovery map; this approach would be similar to the use of random Clifford circuits, which can be efficiently classically simulated, to realize classical shadow tomography~\cite{m:huangPredictingManyProperties2020}.

There are numerous advanced randomized measurement schemes with various capabilities, including process tomography~\cite{m:kunjummenShadowProcessTomography2022, m:levyClassicalShadowsQuantum2021,m:huangLearningPredictArbitrary2022}, feedforward techniques to determine the best next measurement~\cite{m:huangEfficientEstimationPauli2021,m:rathImportanceSamplingRandomized2021}, and tomography with shallow circuits~\cite{m:akhtarScalableFlexibleClassical2022,m:bertoniShallowShadowsExpectation2022,m:arienzoClosedformAnalyticExpressions2022a}. Adapting these innovations into the setting of natural quench dynamics could yield further improvements to our protocol. In particular, state and process learning has applications to the task of quantum sensing. To this end, it is promising to tailor our protocol for metrological purposes.

From a conceptual viewpoint, our protocol can be understood as a special type of quantum-classical hybrid algorithm,  enabled by the ergodicity of generic interacting quantum dynamics.
This is captured by a relation we identified  between the out-of-time-ordered correlator (OTOC), commonly used in information scrambling, and the tomographic completeness of our protocol. It would be extremely interesting to explore future connections, such as quantifying the relation between the degree of information scrambling and the sample and computational complexity. 
Deeper understanding of the structure of entanglement and information scrambling in chaotic quantum dynamics may enable the development of more advanced quantum algorithms that can be implemented in near-term devices with limited controllability.

{\it Note:} A  manuscript appearing in the same arXiv posting \cite{accompanying} also proposes a quantum state learning protocol whose operating principle is similar to ours:   originally inaccessible information of a system of interest, once scrambled into
ancillary degrees of freedom, can be recovered by classical processing of the global measurement data. However, a key difference is in the assumptions that enter in the scrambling of information and hence the recovering of information. In Ref.~\cite{accompanying}, it is assumed that the scrambling map exhibits certain universal statistics (namely, that the tomographic ensemble forms a quantum state-design),  such that a particular analytical recovery map can be used. In our work, we do not require the appearance of such universal distributions. 

\begin{acknowledgments}
We would like to thank 
Immanuel~Bloch,
Hong-Ye~Hu,
Hsin-Yuan~Huang,
Guang~Hao~Low,
and Yi-Zhuang~You
for useful discussions.
M.~C.~T. acknowledges support from
the DARPA (134371-5113608)
and the Quantum Algorithms and Machine Learning grant from NTT
(AGMT DTD 9/24/20).
D.~K.~M.~is supported in part by NSF CIQC (2016245).
W.~W.~H.~is supported in part by the Stanford
Institute of Theoretical Physics, and in part by the National University of Singapore (NUS)  start-up grants A-8000599-00-00 and A-8000599-01-00.
\end{acknowledgments}
\bibliography{zotero-generated-do-not-edit,my-bib}
\begin{widetext}
	\newpage
\end{widetext}

\appendix

\section{Review of classical shadow tomography}
\label{app:classical_shadow_tomography}

In this section we review the operating principles of classical shadow tomography. Consider a $n$-qubit state $\rho$ on a quantum device. Classical shadow tomography consists of applying random unitaries $U_j$ (drawn from an ensemble $\mathcal E$) to $\rho$, followed by measuring the state in the standard basis, obtaining a measurement outcome $s_j$ (Here denoted $s$ to denote that they are measurement outcomes on the system degrees of freedom).
Repeating the procedure by drawing a different $U$ for $m$ times, we obtain a set of classical data $\mathcal{D} = \{(U_1,s_1),\dots,(U_m,s_m)\}$.
By a judicious choice of the ensemble of unitaries $\mathcal E$, all information about $\rho$ can be extracted by large enough data.

Classical shadow tomography can be also recast into the formalism of \cref{sec:framework}.
Suppose the ensemble of unitaries $\mathcal E$ is finite and consists of $\mu$ unitaries $U_1,\dots,U_\mu$. 
To coherently draw an unitary uniformly at random from the ensemble, we can i) prepare an ancillary $\mu$-level system in the uniform superposition of basis states: $\ket{\phi} = \frac{1}{\sqrt{\mu}} \sum_a\ket{a}$ and ii) apply the controlled unitary
\begin{align} 
    U \equiv \sum_a U_a \otimes \ketbra{a} \label{eq:controlled-Uj}
\end{align}
on the extended system $\rho\otimes \ketbra{\phi}$.
Measurement of the ancilla in the standard basis will collapse it to one of the basis state $\ket{a}$ and effectively apply the corresponding $U_a$ on the target system with probability $1/\mu$.
In this way, we can view classical shadow tomography as a special case of our protocol where the scrambling quench is given by the controlled unitary $U$ in \cref{eq:controlled-Uj}.

Given an unitary ensemble $\mathcal E$, we can compute the corresponding recovery map $R$ for classical shadow tomography. 
In particular, the two choices of ensembles $\mathcal E$ considered in Ref.~\cite{m:huangPredictingManyProperties2020} have analytic expressions of the recovery map $R$: 
\begin{enumerate}
    \item $\mathcal E = \text{Haar}(2^n)$ is the Haar ensemble---the uniformly random ensemble of unitaries over the entire Hilbert space~\footnote{Formally, the Haar ensemble is defined as the unique ensemble of unitaries that is invariant under any unitary transformation: $\forall V \in U(D), P(U)dU = P(UV) dU = P(VU)dU$.}.
    In this case, the state recovery formula is:
    \begin{align}
        \rho = \lim_{m\rightarrow \infty } \frac{1}{m}\sum_{j=1}^m \left[ (2^n+1) U_{j}^\dagger \ketbra{z_j}{z_j} U_{j} - \mathbb{I} \right]~.~\label{eq:shadow_recovery}
    \end{align}
    Equivalently, in the formalism of \cref{sec:framework}, we can write \cref{eq:shadow_recovery} as:
    \begin{align}
        |\rho) = \lim_{m\rightarrow \infty} \frac{1}{m}\sum_{j=1}^m R|s_j,a_j \rangle , \label{eq:shadow_recovery2}
    \end{align}
    where $R = \big(S^\dagger S \big)^{-1} S^\dagger$ and $S$ is defined as in \cref{eq:linear_map_Q} with $U$ in \cref{eq:controlled-Uj}, with explicit expression
    \begin{align} 
    \left[\big(S^\dagger S \big)^{-1}\right]_{(b,c),(d,e)} = (2^n+1) \delta_{b,d}\delta_{c,e} - \delta_{b,c} \delta_{d,e}~, 
    \end{align}
    and $a_j$ labels the unitary $U_{a_j}$ applied to the system.

    In fact, the assumption $\mathcal E = \text{Haar}(2^n)$ is not necessary for this formula.
    For each $s$, we define 
    \begin{align} 
         \mathcal E_s = \{(p_U,U^\dag \ketbra{s} U ):(p_U,U)\in\mathcal E\}
    \end{align} 
    to be an ensemble of shadows corresponding to the same measurement outcome $s$ from applying different unitaries $U$.
    \Cref{eq:shadow_recovery2} holds as long as the resulting state ensemble $\mathcal E_s$ forms a \textit{projective two-design} for all $s$. A projective, or state, two-design is an ensemble of quantum states that has the same statistical properties as an ensemble of Haar-random states up to the second moment~\cite{m:ambainisQuantumTdesignsTwise2007}.
    Notably, projective two-designs can be generated by deep random Clifford circuits, which are 
    computationally efficient to classically simulate~\cite{m:dankertExactApproximateUnitary2009,m:gottesmanHeisenbergRepresentationQuantum1998}.
    
    \item $\mathcal E$ is the ensemble of \textit{local random unitaries}: each unitary $U_j \sim \mathcal E$ is the product of independently chosen on-site unitaries $U_j^{(k)}$: 
    $U_j = \bigotimes_{k=1}^n U_j^{(k)}$, where $U_j^{(k)} \sim \text{Haar}(2)$. 
    In this case, the state recovery formula is:
    \begin{align}
        \rho = \lim_{m\rightarrow \infty }\frac{1}{m}\sum_{j=1}^m  \bigotimes_{k=1}^N \left[3 U_j^{(k)\dagger} \ketbra*{s_j^{(k)}} U_j^{(k)} - \mathbb{I}_k \right],\nonumber
    \end{align}
    where $s_j^{(k)}$ is the $k$-th bit of the measurement outcome $s_j$. This can be interpreted as a product of recovery maps in \cref{eq:shadow_recovery2} on each qubit.
    As with the first case, the condition that $U_j^{(k)}\sim \text{Haar}(2)$ can be relaxed to the local state ensemble $\{(p(U^{(k)}),U^{(k)}{}^\dag \ketbra{s^{(k)}}U^{(k)})\}$ forming local state two-designs.
\end{enumerate}

In each case, Ref.~\cite{m:huangPredictingManyProperties2020} provided an upper bound for the  sample complexity when the ensemble of unitaries forms a global or local unitary three-design. 
In the first case, the upper bound on the sample complexity of estimating $\langle O \rangle$ is proportional to $\Tr(O^2)$. This means that a globally random unitary is best suited to estimate the expectation values of \textit{low-rank}, global observables, such as state fidelity and entanglement witnesses, which have small values of $\Tr(O^2)$. For example, the state fidelity to a target state $\ket{\Psi}$ is a rank-1 projector: $O = \ketbra{\Psi}{\Psi}$ and can be efficiently estimated with a constant number of measurements, independent of total system size. 
However, if $O$ is a local observable such as an onsite Pauli matrix, $\Tr(O^2) = 2^n$ is exponentially large. 
Therefore, case 1.~above (using 
globally random unitaries) does not efficiently measure the expectation values of local observables.

In contrast, case 2.~efficiently measures the expectation values of local observables (and more generally, $k$-point observables) but is inefficiently for global observables. Specifically, the upper bound on the sample complexity of estimating $\langle O \rangle$ is proportional to $4^{\text{locality}(O)} \Vert O \Vert^2_\infty$, where $\text{locality}(O)$ is the number of qubits that $O$ acts non-trivially on, and $\Vert O \Vert_\infty$ is its spectral norm.

We use these results to guide our protocol design in \cref{sec:quench-setups}. We design quenches that are ergodic within the entire extended Hilbert space to efficiently estimate global observables. Meanwhile, to efficiently estimate local observables, we divide the target system into smaller ``patches," and independently quench evolve each patch with separate ancillae (\cref{fig:schemes}).

\section{Invertibility of quench protocol: the second no-resonance condition}
\label{app:no_resonance}

Here we provide the technical details leading to \cref{eq:invertibility_main_eq1}.
We assume that the scrambling map $S$ is tomographically incomplete for almost every $t$. Reproducing \cref{eq:assumption1}, the assumption can be formulated as:
\begin{align}
    &\exists \delta\rho \text{~s.t.~for~a.e.~}t,~\forall z,\nonumber \\
    &\delta P_z(t) = \bra{z} U_t \left(\delta \rho \otimes \ketbra{\phi_\ancs}\right) U_t^\dagger \ket{z} = 0~, \tag{\ref{eq:assumption1}}\label{eq:assumption_app}
\end{align}
with $\delta\rho \equiv \rho-\sigma$.

We first integrate Eq.~\eqref{eq:assumption_app} over all $t$ to obtain:
\begin{align}
    0 &=\lim_{T\rightarrow \infty}\frac{1}{T} \int_0^T dt \bra{z} U_t \left(\delta \rho \otimes \ketbra{\phi_\ancs} \right) U_t^\dagger \ket{z} \nonumber\\
    &= \sum_{E,E'} \lim_{T\rightarrow \infty}\frac{1}{T} \int_0^T dt \exp[i(E'-E)t] \label{eq:one_copy_integral_intermediate}\\
    &~\times \braket{z}{E} \bra{E}\left(\delta \rho \otimes \ketbra{\phi_\ancs} \right)\ket{E'} \braket{E'}{z} \nonumber\\
    &=\sum_E \abs{\braket{z}{E}}^2  \bra{E}\left(\delta \rho \otimes\ketbra{\phi_\ancs} \right)\ket{E}~, \label{eq:one_copy_integral}
\end{align}
where the integral over time in \cref{eq:one_copy_integral_intermediate} evaluates to $\delta(E-E')$, giving \cref{eq:one_copy_integral}, which is \cref{eq:invertibility_secondary_eq1}. We then integrate the squared expression:
\begin{align}
    &0 = \lim_{T\rightarrow \infty}\frac{1}{T} \int_0^T dt \left(\bra{z} U_t \left(\delta \rho \otimes \ketbra{0}{0} \right) U_t^\dagger \ket{z}\right)^2 \nonumber\\
    &= \sum_{E_1,E_2,E_3,E_4}\lim_{T\rightarrow \infty}\frac{1}{T} \int_0^T dt e^{i(E_4+E_3-E_2-E_1)t} \nonumber\\
    &~\quad\times \braket{z}{E_1} \braket{z}{E_2} \bra{E_1}\left(\delta \rho \otimes \ketbra{\phi_\ancs} \right)\ket{E_3} \nonumber\\
    &~\quad\times \bra{E_2}\left(\delta \rho \otimes \ketbra{\phi_\ancs} \right)\ket{E_4} \braket{E_3}{z} \braket{E_4}{z} \label{eq:two_copy_integral_intermediate}\\
    &=\underbrace{\left[\sum_E \abs{\braket{z}{E}}^2  \bra{E}\left(\delta \rho \otimes \ketbra{\phi_\ancs} \right)\ket{E}\right]^2}_{E_1=E_3, E_2=E_4} \label{eq:two_copy_integral}\nonumber \\
    &~+ \underbrace{\sum_{E\neq E'} \abs{\braket{z}{E}}^2 \abs{\braket{z}{E'}}^2  \left|\bra{E}\left(\delta \rho \otimes\ketbra{\phi_\ancs} \right)\ket{E'}\right|^2}_{E_1=E_4, E_2=E_3, E_1 \neq E_2}~,
\end{align}
where we have assumed the second no-resonance condition on the spectrum of $H$ to perform the computation: that $E_1+E_2-E_3-E_4 = 0$ if and only if $E_1=E_3, E_2=E_4$ or $E_1=E_4, E_2=E_3$, and taking care not to double count the case $E_1=E_2=E_3=E_4$~\cite{m:markBenchmarkingQuantumSimulators2022}. The no-resonance condition is commonly assumed in literature on thermalization~\cite{m:goldsteinDistributionWaveFunction2006,m:reimannFoundationStatisticalMechanics2008,m:lindenQuantumMechanicalEvolution2009,m:kanekoCharacterizingComplexityManybody2020,m:huangExtensiveEntropyUnitary2021} and is generally considered a mild assumption on non-integrable Hamiltonians: if a given non-integrable Hamiltonian does not satisfy the no-resonance condition, an infinitesimally small perturbation will generically satisfy the no-resonance condition, without changing any other physical properties~\cite{m:kanekoCharacterizingComplexityManybody2020}. 

Combining \eqref{eq:one_copy_integral} and \eqref{eq:two_copy_integral} implies that the second term is 0, giving \cref{eq:invertibility_main_eq1}:

\begin{equation}
    \sum_{E\neq E'} \abs{\braket{z}{E}}^2 \abs{\braket{z}{E'}}^2\left|\bra{E}\left(\delta \rho \otimes\ketbra{\phi_\ancs} \right)\ket{E'}\right|^2 = 0~. \tag{\ref{eq:invertibility_main_eq1}}
\end{equation}

\section{Quench evolution time: Numerical evidence}
\label{app:lieb_robinson}
The sample complexity of our protocol---the number of samples required to extract a given observable---depends on details of the quench evolution, including the time of the quench. In \cref{sec:quench_time}, we argued that the requisite quench evolution time is set by the Lieb-Robinson bound. Here, we present numerical evidence to support this claim.

\begin{figure*}
    \centering
    \includegraphics[width=0.8\textwidth]{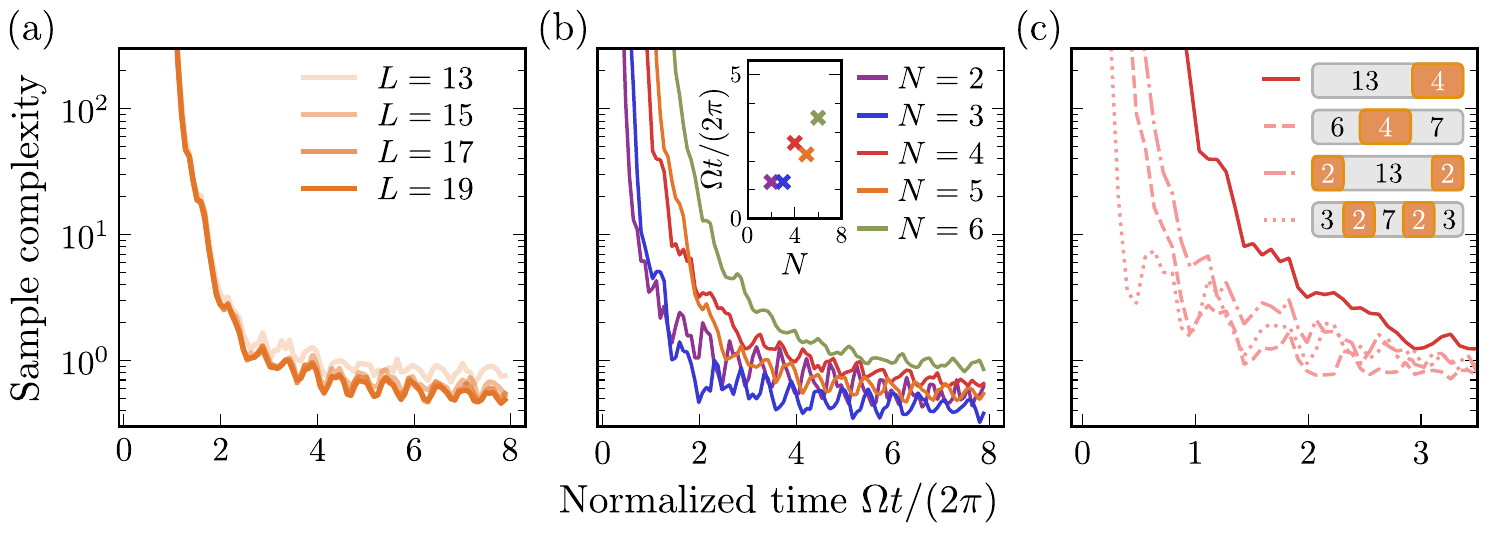}
    \caption{Effects of the Lieb-Robinson bound on the sample complexity of our scheme. In all examples, we simulate quench evolving a one-dimensional chain of system and ancilla Rydberg atoms, and we plot the sample complexity of estimating the state fidelity with our protocol.
    (a) We simulate the quench evolution of a one-dimensional chain of total length $L$, and $N=5$ Rydberg atoms as the system of interest. Increasing the length of the chain does not significantly change the sample complexity of our protocol against time: at early times, the sample complexity is high since most of the information about the state is contained in the Lieb-Robinson light cone. At late times, information has spread into enough ancillary degrees of freedom and the sample complexity does not significantly improve from spreading into more ancillae. (b) Larger system sizes $N$ require longer times for the sample complexity to saturate: we attribute this is due to the time taken for the Lieb-Robinson light cone to encompass a sufficient number of (approximately $N$) ancillary atoms. In the inset we observe an approximately linear scaling by plotting the earliest time that the sample complexity is below 1.5, as a function of system size $N$. Here, we fix the total system size to be $L=17$. (c) The rate of information spreading can be varied by changing the position of system atoms: the sample complexity saturates at different times, decreasing when the number of boundaries between system and ancilla atoms are increased from one to four (inset).}
    \label{fig:sample_complexity_time}
\end{figure*}

We numerically simulate our protocol in a system of Rydberg atoms, with quench evolution by the Hamiltonian in \cref{eq:H_ryd}. For simplicity, we arrange our system and ancilla Rydberg atoms in a one-dimensional chain of total length $L$, illustrated in \cref{fig:sample_complexity_time}. We plot the sample complexity required for the task in \cref{fig:rydberg-all}(a): measuring the fidelity between the experimental state and an ideal state which is the ground state of Eq.~\eqref{eq:H_ryd} defined on the initial system. For simplicity we set the experimental and target states to be equal.

As illustrated in \cref{fig:sample_complexity_time}(a), for a fixed number $N$ of system atoms, we find that increasing the number of ancilla atoms does not change the behavior of the sample complexity against time. This is because the sample complexity depends on the number of degrees of freedom that the system information spreads to. The sample complexity saturates as long as the information is (approximately) uniformly spread among at least $N$ additional ancilla atoms. At early times, the sample complexity is high since most of the information about the state is contained within the Lieb-Robinson light cone, which is smaller than the requisite number of atoms. Because of exponentially small tails, some information is spread into these degrees of freedom, making the quench map invertible, albeit with an exponentially large sample complexity. At late times, information has spread into enough ancillary degrees of freedom and the sample complexity does not significantly improve from spreading into more ancillae. In contrast, a larger number $N$ of system atoms requires a longer time for the sample complexity to saturate. In \cref{fig:sample_complexity_time}(b) we observe an approximately linear dependence of this saturation time $t^*$ with $N$, supporting our hypothesis. Finally, the rate of information spread can be varied by changing the position of the system atoms in the overall chain, which changes the number of boundaries between the system and ancilla atoms. This is reflected in \cref{fig:sample_complexity_time}(c), in which the sample complexity decreases more quickly when the number of such boundaries are increased from one to four.

\section{Optimal data post-processing with frame theory}\label{app:frame-theory}
In this section, we use \textit{frame theory} to derive the classical data processing scheme that minimizes the sample complexity in our protocol, as presented in \cref{sec:frame}. Our presentation largely follows Ref.~\cite{Daubechiestenlectures}.

We first define a \textit{frame}. For simplicity, we restrict our discussion to operators on the finite dimensional vector space $\mathbb{C}^d$. 
\begin{definition}[Frame]
A family of operators $\{|S_z)\}\subseteq L(\mathbb{C}^d)$ is an \emph{operator frame} if there exist constants $0<a\leq b<\infty$ such that
\begin{align}
    a\left(O|O\right) \leq \sum_z \left(O|S_z\right)\left(S_z|O\right) \leq b \left(O|O\right), \label{eq:frame_def}
\end{align}
for all operators $O\in L(\mathbb{C}^d)$, where $(A|B) = \tr(A^\dagger B)$.
A frame is \textit{tight} if $a=b$.  
\end{definition} 
For a finite frame, the right inequality is always satisfied. Meanwhile, the left inequality is satisfied when no operator $O$ is trace-orthogonal to all operators $S_z$ in the frame.
Therefore, the left inequality is equivalent to the invertibility of the superoperator $\sum_z |S_z)(S_z|$ corresponding to the frame. In our context, one can verify that as long as the scrambling map is invertible, the POVM $\{\ketbra{S_z}{S_z}\} \equiv \{\oket{S_z}\}$ is an operator frame. 

We next define the \textit{dual frame} of a frame.
\begin{definition}[Dual frame]
	A \textit{dual frame} $\{|R_z)\}$ of the frame $\{|S_z)\}$ is one such that
\begin{align}
    \sum_z |S_z)(R_z| = \mathbb{I}~, \label{eq:dual_frame}
\end{align}
where $\mathbb{I}$ is the identity superoperator. Therefore, for any frame $\{|S_z)\}$, an operator $\oket{O}$ has representation:
\begin{align}
    \oket{O} = \sum_z \oket{S_z}\obraket{R_z}{O}\equiv \sum_{z} o_z \oket{S_z}.
\end{align}
\end{definition} 
The dual frame is not unique if there are more than $d^2$ elements in the frame. One particular choice of the dual frame is the \textit{canonical dual frame}.
\begin{definition}[Canonical dual frame]
The \textit{canonical dual frame} of $\{|S_z)\}$ is
$\{|R^\can_z)\}$, defined as
\begin{align}
    |R^\can_z) \equiv \left[\sum_z |S_z)(S_z| \right]^{-1}  |S_z)~.
\end{align}	
\end{definition}

If the dual frame is not unique, neither is the representation $\oket{O} = \sum_{z} o_z \oket{S_z}$. The canonical dual frame is optimal in the following sense:
\begin{theorem}[Ref. \cite{Daubechiestenlectures}, Prop.~3.2.4]\label{thm:opt-frame}
Let $o_z^\can = \obraket{R^\can_z}{O}$ be the coefficients of $O$ corresponding to the canonical dual frame of $\{|S_z)\}$.
For all dual frames $\{|R_z)\}$ of $\{|S_z)\}$, we have
\begin{align} 
	\sum_{z}\abs{o_z}^2 \geq \sum_{z} \abs{o_z^\can}^2, 
\end{align}
where $o_z = \obraket{R_z}{O}$.	
\end{theorem}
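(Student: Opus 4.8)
The plan is to show that the canonical dual frame coefficients $o_z^{\can} = \obraket{R^{\can}_z}{O}$ have minimal $\ell^2$ norm among all dual frames, which is a standard orthogonality argument. The key observation is that any dual frame coefficient vector $\{o_z\}$ differs from the canonical one $\{o_z^{\can}\}$ by a vector that lies in the kernel of the synthesis map $T: \{c_z\} \mapsto \sum_z c_z \oket{S_z}$, and that the canonical coefficients lie in the orthogonal complement of that kernel, so Pythagoras finishes the job.

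Concretely, first I would fix notation: let $T$ denote the synthesis operator sending a coefficient vector $\ket{c} = \{c_z\}$ to $\sum_z c_z \oket{S_z}$, so that its adjoint $T^\dagger$ is the analysis operator $\oket{O} \mapsto \{\obraket{S_z}{O}\}$, and the frame operator is $F \equiv TT^\dagger = \sum_z \oket{S_z}\obra{S_z}$, which is positive-definite precisely because $\{\oket{S_z}\}$ is a frame. The canonical dual coefficients are then $o_z^{\can} = \obra{S_z} F^{-1} \oket{O}$, i.e. $\ket{o^{\can}} = T^\dagger F^{-1}\oket{O}$, which manifestly lies in $\operatorname{range}(T^\dagger)$. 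Second, I would observe that \emph{any} dual frame reproduces $O$: $\sum_z o_z \oket{S_z} = \oket{O} = \sum_z o_z^{\can}\oket{S_z}$, hence $T(\ket{o} - \ket{o^{\can}}) = 0$, so $\ket{o} - \ket{o^{\can}} \in \ker T$. Third, since $\ker T = (\operatorname{range} T^\dagger)^\perp$ and $\ket{o^{\can}} \in \operatorname{range} T^\dagger$, the two vectors $\ket{o^{\can}}$ and $\ket{o}-\ket{o^{\can}}$ are orthogonal, so
\begin{align}
\sum_z \abs{o_z}^2 = \norm{\ket{o}}^2 = \norm{\ket{o^{\can}}}^2 + \norm{\ket{o}-\ket{o^{\can}}}^2 \geq \norm{\ket{o^{\can}}}^2 = \sum_z \abs{o_z^{\can}}^2,
\end{align}
which is the claim, with equality iff $\ket{o} = \ket{o^{\can}}$.

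The only genuinely non-routine point is verifying that $\ker T = (\operatorname{range} T^\dagger)^\perp$ and that $\ket{o^{\can}}$ indeed sits in $\operatorname{range} T^\dagger$; both are elementary linear algebra (the first is the standard fundamental-theorem-of-linear-algebra identity for a finite-dimensional operator, the second is immediate from $\ket{o^{\can}} = T^\dagger F^{-1}\oket{O}$), so there is no real obstacle here. I would also remark that this is exactly the content needed downstream: since $\Var[o_z] = \sum_z P_z \abs{o_z}^2 - \abs{\Tr(O\rho)}^2$ and the second term is fixed by $O$ and $\rho$ alone, minimizing the (unweighted or $P_z$-weighted, after the $\Gamma$-reweighting of \cref{eq:inverse_maps_Gamma}) $\ell^2$ norm of $\{o_z\}$ is what minimizes sample complexity — so the theorem as stated, applied to the reweighted frame $\{\sqrt{P_z}\,\oket{S_z}\}$, yields the optimal recovery map of \cref{eq:optimal-QL}.
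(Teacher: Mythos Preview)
Your proof is correct and is the standard orthogonality/Pythagoras argument for minimality of the canonical dual frame coefficients. The paper itself does not supply a proof of this theorem; it simply cites Daubechies (Prop.~3.2.4) and then applies the result to the rescaled frame $\{\oket{S_z}/\sqrt{P_z}\}$ exactly as you outline in your closing remark. So there is nothing to compare: your argument fills in what the paper leaves to the reference, and your downstream remark matches the paper's use of the theorem in \cref{app:frame-theory}.
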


In our context, the frame $\mathbf{F} = \{\oket{S_z}\}$ are the columns of the Hermitian conjugate of the \scrambler~$S$ defined in \cref{sec:framework}:
\begin{align} 
	S^\dag = \big[\oket{S_1},\dots,\oket{S_z},\dots,\oket{ S_{d_\ext}}\big] .
\end{align}
Similarly, a dual frame $\tilde{\mathbf{F}}= \{\oket{R_z}\}$ is given by the columns of the recovery map $R$:
\begin{align} 
	R = \left[\oket{R_1},\dots,\oket{R_z},\dots,\oket{R_{d_\ext}}\right]~.
\end{align}
Given $m$ experimental snapshots $z_1,z_2,\dots,z_m$, our protocol uses the mean
\begin{align} 
	\bar o_{(m)} = \frac{1}{m} \sum_{j = 1}^m o_{z_j}
\end{align}
as an estimator for $\Tr(O\rho)$, where $o_z \equiv (O|R_z)$. From \cref{eq:sample_complexity}, the sample complexity is proportional to the variance of $o_z$.
\begin{align} 
	\Var[o_z] = \sum_z P_z \abs{o_z}^2  - \big | \sum_{z} P_z o_z \big |^2~. \label{eq:var_oz}
\end{align}

While the second term is always $|\Tr(O\rho)|^2$, the first term, and hence the sample complexity, may vary with different choices of dual frame $\mathbf{\tilde F}$.
The optimal procedure for estimating $\Tr(\rho O)$ is then equivalent to constructing an optimal dual frame $\tilde{\mathbf{F}}$.
We use \cref{thm:opt-frame} to construct the optimal dual frame that minimizes the variance $\Var[o_z]$ and, hence, minimizes the sample complexity of estimating $\Tr(O\rho)$. In order to do so, we use the following Lemma:
\begin{lemma}
\label{lemma:rescale_frame}
Given a frame $\mathbf F = \{\oket{S_z}\}$, a rescaling by any set of positive numbers $\{c_z\}$ gives a \textit{rescaled frame} $\mathbf{F}_c = \{c_z \oket{S_z}\}$. Similarly, any dual frame $\tilde{\mathbf F} = \{\oket{R_z}\}$ of a frame $\mathbf F$ may be rescaled into a dual frame of the rescaled frame $\mathbf{F}_c$, given by $\tilde{\mathbf{F}}_c = \{c_z^{-1}\oket{R_z}\}$.
\end{lemma}
\begin{proof}
To show that $\mathbf{F}_c$ is a frame, note that since $(O|S_z)(S_z|O) \geq 0$, given that $\mathbf{F}$ is a frame, $\mathbf{F}_c$ also satisfies the inequalities in \cref{eq:frame_def} with lower and upper bounds $\text{min}(c_z)a$ and $\text{max}(c_z)b$ and hence is also a frame.

We can immediately verify that $\tilde{\mathbf{F}}_c$ is a dual frame of $\mathbf{F}_c$:
\begin{equation}
    \sum_z c_z \oket{S_z}   \frac{1}{c_z} \obra{R_z} =\sum_z \oket{S_z}\obra{R_z} = \mathbb{I}~.
\end{equation}
\end{proof}

To relate the variance to \cref{thm:opt-frame}, we rescale the original frame into the \textit{sample complexity frame} $\mathbf{F}_\text{samp} = \{\oket{S^{(s)}_z}\}$, where $\oket{S^{(s)}_z} = \oket{S_z}/\sqrt{P_z}$ for all $z$. Using \cref{lemma:rescale_frame}, any dual frame $\tilde{\mathbf F} = \{\oket{R_z}\}$ of $\mathbf{F}$ can also be rescaled to obtain a dual frame $\tilde{\mathbf F}_\text{samp} = \{\oket{R^{(s)}_z} = \sqrt{P_z} \oket{R_z}\}$ of $\mathbf{F}_\text{samp}$ and vice versa. $\Tr(O\rho)$ remains correctly estimated in this rescaled frame:
\begin{align} 
	\Tr(O\rho) =  \sum_z \underbrace{(O^\dagger|R^{(s)}_z)}_{= \sqrt{P_z} o_z \equiv o^{(s)}_z} \underbrace{(S^{(s)}_z|\rho)}_{=\sqrt{P_z}} = \sum_z P_z o_z~,
\end{align}
The advantage of $\mathbf{F}_\text{samp}$, however, is that the sample complexity can be related to \cref{thm:opt-frame}.
\begin{align}
    \text{Var}[o_z] = \sum_z P_z \abs{o_z}^2 = \sum_z \big | o_z^{(s)} \big|^2
\end{align}

By \cref{thm:opt-frame}, $\sum_z\big| o^{(s)}_z \big |^2$ is minimized if $\tilde{\mathbf{F}}_\text{samp}$ is the canonical dual frame of  $\mathbf{F}_\text{samp}$. By rescaling the canonical dual frame of $\mathbf{F}_\text{samp}$, we get the optimal dual frame $\tilde{\mathbf{F}}$ to the original frame $\mathbf{F}$:
\begin{align} 
	\oket{R_z} &= \frac{1}{\sqrt{P_z}}\oket{\tilde {S}^{(s)}_z} =\frac{1}{\sqrt{P_z}} \left[\sum_z|S^{(s)}_z)(S^{(s)}_z|\right]^{-1}|S^{(s)}_z) \nonumber\\
	&=\frac{1}{P_z}  	 
	\left[\sum_z \frac{1}{P_z}|S_z)(S_z| \right]^{-1}|S_z)~, \label{eq:optimal-dual-frame_app}
\end{align}
which is \cref{eq:optimal-QL} in the main text, with the optimal $o_z$ given by \cref{eq:optimal-oz}:
\begin{align}
    o_z = \frac{1}{P_z} \sum_{ijkl} O_{j,i} \big(A^{-1}\big)_{(i,j),(k,l)} S_{z,(k,l)}~.
\end{align}
In \cref{fig:rydberg-all}(a), we illustrate the improvement in sample complexity between the inverse map $R$ obtained from the Moore-Penrose pseudoinverse and \cref{eq:optimal-dual-frame_app}.

\subsection{QR decomposition algorithm to solve linear equations} \label{app:QR_decomp_linear_equations}
As discussed in \cref{sec:frame}, it is often computationally favorable to   directly solve the linear equation $S^\dagger \ket{o} = \oket{O}$ for $|o\rangle$
using a QR decomposition on $S$ and then Gaussian elimination, instead of computing the left-inverse $R$ of $S$ and applying it to $|O)$. 
To elaborate, a QR decomposition is first done on the $d_\ext \times d_\sys^2$ rectangular matrix $S$:
\begin{equation}
    S = Q \begin{bmatrix} R_1 \\ \pmb{0} \end{bmatrix}~,
\end{equation}
where $Q$ is a $d_\ext \times d_\ext$ unitary matrix and $R_1$ is a $d_\sys^2 \times d_\sys^2$ upper triangular matrix. One may verify that the solution
\begin{equation}
    \ket{o} = Q \begin{bmatrix} (R_1^\dagger)^{-1} \oket{O} \\ \pmb{0} \end{bmatrix}
\end{equation}
is equal to the Moore-Penrose solution $R_\text{MP}^\dagger \oket{O} = S (S^\dagger S)^{-1} \oket{O}$. The solution $(R_1^\dagger)^{-1} \oket{O}$ can be computed by forward substitution, which is more numerically stable than first computing the inverse $(R_1^\dagger)^{-1}$, then $(R_1^\dagger)^{-1} \oket{O}$. 

Similarly, the optimal estimator for $O$, $\{o_z\} = \ket{o} = \Gamma S (S^\dagger \Gamma S)^{-1} \oket{O}$ can be obtained by instead solving the linear equation $S^\dagger \Gamma^{1/2} \left[\Gamma^{-1/2} \ket{o}\right] = \oket{O}$ to obtain $\Gamma^{-1/2} \ket{o}$ and in turn $\ket{o}$.

\section{Optimal dual frame for nonlinear observables}\label{app:frame-nonlinear}

In \cref{sec:analysis} and \cref{app:frame-theory}, we have constructed the optimal dual frame for estimating the expectation value of a linear observable. 
In this section, we prove that the same dual frame is also optimal in extracting nonlinear observables from the snapshots.
As an example, we discuss the estimation of quadratic quantities.

Suppose we would like to estimate a quadratic quantity: 
\begin{align}
 \langle O \rangle &= \tr(O (\rho \otimes \rho)) \approx \sum_{z,z'} (O|R_z R_{z'}) (S_z|\rho)(S_{z'}|\rho)\nonumber\\
 & =\sum_{z,z'} p(z) p(z') o_{z,z'}~,
\end{align}
for some dual frame $\{|R_z)\}$. 
With finite samples $\{z_i \}_{i=1}^m$, the ``U-statistics" estimator $u_m$ is an unbiased, minimal variance estimator~\cite{m:huangPredictingManyProperties2020,m:hoeffdingClassStatisticsAsymptotically1948}:
\begin{align}
    \sum_{z,z'} p(z) p(z') o_{z,z'} \approx \frac{1}{m(m-1)} \sum_{i \neq j} o_{z_i,z_j} \equiv u_m~.
\end{align}

Hoeffding's theorem gives us the variance of $u_m$:
\begin{align}
    \Var(u_m) &= \frac{2\left( 2(m-2) \Var (o_1(z_1)) + \Var(o(z_1,z_2)) \right)}{m(m-1)}\nonumber\\
    &  \approx \frac{1}{m}\Var(o_1(z_1))~, \label{eq:nonlinear_op_hoeffding}
\end{align}
where $o_1(z) = \sum_{z'} p(z') o_{z,z'} = (\tilde{F}_z|\tr_1\left[\left(\rho\otimes \mathbb{I}\right) O \right])$. Since the variance is dominated by the variance of the linear estimator of $\tr_1\left[\left(\rho\otimes \mathbb{I}\right) O\right]$, the dual frame $\{|R_z)\}$ in \cref{eq:optimal-QL} is also optimal for estimating observables that are quadratic in $\rho$.

For a general non-linear observable, Hoeffding's theorem states that the variance of a non-linear U-statistic is asymptotically dominated by the variance of a linear estimator [corresponding to the first term in \cref{eq:nonlinear_op_hoeffding}]. Therefore, the same dual frame is also asymptotically optimal for non-linear observables.

\section{Performance in the presence of noise}\label{app:noise}
In this section, we demonstrate two strategies for data processing in extracting information from a Rydberg array in the presence of noise.
The first strategy processes the data as if there were no noise in the scrambling quench, resulting in systematic errors in the extracted values. 
We numerically show that the systematic errors are well within our estimate in \cref{sec:noise}.
The second strategy assumes that we have a good description of the noisy channel so that we can numerically compute the corresponding inverse \scrambler~Q. 
We show that the increased sample complexity due to noise agrees well with our prediction in \cref{sec:noise}.

\begin{figure}[t]
\includegraphics[width=0.45\textwidth]{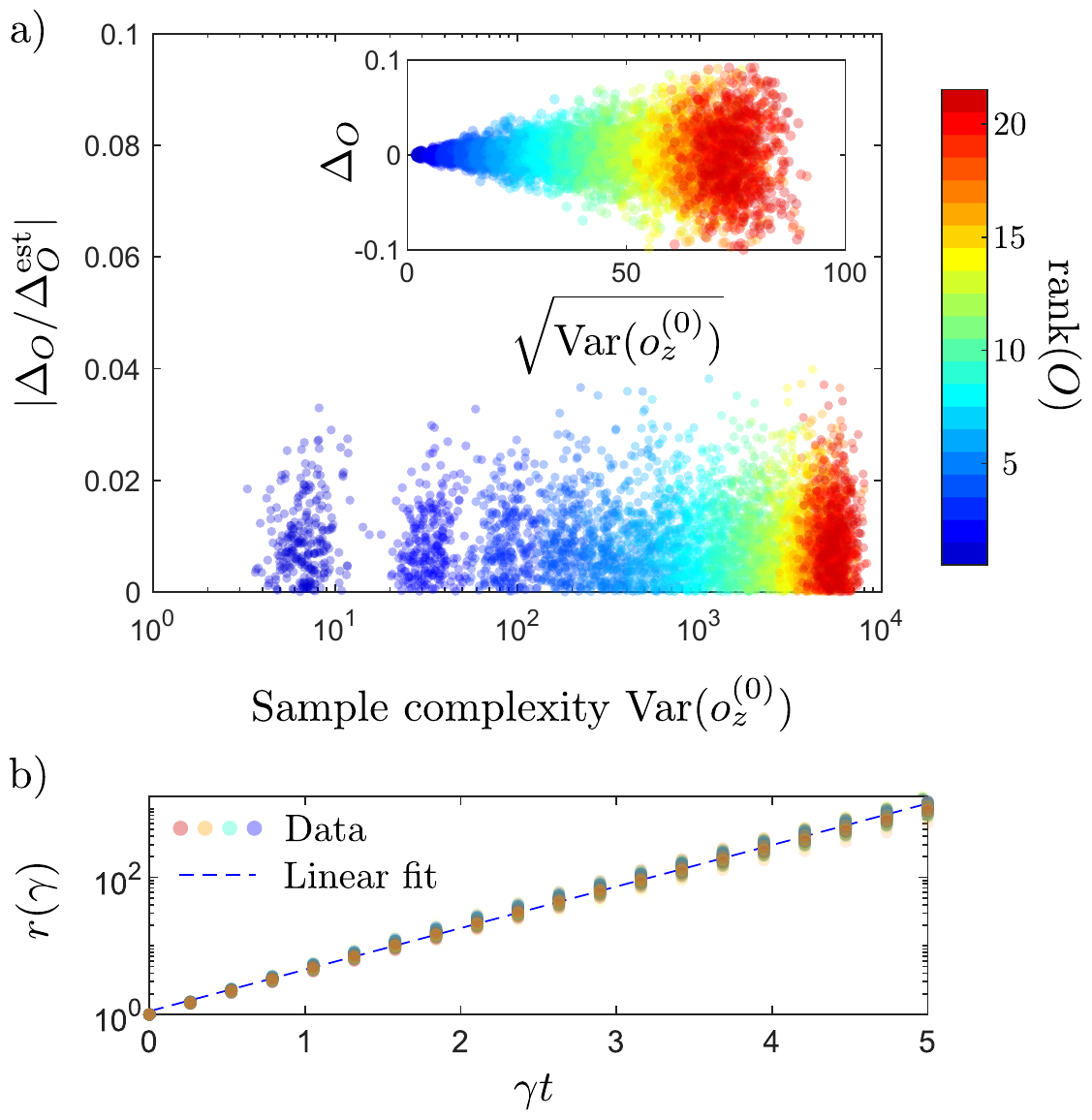}
\caption{The performance of our protocol in the extracting quantum state fidelity [\cref{fig:rydberg-all}(a)] in the presence of local dephasing noise.
a) In the first strategy, we process the data as if there were no experimental error. The main plot is the between the exact systematic error $\Delta_O$ and the bound $\Delta_O^\est$ in \cref{eq:back-of-envelope-bound} as a function of the sample complexity $\Var(o_z^{(\gamma)})$ (evaluated in a noiseless evolution).
Each scatter point corresponds to a randomly chosen rank-$k$ observables for $k = 1,\dots,21$.
The initial state is the ground state of $H_\ryd$ with $n_\sys = 6$, $\Delta = -\Omega$, $V_2 = 0$.
We extract the expectation values of the observables by evolving the extended system, including 7 ancillas in the global setup (\cref{fig:rydberg-all}a), for time $t = 12\pi/\Omega$ under the noisy channel described in \cref{eq:noisy-rydberg-with-dephasing} at $\gamma t = 0.02$, $\Delta = -\Omega$, and $V_2 = 0.2\Omega$.
The inset plots the systematic error versus the noiseless sample complexity $M_O$, confirming our prediction that low-sample-complexity observables are also more robust against noise. 
b) In the second strategy, we instead take the error model into account in processing the measurement snapshots. 
We plot the ratio $r(\gamma)$ between the noisy sample complexity $\Var(o_z^{(\gamma)})$ and its noiseless version $\Var(o_z^{(o)})$ as a function of $\gamma t$. 
The overlapping scatter points correspond to random observables of ranks between one and $d_\sys = 21$. 
The dashed line is a linear fit to the mean ratio as a function of $\gamma t$ in the log-linear scale.
}
\label{fig:noise-systematic}
\end{figure}

Specifically, we consider the extraction of information from a Rydberg array of $n_\sys = 6$ atoms using the global setup described in \cref{fig:rydberg-all}a, where the scrambling quench of the extended system is affected by local dephasing noise. 
To simulate the local dephasing noise, we Trotterize the noisy scrambling quench channel into $N$ time steps:
\begin{align} 
	\mathcal Q^{(\gamma)}_t \approx 
	 \left[\mathcal E_{\gamma t/N}\circ \mathcal Q^{(0)}_{t/N} \right]^{\circ N},\label{eq:noisy-rydberg-with-dephasing}
\end{align}
where $t$ is the scrambling quench time, $\gamma$ is a constant corresponding to the total noise rate, $Q^{(0)}_{t}$ is the noiseless scrambling quench channel of the extended system, and 
\begin{align} 
	 \mathcal E_{p}[\rho_\ext] = (1-p) \rho_\ext + \frac{p}{n_\ext} \sum_{i = 1}^{n_\ext} Z_i \rho_\ext Z_i
\end{align}
is the local dephasing channel with an error probability $p \in [0,1]$.
In our numerics, we choose $N$ large enough such that both $\Omega t/N\ll 1$ and $\gamma t/N\ll 1$.

As discussed in \cref{sec:noise}, we can process the measurement data as if there were no noise in the experiment. 
An advantage of this strategy is that we do not require a precise description of the error model.
However, in doing so, we introduce systematic errors to the extracted values. 
We provide in \cref{eq:back-of-envelope-bound} estimates of such systematic errors under reasonable assumptions. 
Here, we numerically compute the systematic errors in the above example and compare them to the bound in \cref{eq:back-of-envelope-bound}. 

To obtain statistics over a wide range of observables, we extract the expectation value of random rank-$k$ observables:
\begin{align} 
	 O^{(k)} = U_{\Haar}^\dag D^{(k)} U_{\Haar},
\end{align}
where $U_\Haar$ is a Haar-random unitary acting on the system and $D^{(k)}$ is a rank-$k$ diagonal matrix such that
$[D^{(k)}]_{i,i} = 1$ if $1\leq i \leq k$ and $[D^{(k)}]_{i,i} = 0$ otherwise.

In \cref{fig:noise-systematic}a, we compare the exact systematic error $\Delta_O$ [defined in \cref{eq:sys-error-def}] to our bound from \cref{eq:back-of-envelope-bound}:
\begin{align} 
	\Delta_O^{\text{bound}} \equiv  \gamma t\sqrt{\sum_{z}o_z^2/d_\ext}. 
\end{align}
\Cref{fig:noise-systematic} plots the ratio $\abs{\Delta_O/\Delta_O^{\est}}$ as a function of the square root of the sample complexity, $\sqrt{\Var(o_z)}$, evaluated at $\gamma = 0$.
The back-of-the-envelope bound $\Delta_O^{\est}$ appears to well constrain the empirical error $\Delta_O$. 
Additionally, the ratio $\abs{\Delta_O/\Delta_O^{\est}}$ does not increase when we increase the rank of the random observables from one to full-rank, suggesting that the bound in \cref{eq:back-of-envelope-bound} is applicable to a wide range of observables.
In the inset of \cref{fig:noise-systematic}, we also plot the systematic error as a function of the sample complexity.
The inset supports our interpretation in \cref{sec:noise} that the extraction of low-sample-complexity observables is generally robust against noise.

Next, we consider using a different strategy to process the measurement snapshots in the same experiment.
In this strategy, we numerically compute the inverse \scrambler~to the exact noisy evolution.
In contrast to the earlier strategy, there is no systematic error in the data processing.
However, as discussed in \cref{sec:noise}, we expect the noisy channel to leak information to the environment and, thus, we need to collect more samples to recover information to the same precision as in the noiseless scenario.
In \cref{sec:noise}, we argue that the sample complexity would increase exponentially with the noise rate in the presence of a global depolarizing noise.
In \cref{fig:noise-systematic}b, we plot the ratio between the noisy sample complexity ${\Var(o_{z}^{(\gamma)})}$ and the noiseless version for several random observables of different ranks.
The sample complexity increases exponentially with $\gamma t$, supporting our general argument in \cref{sec:noise}.

\section{Many-body Chern number measurement}
\label{app:MBCN}
Here, we provide details of the many-body Chern number (MBCN) operator $\mathcal{T}(\phi)$, proposed in Ref.~\cite{m:dehghaniExtractionManybodyChern2021,m:cianManyBodyChernNumber2021}.

The MBCN is given by the winding number of $\langle\mathcal{T}(\phi)\rangle$, as $\phi$ runs from $0$ to $2\pi$. The non-Hermitian operator $\mathcal{T}(\phi)$ is given by:
\begin{equation}
    \mathcal{T}(\phi) = W_{1}^\dagger(\phi) \mathbb{S}_{1,3} W_{1}^{\mathstrut}(\phi) V_1^s V_2^s~,
    \label{eq:MBCN_single_copy}
\end{equation}
measured on the state of interest $\ket{\Psi}$, taken to be the ground state of an interacting Hamiltonian. $R_{1,2,3}$ are three disjoint, rectangular subsystems of the lattice, and the \textit{polarization} and \textit{twist angle} operators are:~\cite{m:cianManyBodyChernNumber2021}
\begin{align}
 V_i &= \prod_{(x,y) \in R_i} \exp(i \frac{2\pi y} {l_y} n_{x,y})~,\\
 W_i(\phi) &= \prod_{(x,y)\in R_i} \exp(i n_{x,y}\phi)~. 
\end{align}
Lastly, the swap operator $\mathbb{S}_{1,3}$ exchanges the regions $R_1$ and $R_3$:
\begin{align}
     \mathbb{S}_{1,3} &= \prod_{\substack{(x,y) \in R_1\\ (x',y) \in R_3}} \text{SWAP}[(x,y),(x',y)]~.
\end{align}

Finally, $s$ is an integer which equals the expected ground state degeneracy, in our case $s=2$~\cite{m:dehghaniExtractionManybodyChern2021}.

Given $\langle \mathcal{T}(\phi)\rangle$, the MBCN is
\begin{equation}
    C = \frac{1}{2\pi i}\oint \frac{d \langle \mathcal{T}(\phi) \rangle}{\langle \mathcal{T}(\phi) \rangle}~.
    \label{eq:MBCN_formula}
\end{equation}
It is derived with arguments from topological quantum field theory (TQFT): the swap operator serves to engineer a manifold in space-time with non-contractible loops. The twist angle operator $W_1$ applies an artificial electric field along such a non-contractible loop, and the induced polarization $V_1 \otimes V_2$ is measured.

The relation \cref{eq:MBCN_formula} is, in principle, only applicable to systems with periodic or cylindrical boundary conditions. However, the numerical simulations in Ref.~\cite{m:dehghaniExtractionManybodyChern2021} reveal that \cref{eq:MBCN_formula} is also applicable to systems with open boundary conditions, which are more easily realized experimentally.

We also note two alternate formulae for estimating the MBCN, proposed in Refs.~\cite{m:dehghaniExtractionManybodyChern2021,m:cianManyBodyChernNumber2021}: Eqs.~(12) and (13) in Ref.~\cite{m:dehghaniExtractionManybodyChern2021}.
\begin{align}
    &\langle\mathcal{T}(\phi)\rangle = \label{eq:two_swap_chern}\\ 
    &\bra{\Psi}_{A,B}^{\otimes 2}  \big(W^\dagger_{R^A_1}(\phi) V^{s\dagger}_{R^B_2}\mathbb{S}_{R_1^A, R_1^B}\mathbb{S}_{R_3^A, R_3^B} W_{R^A_1}(\phi) V^{s}_{R^A_2}\big) \ket{\Psi}_{A,B}^{\otimes 2} \nonumber\\
    &\langle\mathcal{T}(\phi)\rangle = \label{eq:one_swap_chern}\\
    &\bra{\Psi}_{A,B}^{\otimes 2} \left(V^{s\dagger}_{R^A_1} W_{R^B_2}^\dagger(\phi) \mathbb{S}_{R_1^A, R_1^B}  W_{R^A_2}(\phi) V^{s}_{R^A_1}\right) \ket{\Psi}_{A,B}^{\otimes 2} \nonumber
\end{align}
Both formulae rely on SWAP operations between two identical copies of the wavefunction $\ket{\Psi}$ to engineer a non-trivial spacetime manifold.
In particular, Eq.~\eqref{eq:one_swap_chern} was utilized in Ref.~\cite{m:cianManyBodyChernNumber2021} to be estimated via a randomized measurement protocol: the SWAP on two copies can be done via classical post-processing of randomized measurements on single copies of $\ket{\Psi}$. Furthermore, random unitaries only have to be applied onto a small subsystem $R_1$ in order to estimate Eq.~\eqref{eq:one_swap_chern}---therefore this method can be scaled to large system sizes.

As stated in Ref.~\cite{m:dehghaniExtractionManybodyChern2021}, \cref{eq:two_swap_chern} is more stable than \cref{eq:one_swap_chern} to finite size effects. However, we find that neither Eqs.~\eqref{eq:two_swap_chern} and \eqref{eq:one_swap_chern} are numerically stable for the system sizes investigated.

Instead, we observe that the single-copy measurement \cref{eq:MBCN_single_copy} is a robust estimation formula---in particular, it is particularly robust for the Laughlin state near filling fraction $\nu = 1/2$. In \cref{fig:MBCN_stability}(a) we illustrate the currently-understood phase diagram~\cite{m:cooperFractionalQuantumHall2020} as a function of the filling fraction $\nu = N/N_\text{flux}$, where $N_\text{flux} = \alpha (L_x-1)(L_y-1)$ is the number of flux quanta threading the lattice. At various values of filling fraction, the ground state of the HBH model is believed to be described by states including the Laughlin state, Jain composite fermion states, Moore-Read Pfaffian state~\cite{m:palmBosonicPfaffianState2021}, Read-Rezayi states, and vortex lattice phases. At a critical filling $\nu \approx 2$, the ground state is well described by the Laughlin state---this is confirmed by conventional order parameters such as doublon density~\cite{m:palmBosonicPfaffianState2021}, which sharply drops at the transition into the Laughlin state [\cref{fig:MBCN_stability}(b,c,d)]. At this transition, the MBCN estimator robustly gives $\mathcal{C} = 1$ for many choices of subsystem size and total system size [\cref{fig:MBCN_stability}(e,f,g)]. In other regions of the phase diagram, the MBCN estimator is less robust and shows some sensitivity to the choice of subsystems $R_1,R_2$ and $R_3$.

\begin{figure*}
    \centering
    \includegraphics[width=0.9\textwidth]{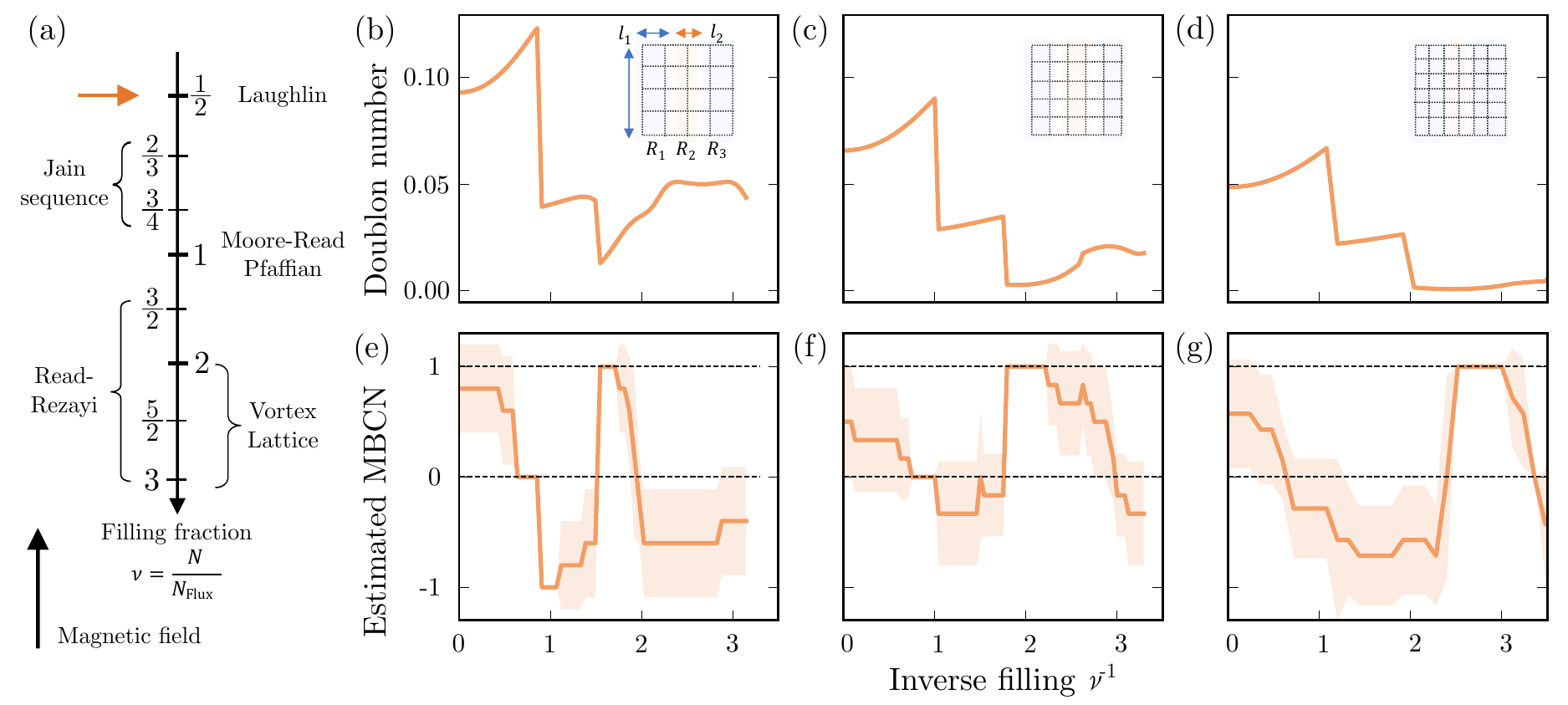}
    \caption{(a) Phase diagram of the HBH model conjectured in Ref.~\cite{m:cooperFractionalQuantumHall2020} as a function of filling fraction $\nu = N/N_\text{flux}$. We consider a Laughlin state: the ground state of the HBH model with three particles on 36 sites and a flux of $\alpha=0.25 \pi$ per plaquette, corresponding to a filling $\nu$ near 0.5. (b-g) Stability of MBCN estimate for the Laughlin state at $\nu = 1/2$. We plot the doublon number (top row) and estimated MBCN (bottom row) as a function of inverse filling $\nu^{-1} = \alpha (L_x-1)(L_y-1)/N$. In each column we indicate (b) $L_x=L_y=5$, (c) $L_x=L_y=6$, and (d) $L_x=L_y=7$, in each case with $N=3$. At a inverse filling $\nu^{-1}=2$, the doublon number abruptly drops, indicating the transition to the Laughlin state. This is accompanied by the estimated MBCN~\cref{eq:MBCN_single_copy}, with $s=2$ jumping to $\mathcal{C}=1$ (e-g). The shaded regions indicate the standard deviation of the estimated MBCN over different choices of subsystems $R_1,R_2$ and $R_3$ used to evaluate \cref{eq:MBCN_single_copy} indicated in the insets of the top row. We choose the parameter values $l_y \in \{L_y-1,L_y\}$, (b) $(l_1,l_2) \in \{(1,2),(1,3),(2,1)\}$, (c) $(l_1,l_2) \in \{(1,3),(1,4),(2,2)\}$, and (d) $(l_1,l_2) \in \{(2,2),(2,3),(3,1)\}$. We note that the Laughlin state is particularly numerically stable, with no variation over the parameters $l_y,l_1$, and $l_2$.}
    \label{fig:MBCN_stability}
\end{figure*}

\end{document}